\definecolor{light-gray}{gray}{0.96}
\definecolor{LightCyan}{rgb}{0.88,1,1}
\newtheorem{theorem}{Theorem}%  meant for continuous numbers
\newtheorem{lemma}{Lemma}
\newtheorem{definition}{Definition}
\apptocmd{\sloppy}{\hbadness 10000\relax}{}{}
\def\SOUL@hlpreamble{%
\setul{\dimexpr\dp\strutbox+1.5pt}{\dimexpr\ht\strutbox+\dp\strutbox+1.5pt\relax}
\let\SOUL@stcolor\SOUL@hlcolor
\SOUL@stpreamble
}
\newcommand{\whp}{with high probability\xspace}
\newcommand{\uar}{uniformly at random\xspace}
\newcommand{\defID}{\textsc{DefID}\xspace}
\newcommand{\AdvPower}{\kappa\xspace} 
\newcommand{\sma}{\alpha\xspace}
\newcommand{\smb}{\beta\xspace}
\newcommand{\lifetime}{\gamma\xspace}
\newcommand{\AOneL}{1/\sma\xspace}
\newcommand{\AOneH}{\sma\xspace}
\newcommand{\newAlg}{\textsc{Ergo}\xspace}
\newcommand{\SybilFuse}{\textsc{SybilFuse}\xspace}
\newcommand{\SybilExpose}{\textsc{SybilExpose}\xspace}
\newcommand{\POW}{PoW\xspace}
\newcommand{\AlgA}{\textsc{SybilControl\xspace}}
\newcommand{\AlgB}{\textsc{CCom}\xspace}
\newcommand{\AlgC}{REMP\xspace}
\newcommand{\polylog}{\mbox{polylog}\xspace}
\newcommand{\Iters}{\mathcal{I}}
\newcommand{\jAll}{J\xspace}  % good-ID join rate over all time
\newcommand{\jInterval}{\mathtt{J}\xspace} %------------------> notation for good join rate over interval (and iteration)
\newcommand{\JoinEst}{\tilde{\mathtt{J}}\xspace} %------------------> notation for estimated good join rate over interval
\newcommand{\jIntervalAll}{\mathtt{J_a}\xspace} %------------------> notation for good join rate over interval (and iteration)
\newcommand{\interEll}{\mathcal{L}}
\newcommand{\jIter}{\mathcal{J}}
\newcommand{\joinRate}{J\xspace}  % true good-ID join rate in iteration
\newcommand{\iJRate}{J\xspace} % true good-ID join rate in iteration
\newcommand{\advRate}{\ensuremath{J^B}\xspace}
\newcommand{\epochRate}{\rho\xspace}  % true good-ID join rate in epoch
\newcommand{\advAveCost}{T\xspace}
\newcommand{\algGM}{\textsc{GMCom}\xspace}
\newcommand{\genID}{\textsc{GenID}\xspace}
\newcommand{\iterIDs}{\mathcal{S}\xspace}
\newcommand{\floor}[1]{\left\lfloor #1 \right\rfloor}
\newcommand{\ceil}[1]{\left\lceil #1 \right\rceil}
\newcommand{\defn}[1]{\textbf{\emph{#1}}}
\newcommand{\goodJest}{\textsc{GoodJEst}\xspace}
\newcommand{\ergo}{\textsc{Ergo}\xspace}
\newcommand{\EstGoodJoin}{\textsc{GoodJEst}\xspace}
\title{Bankrupting Sybil Despite Churn}
\author[1]{Diksha Gupta}
\author[2]{Jared Saia}
\author[3]{Maxwell Young}
\affil[1]{IBM Research Singapore, 9 Changi Business Park Central 1, Singapore,486048, \texttt{gupta.diksha.1991@gmail.com}}
\affil[2]{Department of Computer Science, University of New Mexico, Albuquerque, NM, USA, 87131, \texttt{saia@cs.unm.edu}} 
\affil[3]{Department of Computer Science and Engineering, Mississippi State University, MS, USA, 39762, \texttt{myoung@cse.msstate.edu}}    
\date{}
\begin{document}
\maketitle

\begin{abstract}
%A Sybil attack occurs when an adversary controls multiple system identifiers (IDs).  Limiting the number of Sybil (bad) IDs to a minority is critical for tolerating malicious behavior. A popular tool for enforcing a bad minority is resource burning (RB): the verifiable consumption of a network resource. Unfortunately, typical RB defenses require non-Sybil (good) IDs to consume at least as many resources as the adversary. We present a new defense, \ergo, that guarantees (1) there is always a bad minority; and (2) during a significant attack, the good IDs consume asymptotically less resources than the bad.  Specifically, despite high churn, the good-ID RB rate is $O(\sqrt{TJ} + J)$, where $T$ is the adversary’s RB rate, and $J$ is the good-ID join rate.  We show this RB rate is asymptotically optimal for a large class of algorithms, and we empirically demonstrate the benefits of \ergo.
%\end{abstract}

A Sybil attack occurs when an adversary controls multiple identifiers (IDs) in a system.  Limiting the number of Sybil (bad) IDs to a minority is critical to the use of well-established tools for tolerating malicious behavior, such as  Byzantine agreement and secure multiparty computation.

A popular technique for enforcing a Sybil minority is resource burning: the verifiable consumption of a network resource, such as computational power, bandwidth, or memory.  Unfortunately, typical defenses based on resource burning require non-Sybil (good) IDs to consume at least as many resources as the adversary.  Additionally, they have a high resource burning cost, even when the system membership is relatively stable.

Here, we present a new Sybil defense, \ergo, that guarantees (1) there is always a minority of bad IDs; and (2) when the system is under significant attack, the good IDs consume asymptotically less resources than the bad.  In particular, for churn rate that can vary exponentially, the resource burning rate  for good IDs under \ergo is {$O(\sqrt{TJ} + J)$}, where $T$ is the resource burning rate of the adversary, and $J$ is the join rate of good IDs.  We show this resource burning rate is asymptotically optimal for a large class of algorithms. 

We empirically evaluate \ergo alongside prior Sybil defenses.  Additionally, we show that \ergo can be combined with machine learning techniques for classifying Sybil IDs, while preserving its theoretical guarantees.  Based on our experiments comparing \ergo with two previous Sybil defenses,  \ergo improves  on the amount of resource burning relative to the adversary by up to $2$ orders of magnitude without machine learning, and up to $3$ orders of magnitude using machine learning.
\end{abstract}

\clearpage 
%\begin{keyword}
%Distributed Computing\sep Algorithm Design and Analysis \sep Sybil Attack \sep Resource Burning \sep Security
%% keywords here, in the form: keyword \sep keyword

%% PACS codes here, in the form: \PACS code \sep code

%% MSC codes here, in the form: \MSC code \sep code
%% or \MSC[2008] code \sep code (2000 is the default)

%\end{keyword}

%\end{frontmatter}

%\linenumbers

\section{Introduction}\label{s:intro}

A \defn{Sybil attack} occurs when an adversary controls multiple identifiers (\defn{IDs}) in the system~\cite{douceur02sybil}.  In practice, these attacks may be motivated by financial gain~\cite{zhang2019double,heilman2015eclipse,bitcoin-sybil} and their mitigation has attracted significant attention from the research community~\cite{mohaisen:sybil}.

A classic defense is resource burning (RB), whereby IDs are periodically required to consume local resources in a verifiable manner~\cite{gupta:resource-burning}. A well-known example of resource burning is proof-of-work (\POW)~\cite{dwork:pricing}, but several other methods exist (see Section~\ref{sec:model-main}).

Unfortunately, current resource burning methods always consume resources, even when the system is not under attack. This non-stop consumption translates to substantial monetary costs~\cite{economistBC,arstechnica}.  

Prior work shows it is sometimes possible for  non-adversarial (\defn{good}) IDs to consume fewer total resources than the adversary~\cite{Gupta_Saia_Young_2019}.  Unfortunately, this work fails to hold when the rate at which system membership changes---often referred to as the \defn{churn rate}--- is high. Many systems vulnerable to Sybil attacks have significant churn~\cite{Stutzbach:2006:UCP:1177080.1177105,falkner:profiling,6688697}.

\subsection{Bankrupting Sybil Despite Churn}\label{s:bankrupting-sybil}

Given this state of affairs, a natural question is: {\it Despite churn, can we design a Sybil defense where the cost to good IDs scales slowly with the cost to the adversary?} 

A key contribution of our work is answering this question in the affirmative. Informally,  we design and analyze a new defense where the rate of RB performed in total by the good IDs:
\begin{enumerate}[leftmargin=16pt]
\item depends only on the churn rate of good IDs, when there is no attack; \vspace{-5pt}
\item grows {\it asymptotically} slower than that of the adversary during an attack.  \vspace{-3pt}
\end{enumerate}
In Section~\ref{sec:contributions}, we provide a formal statement of these guarantees, but their implications are worth highlighting here. Regarding (1), in the absence of an attack, the overhead from our defense is low. Specifically, each good ID incurs only a constant amount of RB to join the system and then a constant amount each time the system size increases or decreases by a constant factor (details are given in Section~\ref{sec:tog}). Therefore, the total amount of RB performed by good IDs grows slowly, despite large changes in the system size.

Regarding (2), given that RB ultimately translates into a monetary cost, our defense places the adversary at an economic disadvantage. A Sybil attack is now provably more expensive given our defense, which is a deterrent against malicious behavior.

\subsection{A Roadmap}

The remainder of this manuscript is structured as follows. In Section~\ref{sec:model-main}, we introduce our model and formal problem. With these preliminaries in place, we give a formal statement of our main results in Section~\ref{sec:contributions}, along with highlighting the novelty of our approach, along with new material not present in preliminary conference versions of our work. To provide additional context for our results, we discuss prior related work in Section~\ref{sec:related-work}. 

The specifics of our defense are given in Section~\ref{sec:tog} and \ref{s:goodJest}, along with a discussion providing intuition for our design decisions. An upper-bound analysis is presented in Section \ref{s:analysis}, and followed up by experimental results in Section \ref{sec:experiments} that illuminate the performance of our defense. In Section~\ref{sec:lower}, we provide a lower bound to show that our result is asymptotically optimal from a large class of natural defenses. 

A discussion of how to fully decentralize our defense is given in Section~\ref{s:committee}. Finally, in Section~\ref{sec:future}, we conclude with several promising avenues for future research.

%%%%%%%%%%%%%%%%%%%%%%%%%%%%
%%%%%%%%%%%%%%%%%%%%%%%%%%%%
%%%%%%%%%%%%%%%%%%%%%%%%%%%%

\section{Model and Problem}\label{sec:model-main}

We now describe a general network model.  The system consists of virtual \defn{identifiers (IDs)}, where each ID is either \defn{good} if it obeys the protocol, or \defn{bad} if it is controlled by the Sybil adversary (or just \defn{adversary}). The amount of resource burning that each good ID can perform is identical; that is, the population of good IDs is homogeneous.\smallskip

\noindent{\bf Resource-Burning Challenges.}\label{sec:puzzle} IDs can construct \defn{resource-burning (RB) challenges}  of varying hardness, whose solutions cannot be stolen or pre-computed; some examples are discussed in Section~\ref{sec:related-work}. To specify hardness, a  {\boldmath{$k$}}\defn{-hard RB challenge} for any integer $k \geq 1$ imposes a resource cost of $k$ on the challenge solver. Our results are agnostic to the type of challenges employed, either those discussed above or new resource-burning schemes available for future use. \smallskip

\noindent{\bf Coordination.} To simplify our presentation, we assume that there is a single server running our algorithms.  The server---or the committee in our decentralized setting---learns when an ID joins or departs the system. Later, in Section~\ref{s:committee}, we show how the server can be replaced with a small committee, thus allowing our algorithms to execute in decentralized settings.

A \defn{round} is the amount of time it takes to solve a $1$-hard challenge plus the time for communication between the server and corresponding ID for issuing the challenge and returning a solution.  As is common in the literature, we assume that good IDs have clocks that are closely synchronized. Intuitively, if there is significant message delay or clock drift, then a challenger cannot accurately measure the response time for the ID solving the challenge; see~\cite{synchrony:malkhi} for further discussion. Techniques for synchronizing on the order of milliseconds are known and suffice for our purposes~\cite{392384}.  We assume that no more than an $\epsilon$-fraction of the good IDs depart in any round, for some small positive constant $\epsilon < 1/12$.

\smallskip

\noindent{\bf Adversary.}\label{sec:adv} A single adversary controls all bad IDs.  This pessimistically represents perfect collusion and coordination by the bad IDs. Bad IDs may arbitrarily deviate from our protocol, including sending incorrect or spurious messages. The adversary can send messages to any ID at will, can read the messages sent by good IDs before sending its own.  The adversary sets the timing of join and departure events by good and bad IDs, and makes these timing choices adaptively over time. Furthermore, the adversary can select which bad ID departs. However,  the adversary cannot select which specific good ID departs, but rather the departing ID is selected uniformly at random (u.a.r.) from the good IDs currently in the system.

By being able to scheduling the departure times of good IDs, we capture a strong adversary. In terms of motivation, one may imagine an attacker who has been monitoring the system for a long period of time and, therefore, possesses statistics on when good IDs join and depart over different times of the day. Such an attacker might not know {\it exactly} when good IDs are going to join/depart, but it could predict this behavior to a degree, and our pessimistic model accounts for this. We also note that considering an adversary without this power does not appear to make it easier to obtain theoretical guarantees on security and performance.

The adversary is resource-bounded: in any single round where all IDs are solving challenges, the adversary can solve a {\boldmath{$\AdvPower$}}-fraction of the challenges; this assumption is common~\cite{nakamoto:bitcoin,andrychowicz2015pow,GiladHMVZ17}. \smallskip

\subsection{ABC Model of Churn}

We now describe our model of churn, which we call the \defn{ABC model} of churn, based on two parameters $\alpha$ (A) and $\beta$ (B) of churn (C).  This churn model is particular relevant for defending against Sybil attacks (see Section~\ref{s:demand} for further discussion).

\subsubsection{Joins and Departures}\label{sec:join}

For simplicity, in our model, every join and departure event is assumed to occur at a unique point in time. In practice, because of the granularity of clocks, it may seem as if events are occurring at the same time, and this can be handled by having the server or committee order these events; all of our results hold even if this ordering is dictated by the adversary.

Whenever the adversary decides to cause a good ID departure event,  the adversary does not get to select which good ID departs. Rather, the departing good ID is selected independently and uniformly at random from the set of good IDs in the system.  Departing good IDs announce their departure. 

In practice, each good ID can issue “heartbeat messages” to the server that indicate they are still alive. These messages are sent periodically, and their absence is interpreted as a departure by the corresponding ID. We note that a bad ID that fails to issue heartbeat messages will be treated by the server as having departed from the system.  Thus, even departures by bad IDs are detectable.

Every joining ID is treated as a new ID.  We  ensure every joining ID is given a unique name by concatenating a join-event counter to the name chosen by the ID. 
We assume that every joining ID initially knows one good ID in the system.  This follows if every joining ID knows a set of IDs with a good majority, since the majority of good IDs can suggest a single good one; further it is needed to avoid eclipse attacks~\cite{urdaneta:survey,heilman2015eclipse}.  Thus it is a standard assumption in peer-to-peer networks~\cite{stoica_etal:chord,guerraoui:highly,JaiyeolaPSYZ17,fiat:making,awerbuch_scheideler:group,castro:secure,young:towards,awerbuch:towards2,awerbuch:random,awerbuch:towards,scheideler:how}; cryptocurrency systems like Ethereum, which uses the structured Kademlia peer-to-peer network~\cite{wang2021ethna, maymounkov:kademlia}, and Bitcoin, which uses an ad-hoc peer-to-peer network~\cite{nakamoto:bitcoin,heilman2015eclipse}; and for decentralized blockchains that use peer-to-peer overlays~\cite{CromanDEGJKMSSS16}.

We define {\boldmath{$n_0$}} to be the minimum number of good IDs in the system at any time. The \defn{system lifetime} is defined over $n_0^{\lifetime}$ joins and departures, for any fixed constant {\boldmath{$\lifetime$}} $>0$.  Our results for \goodJest and the costs for \ergo hold over this system lifetime  (see Section~\ref{sec:contributions}).  A system designer may choose a value of $\gamma$ in order to have the guarantees provided by our results.  

To illustrate, imagine a system where there are always at least $6000$ good IDs participating. If a system designer wishes the guarantees of \ergo to hold over $100$ billion join and departure events, then $\gamma=3$ will suffice. For dynamic systems, such guarantees are common (for examples, see~\cite{awerbuch:towards,awerbuch:random,guerraoui:highly}).

%%%%%%%%%%%%%%%%%%%%%%%%%%%%%%%%%%%%%%%%%%%
%%%%%%%%%%%%%%%%%%%%%%%%%%%%%%%%%%%%%%%%%%%
%%%%%%%%%%%%%%%%%%%%%%%%%%%%%%%%%%%%%%%%%%%

\subsubsection{Epochs,  Smoothness, and Churn} \label{s:churn}

\medskip
Here, we give the details for our model of churn. Later, in Section~\ref{s:a-b-churn}, we argue that our model of  {\it good} churn is quite general, and we provide intuition for why. We also  highlight that we make no assumptions about the bad churn.

Let $A \triangle B$  denote the symmetric difference between any two sets $A$ and $B$, i.e. $A \triangle B = (A-B) \cup (B-A)$. Time is partitioned into \defn{epochs} whose boundaries occur when the symmetric difference between the sets of good IDs at the start and the end of the epoch exceeds  $1/2$ times the number of good IDs at the start. Epochs are important to our model and analysis; however, our approach does {\it not} assume knowledge of when epochs begin or end.

Good churn is specified by two {\it a priori unknown} parameters:  {\boldmath{$\alpha, \beta$}}.  First, the good join rate between two consecutive epochs differs by at most an $\alpha$-factor.  Second, the number of good IDs that join or depart during $\ell$ consecutive seconds within an epoch differs by at most a $\beta$-factor from $\ell$ times the good join rate of the epoch.  Thus, $\alpha$ characterizes how the good join rate changes over epochs; and $\beta$ characterizes the burstiness of good ID arrivals and departures within an epoch. %\medskip  

%\smallskip
\begin{comment}
\begin{definition}\label{def:epoch}
For all $i\geq 1$, epoch $i$ begins at time $t=0$ if $i=1$, or at time $t$ when epoch $i-1$ ends otherwise.  Epoch $i$ ends at the smallest $t' > t$ such that $\vert G(t') \triangle G(t)\vert  \geq (3/4)\vert G(t)\vert $.
\end{definition}
\end{comment}

\smallskip 

Let {\boldmath{$\epochRate_i$}} be the join rate of good IDs (i.e., \defn{good join rate}) in epoch $i$; that is, the number of good IDs that join in epoch $i$ divided by the number of seconds in epoch $i$. We now formally define the notion of ``smoothness'' for $\alpha$ and $\beta$, which allows our model to capture a churn rate that varies widely over time.

\begin{definition}\label{def:smoothness}
For any $\sma\geq 1$ and $\smb \geq 1$, and any epoch $i>1$, we define the following.
\begin{itemize}[leftmargin=10pt]
\item{\boldmath{$\sma$}\defn{-smoothness}:} $(\AOneL) \epochRate_{i-1}  \leq \epochRate_{i} \leq \AOneH\, \epochRate_{i-1}$.

\item {\boldmath{$\smb$}\defn{-smoothness}:}  For any duration of $\ell$ seconds in the epoch, the number of good IDs that join is at least $ \lfloor \ell \epochRate_{i} / \beta\rfloor $ and at most $\lceil\beta \ell \epochRate_{i}\rceil$. Also, the number of good IDs that depart during this duration is at most $\lceil\beta \ell \epochRate_{i}\rceil$. \smallskip
\end{itemize} 
\end{definition}

Our guarantees require that $n_0\geq \max\{6000, (720(\gamma+1))^{4/3}, (41\smb)^2 \}$. This lower bound on $n_0$ arises in our analysis and we discuss this in Section~\ref{sec:parameter-constant-discussion}.

\medskip\smallskip

\noindent{\bf Varying Churn Rate.} In Section~\ref{s:a-b-churn}, we motivate our new churn model, and we provide intuition with examples aided by illustrations. Here, we succinctly highlight the roles of $\sma$ and $\smb$.

The parameter $\sma$ can capture {\it any} possible change in the good join rate between consecutive epochs, since there always exists an $\sma$ that satisfies the definition. Thus, the good join rate may change rapidly. For example, say $\sma=2$.  In this case, the good join rate may increase {\it exponentially} from epoch to epoch. Similarly, the good join rate may also decrease exponentially. The parameter $\beta$ ensures there can be possibly large deviations within an epoch from the average good join rate over the entire epoch. 

Many other churn models have been described in the research literature. In Section~\ref{s:churn-models}, we compare our model of churn against several popular prior models.

%%%%%%%%%%%%%%%%%%%%%%%%%%%%%%%%%%%%%

\subsection{The \defID Problem}  \label{s:defID}
In the well-studied \genID problem~\cite{AspnesJK2005,katz2014pseudonymous,andrychowicz2015pow,hou2017randomized,aggarwal2019bootstrapping}, there is some initial set of good and bad IDs in a permissionless system.   The problem \genID requires the IDs to all \emph{generate} an initial set of IDs with a bounded fraction of bad.  In particular, all good IDs must decide on a set of $S$ such that: all good IDs are in $S$; and a $O(\kappa)$-fraction of the IDs in $S$ are bad.

We introduce the \defn{\defID (DEFend ID)} problem, which generalizes \genID to handle churn.  Specifically, ID join and departure events follow our ABC model of churn.  Our goal is to ensure that, at any time $t$, all good IDs know a set $S(t)$ such that (1) all good IDs are in $S(t)$; and (2) a $O(\kappa)$-fraction of IDs in $S(t)$ are bad.

\subsubsection{Why is \defID harder than \genID?}\label{s:defID-harder}

\defID is strictly harder than \genID. Now, the fraction of bad IDs increases whenever a bad ID joins or a good ID departs. Since bad and good IDs cannot be differentiated {\it a priori}, the $O(\kappa)$ bound on bad IDs may be violated after some amount of churn. A naive approach to solve \defID would be to run a solution to \genID after every join and departure event.  But this is expensive.  Specifically, it requires a resource-burning cost that is linear in the current system size for \emph{every} ID join and departure event, using the best known bound for \genID~\cite{aggarwal2019bootstrapping}.  Thus, it is necessary to define this new \defID problem in order to be able to design an efficient Sybil defense in the presence of churn.

\section{Our Results}\label{sec:contributions}
\smallskip
 
Recall that $\kappa$ is the fraction of the resource that is controlled by the adversary.  Let the \defn{good spend rate} be the total resource burning cost for all good IDs per second. Similarly, let {\boldmath{$T$}} be the \defn{adversary's spend rate} and let {\boldmath{$J$}} be the \defn{join rate of good IDs} over the system lifetime.  

The first theorem solves the \defID problem using our algorithm \textbf{\ergo}, whose specification is given in Section~\ref{sec:tog}.

\smallskip

\begin{theorem}\label{thm:new-main-upper}
For $\AdvPower \leq 1/18$,  \ergo ensures that the fraction of bad IDs in the system is always less than $3 \kappa \leq 1/6$.  Additionally, with probability of error that is $o(1/n_0)$ over  the system lifetime, the good spend rate is $$O\left(\sma^{11/2}\smb^7 \sqrt{\advAveCost(\jAll + 1)}  + \sma^{11}\smb^{14}\jAll \right).$$
\end{theorem}
 
\smallskip 

A strict upper bound of $1/6$ enables solutions to problems such as Byzantine agreement and secure multiparty computation~\cite{lynch1996distributed, benor_goldwasser_wigderson:completeness}. 

To complement our upper bound, we provide a lower bound in Theorem~\ref{t:lower-bound} that shows that this result is asymptotically optimal for a large class of algorithms. 

To complement the upper bound in Theorem~\ref{thm:new-main-upper}, we show in Section~\ref{sec:lower} that when $\alpha$ and $\beta$ are $\Theta(1)$, \ergo's resource burning rate is asymptotically optimal over a large class of algorithms. We also show in Section~\ref{s:committee} how to remove reliance on a single server so that \ergo can be executed in a decentralized fashion, while providing similar guarantees (see Theorem~\ref{thm:decent}).
\medskip 

\ergo makes use of a second algorithm, {\textbf{\goodJest}}, that may be of independent interest.  \goodJest correctly estimates the good join rate, provided that the fraction of bad IDs is always less than $1/6$. {\bf Since \ergo always ensures that the fraction of bad IDs is less than {\boldmath{$1/6$}}}, it is able to make use of \goodJest.  As addressed in Sections~\ref{sec:entrance-cost} and~\ref{s:analErgo}, \ergo uses \goodJest to achieve the good spend rate of Theorem~\ref{thm:new-main-upper}.  The main properties of \goodJest are as follows. 

\smallskip

\begin{theorem} \label{t:JoinEst}
Assume the fraction of bad IDs is always less than $1/6$. Then with probability of error that is $o(1/n_0)$ over the system lifetime the following holds.
Fix any epoch. Let $\epochRate$ be the good join rate during that epoch.  Then, if $\JoinEst$ is the estimate from \goodJest at any time during that epoch: 
    $$ 1/(88 \sma^4\smb^3) \epochRate \leq \JoinEst \leq 1867\sma^4\smb^5 \epochRate.$$
\end{theorem}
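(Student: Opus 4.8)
The plan is to prove the two inequalities of Theorem~\ref{t:JoinEst} separately. In both cases I would relate the quantity \goodJest actually measures---essentially the count of IDs admitted to the system over one or more measurement windows---to the true good join rate \epochRate of the fixed epoch, and then track how much multiplicative slack accumulates as a window straddles epoch boundaries and absorbs bursts of arrivals.

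For the lower bound, I would argue that \goodJest's count can never be far below the number of \emph{good} IDs that actually joined during the window(s) it measures, since every good join is admitted and hence counted. Within a single epoch of rate $\epochRate'$, the $\smb$-burstiness assumption forces at least $\epochRate' \ell / \smb$ good joins over any $\ell$-second sub-window; and the $\sma$-stability assumption makes $\epochRate'$ differ from \epochRate by at most a factor \sma per epoch boundary crossed. Since the window feeding \goodJest's current output may have begun a constant number of epochs earlier (a carried-forward estimate, or a window rescaled to the current epoch length), each such step contributes one factor of \sma and, where burstiness must be re-invoked, one factor of \smb; composing these yields the $\sma^4\smb^3$ in the denominator, with the explicit $418$ absorbing the constant-fraction definition of an epoch and the roundings in the window lengths.

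For the upper bound the count additionally includes \emph{bad} admissions, so those must be bounded. The good joins in the measured window number at most $\epochRate \ell \smb$ within an epoch, times one factor of \sma per boundary crossed, by the same two assumptions. For the bad admissions I would use that each requires solving an entrance puzzle whose cost \ergo sets as an increasing function of the \emph{current} estimate, together with the adversary's bounded spend rate; this caps the bad-admission rate in terms of the estimate itself. The resulting inequality is circular---the estimate both bounds and is bounded by a quantity depending on the bad-admission count---so I would close it by induction over consecutive measurement windows: assuming the two-sided bound holds at the start of a window, show that the puzzle price it induces keeps bad admissions a bounded multiple of good admissions, so that the updated estimate again lies in the window. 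The extra powers of \smb relative to the lower bound arise because burstiness is invoked both to lower-bound the good joins that anchor the puzzle price and to upper-bound the good joins in the count.

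The main obstacle I anticipate is precisely this circularity together with the constant bookkeeping: one must pick the induction hypothesis---the multiplicative window $[\,\epochRate/(418\sma^4\smb^3),\ 267\sma^4\smb^5\epochRate\,]$---simultaneously wide enough that it is restored after one update even when an epoch boundary falls inside the measured window under worst-case burstiness, and narrow enough that the induced puzzle price prevents bad admissions from dominating the count. Checking that this slack does not compound from window to window---so the exponents stay at $4$, $3$, $5$ rather than growing without bound over the polynomially many joins and departures---is the delicate step; once that invariant is in place, the two burstiness/stability estimates apply essentially mechanically.
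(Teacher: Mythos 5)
Your approach for the upper bound introduces a circularity that the paper's argument deliberately avoids, and this is a genuine gap rather than a presentational difference. You propose bounding bad admissions via the entrance-puzzle price that \ergo sets from the current estimate together with the adversary's spend rate, and then closing the resulting loop by induction over measurement windows. But look again at the theorem: its hypothesis is precisely that the fraction of bad IDs is \emph{always} less than $1/6$. \goodJest is designed and analyzed as a standalone estimator under that assumption; its guarantee must not depend on \ergo's pricing mechanism (indeed the theorem statement never mentions \ergo at all). The paper exploits this directly: the update trigger is $|S(t')\triangle S(t)|\geq\tfrac{5}{8}|S(t')|$, and since bad IDs are at most a $1/6$ fraction of every snapshot, the adversarial contribution to the symmetric difference is bounded by $|S(t')|/6+|S(t)|/6$, leaving a constant fraction of $|S(t')|$ that must be accounted for by good churn. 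From there the chain is purely combinatorial plus a single probabilistic step: an interval spans at most two epochs (Lemma~\ref{lem:interval-epochs}), good joins within the interval are sandwiched between $\Theta(|S(t)|)$ and $\Theta(|S(t')|)$ (Lemmas~\ref{lem:a_uppbound} and \ref{lem:a-lowerbound}), and the final exponents on $\sma,\smb$ fall out of composing Lemma~\ref{lem:boundJoinEst}, the ``previous-interval'' shift in Lemma~\ref{lem:prev_current}, and the interval-to-epoch translation in Lemma~\ref{lem:epoch-interval-lim}.

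Two secondary issues. First, \goodJest does not count admitted IDs over a window; it sets $\JoinEst\leftarrow|S(t')|/(t'-t)$, i.e., \emph{current system size} divided by interval length, with the interval length itself determined endogenously by the symmetric-difference trigger. Your phrasing of ``the count of IDs admitted'' misstates the mechanism and would change which quantities need to be bounded. Second, the upper bound on good joins per interval (Lemma~\ref{lem:a_uppbound}) is not deterministic: it relies on a stochastic-dominance argument and a Chernoff bound over the uniformly random selection of departing good IDs, and the theorem therefore holds only with high probability in $n_0$. Your outline treats both bounds as deterministic consequences of the smoothness parameters, which would not establish the stated result.
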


This theorem holds no matter how the adversary injects bad IDs. Based on our experiments on multiple networks, \goodJest always provides an estimate within a factor of $10$ of the true good join rate, and often much closer (cf. Section~\ref{sec:evalEst}).

We validate our theoretical results by  comparing \ergo against prior RB-based defenses using real-world data from several networks  (Section~\ref{section:empasym}). In terms of the amount of RB performed relative to the adversary, we find that \ergo performs up to $2$ orders of magnitude better than previous defenses, according to our simulations. Using insights from these first experiments, we engineer and evaluate several heuristics aimed at further improving the performance of \ergo.  Our best heuristic has \ergo leveraging a prior machine-learning method~\cite{gao2018sybilfuse}, and we observe an improvement by up to 3 orders of magnitude in comparison to previous algorithms for large-scale attacks. 

\subsection{A Note on Incentives}\label{sec:incentives}
In this paper, we do not explore the challenging problem of how to \emph{incentivize} good IDs to solve challenges.  This is an issue that concerns  resource-burning techniques in general (see Section~\ref{sec:related-work} for other examples of resource-burning), and there are a variety of ways in which good IDs might be motivated to solve puzzles.  Even though it is beyond the scope of this paper, we do {\it sketch} some ideas of how incentivization could happen, based on techniques used in cryptocurrency systems. For clarity of presentation, we defer this discussion to Section~\ref{sec:future}.
%%%%%%%%%%%%%%%%%%%%%%%%%%%%%%%%%%%%%%%%%%%%%%%%%%%%
%%%%%%%%%%%%%%%%%%%%%%%%%%%%%%%%%%%%%%%%%%%%%%%%%%%%
%%%%%%%%%%%%%%%%%%%%%%%%%%%%%%%%%%%%%%%%%%%%%%%%%%%%

\subsection{New Contributions}\label{sec:new}

This manuscript contains results previously published in the proceedings of the {\it 41st IEEE International Conference on Distributed Computing Systems} (ICDCS'21)  \cite{Gupta_Saia_Young_2021} and in the proceedings of the {\it 33rd IEEE International Parallel and Distributed Processing Symposium} (IPDPS'19)~\cite{Gupta_Saia_Young_2019}. In particular, the majority of our results presented here appeared recently at ICDCS'21, while our lower-bound result appeared at IPDPS'19.

This manuscript has been substantially revised and expanded.  Specifically, we highlight the following new material:

\begin{itemize}[leftmargin=17pt]

\item Full proofs of all our results, including proofs of several lemmas that were not included in the conference proceedings due to space constraints (Section~\ref{s:analysis}).

\item Exposition giving intuition for the guarantees provided by \goodJest (Section~\ref{s:goodJestIntuition} and Figure~\ref{fig:gJest}).

\item A complete presentation of our approach for decentralizing \ergo. We have  included full proofs and added additional exposition (Section~\ref{s:committee}).

\item Additional empirical results for our main algorithm, \ergo, that defends against the Sybil attack. We design and evaluate several heuristics aimed at further lowering the cost of \ergo, while still retaining its guarantees (Section~\ref{section:heuristics} and Figure~\ref{fig:Heuristic}). 

\item Additional empirical results for the algorithm, \goodJest. We examine the performance of \goodJest over new data sets for the Ethereum and Gnutella systems (Section~\ref{sec:evalEst} and Figure~\ref{fig:PI}). 

\item Additional discussion of several pertinent issues related to our results:  incentives for good IDs to solve puzzles (Section~\ref{sec:incentives}), additional discussion of classification approaches for defending against the Sybil attack (Section~\ref{sec:related-work}), and future directions of research (Section~\ref{sec:future}).

\end{itemize}

\subsection{Technical Contributions and Novelty of Approach}\label{sec:novelty}

\medskip

\noindent The technical contributions of our results include:

\begin{itemize}[leftmargin=15pt]
\item \textbf{Formal Problem Definition.}  To the best of our knowledge, this paper is the first to formally define the Sybil defense problem with churn.  Our model of churn (Section~\ref{s:churn}) is quite general, simple to describe and mathematically tractable.  Our \defID problem (Section~\ref{s:defID}) formally defines what is needed to prevent Sybil attacks in the presence of churn.

\item \textbf{Spending Less than the Attacker.}   Before the results in this paper, it was unknown whether a Sybil defense algorithm could spend asymptotically less than an attacker.  In this paper, we describe and analyze the first Sybil defense algorithm that achieves this result.  The fact that such a result is even possible is surprising, and perhaps a useful contribution to the general science of distributed security.
\end{itemize}

\noindent
Achieving these results requires solving many technical issues.  Below, we summarize some of the novel techniques we have developed in order to address these problems.

We note in the following that, initially, we assume a single server maintains the system membership information (recall Section~\ref{sec:model-main}); in Section~\ref{s:committee} we describe how a committee can maintain system membership information.

\begin{itemize}[leftmargin=15pt]
    \item \textbf{Estimate and Set.} We introduce a new, general approach for the design of algorithms to secure a system in the presence of churn.  Our technique consists of two parts.  First, \emph{estimating good activity}: here, estimating the rate at which good IDs join the system.  Second, \emph{setting costs using this estimate}: here, costs to join the system are set carefully based on this estimate.
    \item \textbf{Estimating the Good Join Rate.}  A major technical challenge is estimating the good join rate (Sections~\ref{s:goodJest} and~\ref{s:analGoodJest}).  How can we achieve this estimate when we do not know, a priori, whether joining IDs are good or bad?  Our solution is to estimate the ``flow" of new, good IDs into the network during a certain time period, by bounding the flow of bad IDs in; and the flow of new, good IDs out.  Technical tools we use to achieve these bounds are: (1) defining the time period for measurement using symmetric difference of membership sets, in order to achieve both upper and lower bounds on the flows; and (2) deriving careful stochastic, concentration results to bound the flow of good IDs out, with high probability.  Section~\ref{s:goodJestIntuition} gives more intuition on our approach, which we believe again may be of independent interest.
    \item \textbf{Setting Costs based on Good Join Rate.}  We use our estimate of the good join rate to set the costs to enter the system.  Whenever the system membership changes by a constant fraction, we also charge all IDs a cost of $1$, in order to purge the system of a possible excess of bad IDs.  Setting the entrance cost too high means that good IDs pay too much when there is no attack.   Setting it too low means that bad IDs can cheaply join and increase costs in the future due to more frequent purges.  We can resolve this tension mathematically by setting the entrance costs to the ratio of the join rate of all IDs over the (estimated) join rate of good IDs.  Section~\ref{sec:entrance-cost} gives more intuition about our approach for setting costs, and Section~\ref{s:analErgo} gives our formal analysis, which makes use of bounds on our estimate of the good join rate, and also makes critical use of the Cauchy-Schwartz inequality. 
\end{itemize}
%%%%%%%%%%%%%%%%%%%%%%%%%%%%%%%%%%%%%
%%%%%%%%%%%%%%%%%%%%%%%%%%%%%%%%%%%%%
%%%%%%%%%%%%%%%%%%%%%%%%%%%%%%%%%%%%%

\section{Churn Models}\label{s:churn-models}

As we highlighted in above in Section~\ref{sec:novelty}, our churn model is an important contribution. Here, we summarize other popular churn models in the literature (Sections \ref{s:DNC} and \ref{s:other-pop}), and then we motivate and discuss our new model of churn (Section~\ref{s:a-b-churn}).

\subsection{Dynamic Network Model with Churn (DNC)}\label{s:DNC}

The \emph{Dynamic Network Model with Churn} (DNC) model was first proposed by Augustine, Pandurangan, Robinson, and Upfal in 2012~\cite{augustine2012towards}.  Then, subsequent papers described algorithms in this model to solve Byzantine agreement~\cite{augustine2012towards, augustine2015distributed}; leader election~\cite{augustine2015fast}; maintain a Distributed Hash Table (DHT)~\cite{jacobs2013stochastic}; maintain a bounded-degree expander topology~\cite{augustine2015enabling}  and solve storage and retrieval problems~\cite{augustine2013storage}.  The survey paper~\cite{augustine2016distributed} gives an overview of the DNC model and results.  

For completeness, we now describe aspects of the DNC model that are relevant to this paper.  DNC assumes that the network size is fixed.  An adversary is assumed to control what nodes join and leave and at what time, and the adversary has complete knowledge of the algorithm and unlimited computational power.  The maximum node churn rate is parameterized: in any round, up to $C(n)$ nodes can be replaced by new nodes.  Typically, algorithmic results in this model can handle $C(n) = \Theta(n)$ for an adversary that is oblivious: unaware of the past random choices of the algorithm; and $C(n) = \Theta(\sqrt{n})$ for an adversary that knows the random choices of the algorithm.  Finally, we note that the DNC model also considers edge churn in the network topology; we omit discussion of this aspect of the model since it is not relevant to this paper.

Byzantine IDs occur in the DNC model; typically with the assumption that for some positive $\epsilon$, there are $O(n^{1/2 - \epsilon})$ Byzantine IDs in every round.  The adversary controlling the Byzantine IDs also controls churn.  This is similar to the adversary in our own ABC model.

Results in the DNC generally hold for all but a $1/\polylog(n)$ fraction of the good IDs.  For example, in the leader election
results, all but a $1/\polylog(n)$ fraction of the IDs agree on the correct leader~\cite{augustine2015fast}; in 
Byzantine agreement~\cite{augustine2015fast, augustine2012towards, augustine2015distributed} all but a $1/\polylog(n)$ fraction of the IDs agree on a correct output bit.  These types of results are necessary given that the DNC model (1) assumes connectivity in sparse networks with edge churn; and (2) allows targeted deletions of good IDs by the adversary. 

\subsubsection{Differences between the DNC and ABC model}\label{s:whyABC}
We now discuss differences between the DNC model and the ABC model.  First, while the ABC model allows the adversary to schedule when good ID deletions will occur, it does not allow the adversary to target specific good ID.  In particular, in the ABC model, when a good ID deletion event occurs, the good ID that is deleted is selected \uar from the set of all good IDs.  

Byzantine IDs are another key difference.  In the DNC model, in every round, by assumption, for some positive $\epsilon$, there are at most $O(n^{1/2-\epsilon})$ Byzantine IDs in the system.  In contrast, in the ABC model, there is no such assumption.  Instead, the ABC model assumes that the adversary controls a constant fraction of the RB resource, and must use this fact to try to constrain the fraction of Byzantine IDs.

\subsection{Churn Models without Byzantine IDs}~\label{s:other-pop}
Several other models have been proposed for churn which, unlike the DNC and ABC models, do not consider Byzantine IDs.  We now discuss these other results.

First, work by Ko, Hoque and Indranil, from 2008, describes two churn models, \emph{TRAIN} and \emph{CROWD}~\cite{ko2008using}. In both models, the system size is fixed, and each ID join event occurs in parallel with an ID deletion event.  Additionally, in both models, the churn rate, or number of processes joining per unit time, is assumed to be fixed over the entire lifetime of the system.  In the TRAIN model, there is some fixed, positive integer $K$ and the join and leave events can only occur at times that are integer multiples of $K$.  In the more challenging CROWD model, join and leave events can happen at any time. In contrast, the ABC model allows for system size increase and decrease over time, and these changes are not restricted to specific multiples of events.

We note that the DNC model (recall Section~\ref{s:DNC}) is more general than both the TRAIN and CROWD models since it allows up to a certain number of join and deletion events to occur in a single round, but does not require that exactly a $K$ number of events occur.  Thus, it does not require that the churn rate stay fixed throughout the lifetime of the system, only that there is some upper bound on the churn rate.
 
Second, in 2004, Aguilera proposed several churn models allowing for infinitely many IDs~\cite{aguilera2004pleasant}.  Four key models are proposed in this paper, based on: whether or not (1) the system has a finite number of IDs; and (2) whether or not each run has a finite number of IDs.  For example, if the system has infinite IDs, then for every integer $N$, there are runs with more than $N$ IDs seen throughout the run.  IDs are assumed to communicate via shared memory, and the paper describes how to use shared memory to implement counters, atomic snapshots, group membership, and mutual exclusion in several of their proposed churn models. These aspects of the model differ substantially from the ABC model because of the problems for which they are designed.   In particular, join and departure event timing is not parameterized as it is in the ABC model, and so resource costs for their results are not given as a function of model parameters such as $\alpha$ and $\beta$.

Third, in 2002, Liben-Nowell, Balakrishnan and Karger~\cite{liben2002analysis} defined the notion of \emph{half-life} as follows. Consider a system with $N$ IDs at time $t$. The time elapsed until another $N$ IDs join is the doubling time from time $t$. The time until $N/2$ IDs that are present in the system at time $t$ depart is the halving time from time $t$. The minimum of the doubling time and the halving time is the half-life from time $t$, and the {\it half-life of the system} is the minimum half-life over all $t$.  

The half-life as defined in this way is closely related to our notion of an epoch, but there are technical differences, which we now describe.  There is always at least one epoch in every half-life.  To see this, first note that after $N/2$ additions or $N/2$ deletions, the symmetric difference has changed by at least $N/2$, which satisfies the criteria to end an epoch.  However, there may be multiple epochs in one half-life, since IDs added over one epoch may be deleted in subsequent epochs, thereby avoiding the criteria needed for the end of a half-life.  

Fourth,  the distribution of ID \defn{session times} can be used to characterize churn; that is, the times for which IDs remain in the system. Smaller session times are indicative of higher churn, and vice-versa.  Real-world measurements can inform the distribution of session times, although these findings are specific to the system in question. For example, a measurement study of the peer-to-peer (P2P) system Gnutella found that session time was distributed exponentially with a mean exceeding two hours, whereas the distribution for the P2P system Kazaa is heavy-tailed, with an average session time of roughly a few minutes~\cite{gummadi2003measurement}. Another example involves the peer-to-peer system KAD~\cite{Stutzbach:2006:UCP:1177080.1177105} and the Bitcoin network~\cite{imtiaz2019churn}, where session times for both are well-fit by a Weibull distribution; however, the parameters of these distributions are very different. Therefore, even within the P2P domain, the session-time distribution---and, thus, the characterization of churn---can differ significantly between specific systems. By comparison, the ABC model does not incorporate a particular session-time distribution; rather, churn is defined more generally via the two parameters $\alpha$ and $\beta$.

\section{Motivating $\alpha$, $\beta$-Churn}\label{s:a-b-churn}

Here, we argue that the Sybil attack necessitates a new model of churn. Then, we motivate our notion of $\alpha$, $\beta$-churn, illustrating how it captures the key aspects of a Sybil attack, while remaining mathematical tractable. 

\subsection{Sybil Attack Demands New Churn Models}\label{s:demand}

During a Sybil attack, the system size may increase as the adversary injects large numbers of IDs.  Additionally, a Sybil adversary can also decrease the system size by removing IDs it controls.  In all cases, an algorithm that depends on static system sizes will be fragile when faced with a Sybil adversary. Thus, models that assume a static system size---such as DNC, TRAIN, and CROWD---seem inappropriate for addressing the Sybil attack.

\subsection{Some Gentle Intuition for the $\alpha$, $\beta$-Churn (\emph{ABC}) Model}

If the system size is not fixed, how should we model churn?  First, it seems clear that there should be no constraints on the timing of events for the bad IDs, since these are controlled by an adversary.

So, how should we constrain the timing of {\it good}  ID join and departure events?  A simple idea is to assume a fixed rate for good events, and to assume that this rate never changes.  This is similar to some models in~\cite{ko2008using}.  However, this is unrealistic in many settings.  For example, a system might experience an unusually high good join rate during certain times of day or the week, or during unpredictable events, such as a sudden spike in popularity in some system service.  Additionally, there may be periods of time during which systems size grows or shrinks {\it non-linearly}.  These phenomena cannot be captured by a fixed event rate.

A  more sophisticated approach is to allow for some change in the good event rate.  For example, the good join rate could itself change according to some separate rate.  But what should be this new change rate be?   

One idea is that the rate of \emph{change} could depend on the good event rate.  For example, if the good event rate is initially $1$ ID per millisecond, then the change rate could be by a factor of $2$ every \emph{millisecond}.  But this seems too fast - it allows the rate to double with each new event.  

An alternative would be to have the change rate set to a factor of $2$, say every $1,000$ events ($1,000$ milliseconds).  But setting the denominator to some absolute number of events is artificial: in systems with large populations, we might not expect much change over $1,000$ events, but in systems with small populations, we might expect significant change over $1,000$ events.

\subsection{The epoch}\label{s:the-epoch}
To resolve this issue, the fundamental time period we use for defining rate of change is the \emph{epoch}.  As defined previously, an epoch is the amount of time over which the symmetric difference of the system population changes by half the initial size of the system (see Section~\ref{s:churn}).  Recall that this definition is closely related to the ``half-life" as defined in~\cite{liben2002analysis}.  

\begin{figure}[t]
\begin{center}
\includegraphics[width=0.9\textwidth]{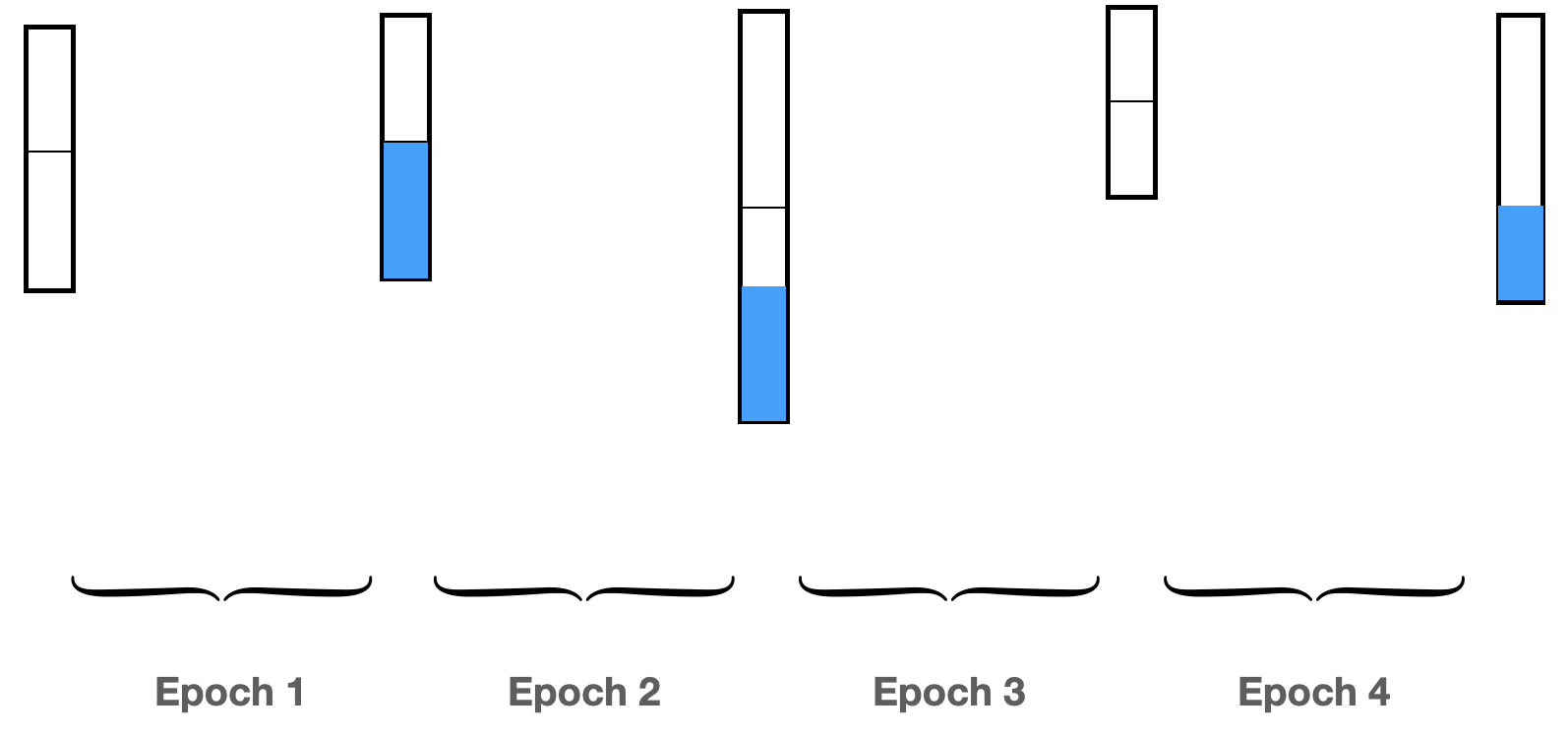}
\end{center}\vspace{-8pt}
\caption{Illustration of four epochs.  Time proceeds from left to right, and each rectangle illustrates the state of the system at the beginning of an epoch, with half of the rectangle delineated by the thin horizontal line. The blue rectangles indicate the new IDs that have been added during the epoch; see the discussion at the end of Section~\ref{s:the-epoch}.
}\label{fig:epoch}
\end{figure}

By letting the event rate change by a fixed amount in each \emph{epoch},
we obtain a churn model that generalizes to both small and large systems.

We now provide some intuition about the notion of an epoch.
Figure~\ref{fig:epoch} illustrates four example epochs.  In this figure, time moves from left to right, and each rectangle illustrates the state of the system at the beginning of an epoch, with half of the rectangle delineated by the thin horizontal line.  

The first epoch ends with half of the IDs changed;  these IDs are illustrated with the blue bar.  Notice that the system size has not changed at the end of this first epoch.  The second epoch ends when new IDs total half of the system size at the start of the epoch.  Notice that the system size has grown by $50\%$.  The third epoch ends when  departing IDs total half of the system size at the start of the epoch.  Notice that the system size has decreased by half.  The fourth epoch ends when new IDs total half the size at the epoch start.  Notice throughout that the system size is multiplied by some value in the range $[.5,1.5]$ during the course of any epoch.

\subsection{$\alpha$-smoothness}
We now describe how the value of $\alpha$ can allow for the good event rate to change across epochs. Recall from Section~\ref{s:churn} that in each epoch $i$, there is a good event rate, $\rho_i$.  Additionally, recall the $\alpha$-smoothness criteria:
\begin{itemize}
\item{\boldmath{$\sma$}\defn{-smoothness}:} $(\AOneL) \epochRate_{i-1}  \leq \epochRate_{i} \leq \AOneH\, \epochRate_{i-1}$.
\end{itemize}

This ensures that the good event rate changes by no more than an $\alpha$-factor from epoch to epoch.  For example, when $\alpha = 2$, if the good event rate is $1$ event per millisecond in epoch $i-1$, then in epoch $i$, the good event rate will be in the range from $1$ event per two milliseconds up to $2$ events per millisecond.

\begin{figure}[t]
\includegraphics[width=1\textwidth]{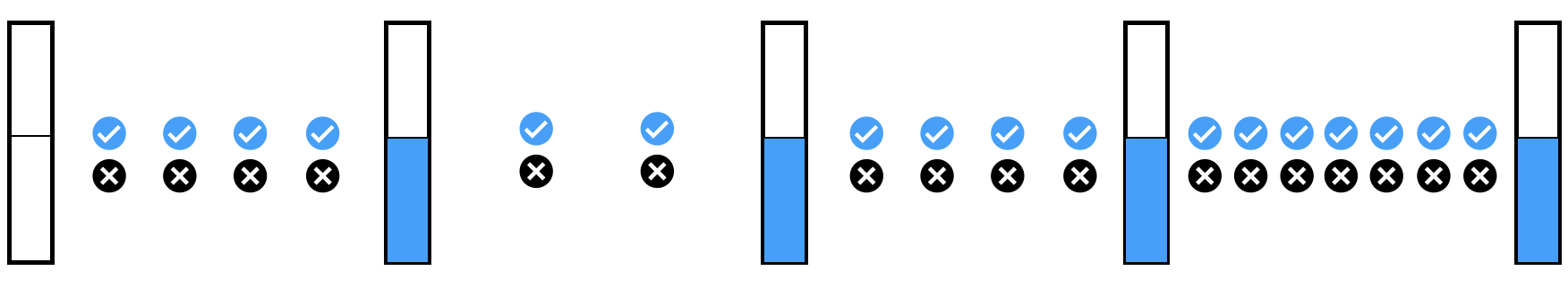}
\caption{Example for $\alpha=2$, $\beta = 1$. 
 Blue checks represent good ID join events; black x's represent good ID deletion events.  In this example, for simplicity, the system size does not change, so the number of good join events equals the number of good deletion events.  Since $\alpha = 2$, the number of events can change by a factor of $2$ from epoch to epoch.  Since $\beta = 1$, all these events are uniformly distributed in time.}\label{fig:ABCAlpha}
\end{figure}

Figure~\ref{fig:ABCAlpha} illustrates an example for $\alpha = 2$.  In this figure, the checks are good ID additions and the x's are good ID deletions.  Again, the system state at the beginning of epochs is represented by a rectangle; the blue half of the rectangle illustrates the half of the system IDs that are new in comparison to the system population at the start of the epoch. For simplicity, this figure illustrates a case where the system size does not change.

All epoch lengths are the same: $1$ hour.  Finally, in order to focus solely on $\alpha$, we have set $\beta = 1$.  This means that all events are spread out evenly over the duration of the epoch.  To keep things simple, in the figure, the epochs all have the same length.  This can happen even when the value of $\rho$ changes, because deletion events can either be for IDs that have been added during the epoch, or from IDs that were around at the start of the epoch.  In the former case, the ID that joined and then departed does not hasten the end of the epoch.

Since $\alpha = 2$, it is possible for the $\rho_i$ values vary  by multiplicative factors of $2$ from one epoch to another.  In the figure, $\rho_1 = 4$, $\rho_2 = 2$, $\rho_3 = 4$ and $\rho_4 = 7$.

A key observation is that, even for a small value such as $\alpha = 2$, the good join  rate can increase (or decrease), \emph{exponentially} over multiple epochs. In particular, over $x$ epochs the event rate can decrease by a factor of $2^{-x}$ or increase by $2^{x}$.

\subsection{$\beta$-smoothness}\label{ss:beta}
In the previous section, good events were evenly spread over time in each epoch.  Next, we show how this can change by discussing the final part of our model: $\beta$-smoothness. Recall the definition from Section~\ref{s:churn}:

\begin{itemize}
\item {\boldmath{$\smb$}\defn{-smoothness}:}  For any duration of $\ell$ seconds in the epoch, the number of good IDs that join is at least $ \lfloor \ell \epochRate_{i} / \beta\rfloor $ and at most $\lceil\beta \ell \epochRate_{i}\rceil$. Also, the number of good IDs that depart during this duration is at most $\lceil\beta \ell \epochRate_{i}\rceil$.
\end{itemize}

\begin{figure}[t]
\includegraphics[width=1\textwidth]{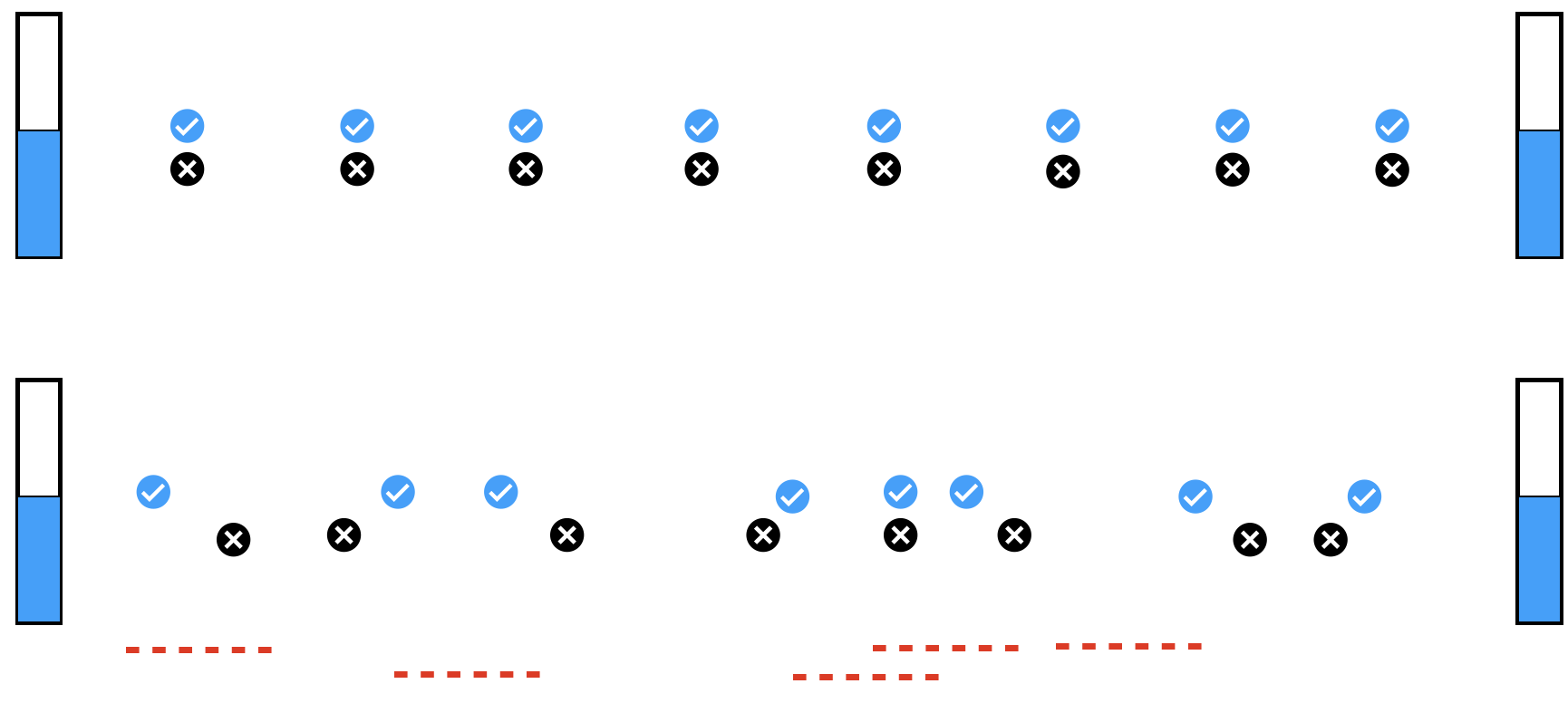}
\caption{Single Epoch with Different $\beta$ values. The dashed red lines at the bottom represent ``durations" of time equal to the epoch length divided by the number of good join events.  Top: $\beta=1$; each duration overlaps exactly $1$ join event and at most $1$ deletion event; events are spread evenly.  Bottom: $\beta = 2$; each duration overlaps between $\floor{1/2} = 0$ and $\ceil{2} = 2$ join events, and at most $2$ deletion events; 
 events may clump.}\label{fig:ABCBeta}
\end{figure}

Figure~\ref{fig:ABCBeta} illustrates how different values of $\beta$ effect the spacing of good events over a single epoch lasting $1$ hour. Both the top and bottom sub-figures illustrate epochs with the same $\rho$ value: $\rho = 8$ events per hour.  In the top figure, $\beta=1$ and in the bottom figure, $\beta = 2$.  

The red dashed lines at the bottom of the figure are a few example ``durations" of length $\ell = 3600/8 = 450$ seconds.  In the top epoch, $\beta$-smoothness for $\beta = 1$ requires that within each such duration, there must be at least $ \lfloor \ell \epochRate \rfloor  = 1$ , and at most $\lceil \ell \epochRate \rceil = 1$ good join event during this red duration.  Additionally, there must also be at most $\lceil \ell \epochRate \rceil = 1$ good deletion.  Note that these bounds hold for each duration in the top subfigure.

In the bottom epoch, $\beta = 2$, so  $\beta$-smoothness requires that within each duration, there must be at least $ \lfloor (1/2) \ell \epochRate \rfloor  = 0$ , and at most $\lceil 2 \ell \epochRate \rceil = 2$ ID join events during this red duration.  Additionally, there must also be at most $\lceil \ell \epochRate \rceil = 2$ good deletions.  Note that these bounds hold for each duration in the bottom subfigure.

In sum, larger values of $\beta$, such as in the bottom epoch, allow for more ``clumping" of the good events.  Smaller values of $\beta$ require the events to be more evenly spread.  Finally, we note that for simplicity, our duration lengths in Figure~\ref{fig:ABCBeta}---the red dashed lines---are all the same length.  However, the $\beta$-smoothness criteria holds for any duration length that is completely contained in the epoch.

\section{Related Work} \label{sec:related-work}

A preliminary version of our results appeared  in~\cite{Gupta_Saia_Young_2021,Gupta_Saia_Young_2019}. Specifically, the problem definition for \defID, model of churn, upper-bound analysis, and experimental results appeared recently in~\cite{Gupta_Saia_Young_2021}, while our lower-bound result (Section~\ref{sec:lower}) appeared in~\cite{Gupta_Saia_Young_2019}.
Our presentation here includes new material as detailed previously in Section~\ref{sec:new}.

\medskip

\noindent\textbf{Sybil Attacks.} There is significant prior work on Sybil attacks~\cite{douceur02sybil}.  For example, see surveys~\cite{mohaisen:sybil,john:soft},~\cite{newsome:sybil,mohaisen:sybil,john:soft} and additional work documenting real-world Sybil attacks~\cite{bitcoin-sybil,6503215,Yang:2011:USN:2068816.2068841}. 

Several results leverage social networks for Sybil defense \cite{yu:sybilguard,mohaisen:improving,wei:sybildefender}, including recent work using machine learning to {\it classify} likely Sybil IDs, such as \SybilExpose~\cite{misra2016sybilexposer} and \defn{\SybilFuse}~\cite{gao2018sybilfuse}.  Since social-network data may not always be available, in this paper, we focus on Sybil defense without it. We also note that, by themselves,  classification methods do not solve \defID. In particular, a classifier that is wrong with even a small probability, say $10^{-6}$, still allows the adversary to obtain a bad majority, over a large number of attempted join events.

That said, in Section~\ref{sec:experiments}, we do show that \ergo combines well with classification algorithms like \SybilFuse.  In particular, \SybilFuse significantly reduces costs for \ergo, when social network data is available to use \SybilFuse (Section~\ref{sec:experiments}, Heuristic 4).  Additionally, \ergo enables a classification algorithm like \SybilFuse to be leveraged to create a full-fledged Sybil defense algorithm that can withstand significant, long-term attack.

Other Sybil defenses use network measurements~\cite{sherr:veracity,liu:mason,Gil-RSS-15}.  These defenses rely on accurate measurements of latency, signal strength, or round-trip times to try to detect Sybil IDs.  Again, since such data may not always be available, \ergo does not rely on its use.  But we expect that, when this type of data is available, these results could also be used to reduce costs for \ergo.

Finally, Danezis et al.~\cite{danezis:sybil} and Scheideler and Schmid~\cite{scheideler:shell} describe containment strategies for Sybil attacks in overlay networks.  However, these results focus on isolating older IDs from newer bad IDs during a Sybil attack, and so do not ensure that the fraction of bad IDs in the network is always bounded.

\medskip

\noindent\textbf{Resource Burning.}  Many resource burning schemes for Sybil defense exist. {\it Computational puzzles} consume CPU cycles~\cite{nakamoto:bitcoin, li:sybilcontrol, andrychowicz2015pow}.  {\it Proof of Space-Time}, requires allocation of storage capacity~\cite{moran2019simple}.  {\it Proof of useful-work} consumes CPU cycles to solve challenges applicable to real-world scientific or engineering problems~\cite{shoker2017sustainable, ball2018proofs}.

A {\it completely automated public Turing test to tell computers and humans apart (CAPTCHA)} is a resource - burning tool where the resource is human effort~\cite{von2003captcha, moradi2015captcha}.  CAPTCHAs of tunable hardness have been proposed~\cite{baird2005scattertype}, as have CAPTCHAs that channel human effort into practical problems such as deciphering scanned words or detecting spam~\cite{von2008recaptcha}.

In a wireless network with multiple communication channels, Sybil attacks can be mitigated via \emph{radio - resource testing} if the adversary cannot listen to all channels simultaneously~\cite{monica:radio,gilbert:sybilcast,gilbert:who}; the resource here is listening capacity.

Finally, we note that Proof of Stake~\cite{GiladHMVZ17,Kiayias2017, ethereum-pos} is \emph{not} a resource burning technique.  It requires that the ``stake" of each ID to be a globally known quantity and thus is likely to remain relevant primarily for cryptocurrencies.  Moreover, even in that domain, it is controversial~\cite{posdm}.  

\medskip

\noindent\textbf{Guaranteed Spend Rate.}
In~\cite{gupta2017proof} and \cite{Gupta_Saia_Young_2019}, Gupta et al. proposed two algorithms \AlgB and \algGM that ensure that the fraction of bad IDs is always small, with respective good spend rates of $O(T+ \jAll)$ and $O(\jAll + \sqrt{T(\jAll+1)})$.  Unfortunately, the second result holds for the case where (1) churn is sufficiently small; and (2) there is a fixed constant amount of time that separates all join events by good IDs (i.e., non-bursty arrivals). \ergo does not require these assumptions.

We also note that, outside of the Sybil attack, several prior works address network security challenges with results that are parameterized by the adversary's cost~\cite{gilbert:making,gilbert:near,king:conflict,bender:how,ICALP15,daniICJournal17,aggarwal2016secure,gilbert:resource,zamani2017torbricks}. Such results are referred to as {\it resource-competitive}, and many examples are provided in the survey by Bender et al.~\cite{Bender:2015:RA:2818936.2818949}.

%%%%%%%%%%%%%%%%%%%%%%%%%%%%%%%%%%%%%%%%%%%%%%%%
%%%%%%%%%%%%%%%%%%%%%%%%%%%%%%%%%%%%%%%%%%%%%%%%
%%%%%%%%%%%%%%%%%%%%%%%%%%%%%%%%%%%%%%%%%%%%%%%%

\section{ERGO}\label{sec:tog}

%%%%%%%%%%%%%%%%%%%%%%%%%%%%%%%%%%%%%%%%%%%%%%%%
\begin{figure*}[t]
\centering
\begin{framed}
\begin{minipage}[h]{0.99\textwidth}
\noindent \textbf{\textsc{\underline{E}ntire by \underline{R}ate of \underline{Go}od}} ({\bf{\ergo}})
\medskip
\small

%The server runs the following code: \\
$S(0) \leftarrow$ set of IDs that returned a valid solution to 1-hard
RB challenge. 

\medskip

\noindent $\JoinEst$ is maintained by running $\goodJest$ in parallel to the following code;\\
\noindent $\tau \leftarrow$ time at system initialization;\\
\noindent $\tau'$ is the current time.

\medskip

\noindent For each iteration, do:
\begin{enumerate}[leftmargin=14pt] 
    \colorlet{light-gray}{gray!25}
    \sethlcolor{light-gray}
    \setlength{\lineskip}{0pt}
	\item[1.] Each joining ID is assigned a RB challenge of hardness $1$ plus the number of IDs that have joined in the last $1/\JoinEst$ seconds of the current iteration.
	
	\item[2.] When number of joining and departing IDs in this iteration exceeds $\vert S(\tau) \vert /11$, perform a purge as follows:%\smallskip
			\begin{itemize}[leftmargin=15pt]
			\item[(a)] Issue all IDs a $1$-hard RB challenge. 
			\item[(b)] $\iterIDs(\tau)$ $\leftarrow$  IDs solving this RB challenge in $1$ round.
			\item[(c)]  $\tau \leftarrow \tau'$
			\end{itemize}
            
\end{enumerate}
\end{minipage}
\end{framed}
\vspace{-13pt}\caption{Pseudocode for \ergo.}
\label{alg:gmcom}
\end{figure*}
%%%%%%%%%%%%%%%%%%%%%%%%%%%%%%%%%%%%%%%%%%%%%%%%

%Figure~\ref{alg:gmcom} gives pseudocode for \ergo. 

We begin by walking through the execution of \ergo and providing intuition for its design.  Recall from Section~\ref{sec:model-main} that we are presenting \ergo with coordination provided by a server. Therefore, the server executes the pseudocode for \ergo presented in Figure~\ref{alg:gmcom}. The server initializes system membership with all IDs that solve a 1-hard RB challenge.   Then, execution occurs over disjoint periods of time called \defn{iterations}, where an iteration consists of Steps 1 and 2 in the pseudocode. 

In Step 1,  each ID that wishes to join the system must solve an RB challenge of hardness $1$ plus the number of IDs that joined within the last $1/\JoinEst$ seconds, where $\JoinEst$ is the current good join rate estimate obtained from \goodJest. We call the hardness of this RB challenge the \emph{entrance cost}; intuition for its value is in Section~\ref{sec:entrance-cost}.  Once the ID returns a valid solution, the server adds the ID to a membership set that it maintains and the ID is considered to have successfully joined the system.

%In Step 1, each joining ID must solve an RB challenge of hardness $1$ plus the number of IDs that joined within the last $1/\JoinEst$ seconds, where $\JoinEst$ is the current good join rate estimate obtained from \goodJest.  We call the hardness of this RB challenge the \emph{entrance cost}; intuition for its value is in Section~\ref{sec:entrance-cost}. 

Step 1 lasts until the number of IDs that join and depart in the iteration is at least $1/11$ times the number of IDs at the start of the iteration;  we note that the value $1/11$ is not special, as discussed later in Section~\ref{sec:parameter-constant-discussion}.  In Step 2, the server performs a \defn{purge} by resetting system membership to all IDs that solve a 1-hard RB challenge within 1 round.  To do this, the server issues a 1-hard RB challenge to all IDs and then removes from the membership set those IDs that fail to respond with a valid solution within 1 round. Note, again, that the server maintains a membership set of all IDs in the system.

%%%%%%%%%%%%%%%%%%%%%%%%%%%%%%%%%%%%%%%%%%%
%%%%%%%%%%%%%%%%%%%%%%%%%%%%%%%%%%%%%%%%%%%
%%%%%%%%%%%%%%%%%%%%%%%%%%%%%%%%%%%%%%%%%%%

\subsection{Intuition for Entrance Cost}\label{sec:entrance-cost} 

To gain intuition, fix an iteration, assume that $\beta = \Theta(1)$, and let $\jInterval$ be the good join rate during the iteration.  Then, in the absence of attack, all entrance costs are $O(1)$ since $O(1)$ good IDs join on average during $1/\JoinEst \approx 1/\jInterval$ seconds.  

%If there is a large attack, the adversary pays more than \ergo.  For example, if the ratio of bad ID joins to good ID joins is $x$, then the number of bad IDs joining in any $1/\JoinEst$ seconds is about $x$, so the adversary pays entrance costs of $\Theta(x^2)$ during a time when the good IDs pay entrance costs of $O(x)$, which is square root of what the adversary pays.

If there is a large attack, the adversary pays more than \ergo.  For example, consider the case where the ratio of bad ID joins to good ID joins is $x$ and these bad ID join events are evenly spread out over time. Thus, the number of bad IDs joining in any $1/\JoinEst$ seconds is about $x$. For each good join event, the adversary pays an entrance cost which is at least $1+2+3 + ... + x = \Theta(x^2)$. In contrast, the good ID that joins in this time pays at most $x+1 = O(x)$; recall that, in the worst case, this good ID joins after the bad IDs. Therefore, over this interval of $1/\JoinEst$ seconds, the good ID pays an entrance cost that is asymptotically the square root of what the adversary pays.
 
A challenge we face in analyzing \ergo is establishing this flavor of result more generally. Nonetheless, we can extend this reasoning a bit further to demonstrate more intuition for why Theorem~\ref{thm:new-main-upper} is plausible. During a large attack, the adversary's spend rate is $T = \Theta(\xi  \jIntervalAll)$, where {\boldmath{$\xi$}} is the average entrance cost, and {\boldmath{$\jIntervalAll$}} is the join rate for all IDs. Then, the good spend rate due to entrance costs is $\Theta(\xi \jInterval$), and the good spend rate due to purges is $\Theta(\jIntervalAll)$. When $\xi  = \jIntervalAll/ \jInterval$, these two costs are balanced, and the good spend rate due to the entrance costs and purge costs is within a constant factor of:

$$\xi \jInterval + \jIntervalAll  \leq  2\jIntervalAll = 2\sqrt{\left(\jIntervalAll\right)^2} = 2 \sqrt{\jIntervalAll  \xi \jInterval} =  2\sqrt{\jInterval T}$$

\noindent where the first step holds by our setting of $\xi$, the third step since $\jIntervalAll = \xi \jInterval$, and the final step since $T = \xi  \jIntervalAll$. 

\smallskip
As a side note, our entrance cost approximates the ratio of the total join rate over the good join rate, which motivates the name {\textsc{\underline{E}ntire by \underline{R}ate of \underline{Go}od}}; \ergo is also the Greek word for work.

\medskip
\noindent
\textbf{Technical Difficulties.}  While the above gives intuition for our analysis, challenges remain.  First, how do we get an estimate of the good join rate when we do not know anything about which IDs are good or bad, when epochs begin or end, or the values of $\alpha$ and $\beta$?  Solving this problem is a key technical difficulty, addressed by our algorithm \goodJest, which we describe and analyze in Section~\ref{s:analGoodJest}.

Second, how can the above intuition for analyzing entrance costs generalize when the good join rate is changing, possibly within each iteration of \ergo?  To handle this, we first show that  our estimate of the good join rate updates at least once every $O(1)$ epochs, so it is never too stale.  This implies that the entrance costs---which make use of the good join rate---yield an advantage over the adversary, as sketched above.  Finally, we make use of Cauchy-Schwartz to upper-bound \ergo's total cost based on the bounds for each iteration, thus completing the upper-bound analysis.

\subsection{Intuition for Purging}\label{s:intuition-purge}

The purpose of purging is to ensure that the fraction of bad IDs in the system is less than $1/6$ at all times. When a $1$-hard RB challenge is issued to all IDs, the adversary can only solve a $k$-fraction within a round and thus keeps at most a $\kappa$-fraction of its bad IDs in the system. Our main result for \ergo result holds for $\kappa\leq 1/18$, and so immediately after each purge the fraction of bad IDs in the system is at most $1/18$. During the iteration, the fraction of bad IDs can increase, but the iteration always ends before this fraction can be reach $1/6$. This reasoning is formalized in Lemma~\ref{lem:badbounded}.

%%%%%%%%%%%%%%%%%%%%%%%%%%%%%%%%%%%%%%%%%%%
%%%%%%%%%%%%%%%%%%%%%%%%%%%%%%%%%%%%%%%%%%%
%%%%%%%%%%%%%%%%%%%%%%%%%%%%%%%%%%%%%%%%%%%

\section{\goodJest}\label{s:goodJest}

\goodJest provides an estimate, {\boldmath{$\JoinEst$}},  of the good join rate, when there is at most a constant fraction of bad IDs. We require that the fraction of bad IDs is less than $1/6$. 

Initially, \goodJest sets $\JoinEst$ equal to the number of IDs at system initialization divided by the total time taken for initialization, where initialization consists of the server issuing a $1$-hard RB challenge to all nodes.  In Section~\ref{s:committee}, we show how to decentralize this algorithm.  The value $t$ is set to the system start time. Throughout the protocol, $t$ will equal the last time that $\JoinEst$ was updated, and $t'$ will be the current time.

%%%%%%%%%%%%%%%%%%%%%%%%%%%%%%%%%%%%%%%%%%%%
%%%%%%%%%%%%%%%%%%%%%%%%%%%%%%%%%%%%%%%%%%%%
%%%%%%%%%%%%%%%%%%%%%%%%%%%%%%%%%%%%%%%%%%%%

\begin{figure}[t!]
\centering
\begin{framed}
\begin{minipage}[h]{0.99\textwidth}
\noindent  
\textbf{\textsc{\underline{Good} \underline{J}oin \underline{Est}imate}}  \textbf{(\goodJest)}
\medskip
\small

In the following, $t'$ is the current time and $S(x)$ is the set of IDs in the system at time $x$.

\smallskip
$t \leftarrow$ time at system initialization.\\
$\JoinEst \leftarrow$ $\vert S(t) \vert$ divided by time required for initialization.\smallskip

Repeat forever: whenever $\vert S(t') \triangle S(t) \vert \geq \frac{5}{12} \vert S(t') \vert$, do: \vspace{-5pt}
            \begin{enumerate}[leftmargin=15pt] 
            \item[1.]  $\JoinEst \leftarrow \vert S(t')\vert /(t'-t)$.\vspace{-7pt}
            \item[2.]  $t \leftarrow t'$.
            \end{enumerate}
\end{minipage}
\end{framed}
\vspace{-13pt}\caption{Pseudocode for \goodJest.}\label{alg:estGoodRate}\vspace{-5pt}
\end{figure}
%%%%%%%%%%%%%%%%%%%%%%%%%%%%%%%%%%%%%%%%%%%%
%%%%%%%%%%%%%%%%%%%%%%%%%%%%%%%%%%%%%%%%%%%%
%%%%%%%%%%%%%%%%%%%%%%%%%%%%%%%%%%%%%%%%%%%%

The pseudocode is presented in Figure~\ref{alg:estGoodRate}. There are two aspects that must be addressed in designing  \goodJest. First, at what points in time should $\JoinEst$ be updated?  This occurs whenever the system membership has changed by a constant factor with respect to the current system size. In particular, $\JoinEst$ is updated when $\vert S(t') \triangle S(t)\vert  \geq \frac{5}{12} \vert S(t')\vert $ holds true;  we note that the value $5/12$ is not special, as discussed later in Section~\ref{sec:parameter-constant-discussion}.  Since join and departure events are ordered, this is equivalent to the property that $\vert S(t') \triangle S(t)\vert  = \lceil \frac{5}{12} \vert S(t')\vert  \rceil$.  We refer to $(t,t']$ as an \defn{interval}. The execution of \goodJest divides time into consecutive, disjoint intervals.

Second, how is $\JoinEst$ updated?  This is done by setting $\JoinEst$ to the current system size divided by the amount of time since the last update to $\JoinEst$.  In particular, we set $\JoinEst \leftarrow \vert S(t')\vert /(t'-t)$. 

%%%%%%%%%%%%%%%%%%%%%%%%%%%%%%%%%%%%%%%%%%%
%%%%%%%%%%%%%%%%%%%%%%%%%%%%%%%%%%%%%%%%%%%
%%%%%%%%%%%%%%%%%%%%%%%%%%%%%%%%%%%%%%%%%%%

\subsection{Intuition for \EstGoodJoin}\label{s:goodJestIntuition}

We provide intuition, using Figure~\ref{fig:gJest}, for how \EstGoodJoin estimates the rate at which good IDs join the system, despite the fact that we do not know a priori which IDs are good and which are bad.  The full, formal analysis is in Section~\ref{s:analGoodJest}.

Fix some interval, and let $S(t)$ and $S(t')$ be the set of good IDs at the beginning and end of the interval, respectively; $s = \vert S(t) \vert $; and $a$ be the number of good IDs that join during the interval.  We provide intuition for why $a$ is always $\Theta(s)$, and thus, why dividing $s$ by the interval length yields an estimate of the good join rate.  For simplicity, we only consider here the case where the system size is fixed, but the IDs change; our intuitive reasoning only provides results in expectation. Our full proof handles changing system size and gives results with high probability (see Section~\ref{s:analGoodJest}).  Symmetric difference is critical;  for example, if we delineated intervals simply by the raw number of joins and deletions, this would allow us to obtain that $a = O(s)$, but not that $a = \Omega(s)$.  We now sketch our result.  

%%%%%%%%%%%%%%%%%%%%%%%%%%%%%%%%%%%%%%%%%%%
%%%%%%%%%%%%%%%%%%%%%%%%%%%%%%%%%%%%%%%%%%%
%%%%%%%%%%%%%%%%%%%%%%%%%%%%%%%%%%%%%%%%%%%

\begin{figure}[t]
\includegraphics[width=1\textwidth]{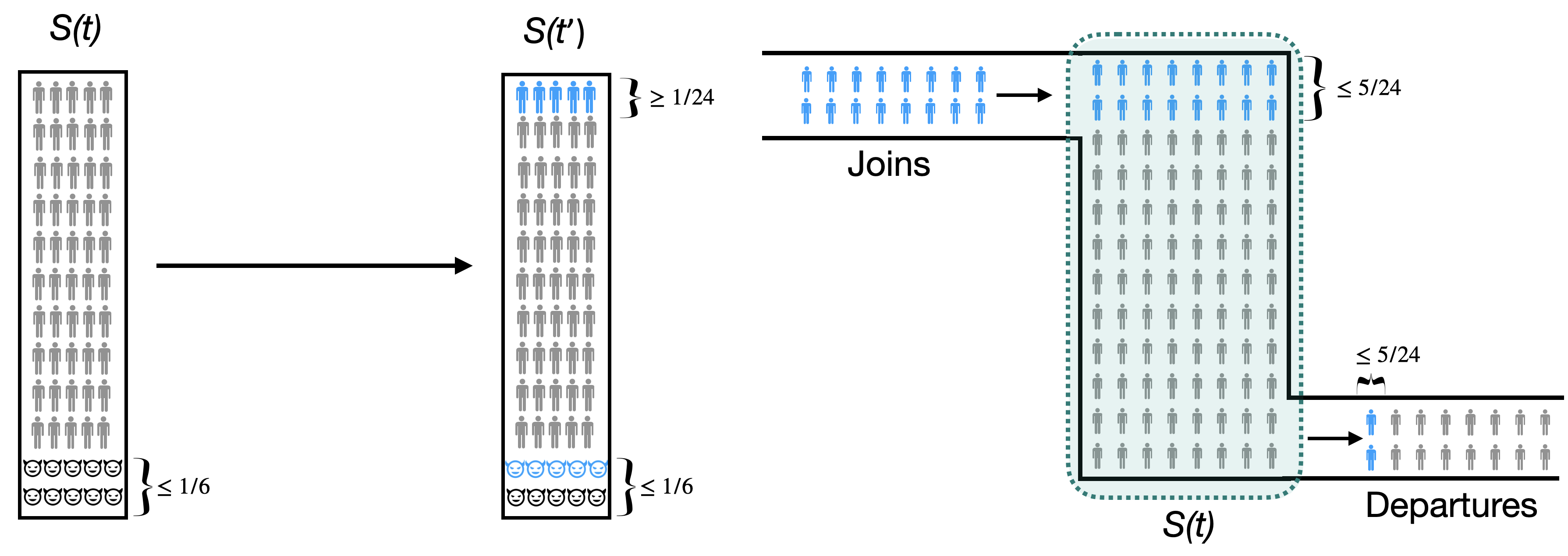}
\caption{Example used to give intuition for \EstGoodJoin, as discussed in Section~\ref{s:goodJestIntuition}. Blue and gray indicate a new and old ID, respectively.  Stick figures and horned-faces indicate good and bad IDs, respectively.}\label{fig:gJest}
\end{figure}

%%%%%%%%%%%%%%%%%%%%%%%%%%%%%%%%%%%%%%%%%%%
%%%%%%%%%%%%%%%%%%%%%%%%%%%%%%%%%%%%%%%%%%%
%%%%%%%%%%%%%%%%%%%%%%%%%%%%%%%%%%%%%%%%%%%

Figure~\ref{fig:gJest} (left) shows why the number of good IDs that join is greater than a constant times the system size at the start of the interval, i.e. why $a = \Omega(s)$.   In the figure, the old IDs are grey and new IDs are blue, where an ID is called \emph{old} if it was in the system at the beginning of the interval, and is \emph{new} otherwise. The good IDs are the stick figures and the bad IDs are the horned faces.  The interval only ends when a $5/24$ fraction of new IDs have joined.  This is true because, since the system size stays constant, the number of departing IDs equals the number of joining IDs, thus, $5/24$ joining and $5/24$ departing IDs yields the total $5/12$-fraction change required to ends an interval.  Note that the fraction of new, good IDs must be at least $5/24 - 1/6 = 1/24$ since the fraction of new, bad IDs is at most $1/6$.  This shows that $a \geq (1/24)s$.

In Figure 6 (left), the system size is $60$. The  number of bad IDs (old and new) is $10$, and the fraction is thus exactly $1/6$. The total number of good new IDs is $5$, giving a fraction that is $5/60 \geq 1/24$.

Figure~\ref{fig:gJest} (right) shows why the number of good IDs that join is smaller than a constant times $s$, i.e. why $a = O(s)$.  This fact does not follow trivially: 
if each new good ID immediately departs, this would increase $a$, but would not alter the symmetric difference.  For simplicity, our figure focuses only on the good IDs, in order to more simply illustrate the interplay between arrivals and departures.
 
The figure illustrates the following.  \textbf{Fact 1:} the fraction of new good IDs in the system is always at most $5/24$ during the interval, otherwise the interval would end.  To see this, first note that \goodJest ends an interval when the symmetric difference is more than a $5/12$ fraction of the system size at the beginning of the interval.  Then, note that if the fraction of new good IDs is ever more than a $5/24$ fraction of the system size, the symmetric difference will exceed a $5/12$ fraction of the beginning system size.

Next, we have the following.  \textbf{Fact 2:} the fraction of new IDs in the set of departing good IDs is at most $5/24$ in expectation.  To see this, first recall that when a good departure occurs, the departing ID is selected uniformly at random from the set of good IDs currently in the system (See Section~\ref{sec:model-main}).  Thus, a departing good ID is new with probability at most equal to the fraction of new, good IDs in the system.  

From these two facts, we note the following.  In expectation, the fraction of good IDs that join and are deleted is at least $a - (5/24)a$, by Fact 2.  Second, the fraction of good new IDs in the system at any point is always no more than $(5/24) s$, by Fact 1.  Putting these together, we get that 
in expectation, $a - (5/24) a \leq (5/24) s$; solving for $a$, we get that $a \leq (5/19) s$. 

The above intuition is formalized in Lemmas~\ref{lem:a_uppbound} and~\ref{lem:a-lowerbound} of Section~\ref{s:analGoodJest}.  The analysis in Section~\ref{s:analGoodJest} solves the following remaining problems: providing sharp concentration bounds; handling changing system sizes; and using the value of $a$ to obtain the good join rate, even when an interval intersects multiple epochs.

%%%%%%%%%%%%%%%%%%%%%%%%%%%%%%%%%%%%%%%%%%%
%%%%%%%%%%%%%%%%%%%%%%%%%%%%%%%%%%%%%%%%%%%
%%%%%%%%%%%%%%%%%%%%%%%%%%%%%%%%%%%%%%%%%%%

\section{Analysis}\label{s:analysis}

%%%%%%%%%%%%%%%%%%%%%%%%%%%%%%%%%%%
%%%%%%%%%%%%%%%%%%%%%%%%%%%%%%%%%%%
%%%%%%%%%%%%%%%%%%%%%%%%%%%%%%%%%%% 
 
\begin{table}[t]
\renewcommand{\arraystretch}{1.2} % used for vertical centering (place p for anchoring, m for non-vert-centered text)
\begin{center}
{
\begin{tabular}{ |>{\centering\arraybackslash}p{2cm}|m{10.5cm}|  }
\hline
\rowcolor{LightCyan}\hspace{0pt}{\bf Notation} &  \hspace{3.8cm}{\bf Definition}  \\
\hline
$A \triangle B$ & The symmetric difference between sets $A$ and $B$.  \\
\hline
$\epochRate_i$ & The good join rate in epoch $i$.\\
\hline
$\sma$  &  The good join rate between two consecutive epochs differs by at most an $\alpha$-factor.   \\
\hline
$\smb$ & The number of good IDs that join or depart during $\ell$ consecutive seconds within an epoch differs by at most a $\beta$-factor from $\ell$ times the good join rate of the epoch.  \\
\hline
$\AdvPower$ & In any single round where all IDs are solving challenges, the adversary can solve a $\AdvPower$-fraction of the challenges. \\
\hline
$\epsilon$ & In any single round, at most an $\epsilon$-fraction of good IDs may depart, for $\epsilon<1/12$. \\
\hline
$n_0$ & The minimum number of good IDs in the system at any time. \\
\hline
$\gamma$ & The system lifetime is defined over $n_0^{\lifetime}$ joins and departures, for any fixed constant $\gamma>0$.\\
\hline
$\advAveCost$ & The adversary's spend rate over the system lifetime.\\
\hline
$\joinRate$ &  The join rate of good IDs over the system lifetime.\\
\hline
$\joinRate^B$ &  The join rate of bad IDs over the system lifetime (specific to the lower bound argument in Section~\ref{sec:lower}).\\
\hline
\end{tabular}
}\caption{Table of commonly used notation.}\label{table:notation}
\end{center}
\end{table}

%%%%%%%%%%%%%%%%%%%%%%%%%%%%%%%%%%%
%%%%%%%%%%%%%%%%%%%%%%%%%%%%%%%%%%%
%%%%%%%%%%%%%%%%%%%%%%%%%%%%%%%%%%%

In this section, we provide full proofs of our results. We start with the analysis of \goodJest (Section~\ref{s:analGoodJest}).  Next, we prove the correctness properties for \ergo, and prove the spending rate upper-bounds for \ergo, when using the estimate provided by \goodJest (Section~\ref{s:analErgo}).  For ease of reference, we collect our commonly used notation in Table~\ref{table:notation}. Finally, we conclude this section with a discussion of our choice of values for several parameters and constants used in our analysis and algorithm design (Section~\ref{sec:parameter-constant-discussion}).

\subsection{Analysis of \goodJest}\label{s:analGoodJest}

Why does \goodJest provide a close estimate of the good join rate?   Recall that \goodJest divides time into intervals. We say that an interval \defn{intersects} an epoch if there is a point in time belonging to both the interval and the epoch. 

\begin{lemma}\label{lem:interval-epochs}
	An interval intersects at most two epochs. 
\end{lemma}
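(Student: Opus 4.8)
The plan is to argue by contradiction: suppose some interval $(t, t']$ intersects three (or more) epochs. Then the entire duration of the ``middle'' epoch—call it epoch $j$—lies inside $(t, t']$. The strategy is to compare the amount of good churn that must occur over a full epoch (which is large, by Definition \ref{def:epoch}) with the amount of total churn that \goodJest permits within a single interval (which is bounded, by the interval's defining condition $|S(t') \triangle S(t)| = \lceil \frac{5}{8}|S(t')|\rceil$). Since good churn is a subset of total churn, if the forced epoch-churn exceeds the allowed interval-churn we reach a contradiction.

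First I would set up notation: let epoch $j$ run over $(a, b] \subseteq (t, t']$, with $a \geq t$ and $b \leq t'$. By Definition \ref{def:epoch}, $|G(b) \triangle G(a)| \geq (3/4)|G(a)| \geq (3/4) n_0$, so at least $(3/4)|G(a)|$ good-ID join-or-depart events occur strictly within $(a,b] \subseteq (t,t']$. Each such event is also a change to $S$, so these events all contribute to bounding $|S(t') \triangle S(t)|$ from below—though care is needed here, since an ID that joins and then departs within $(t,t']$ contributes to the epoch's symmetric difference count but not to $|S(t')\triangle S(t)|$. The cleaner route is to bound the number of $S$-events in $(t,t']$: the interval condition says $|S(t') \triangle S(t)| = \lceil \frac58 |S(t')|\rceil$, and I would combine this with the fact that between consecutive updates the membership changes by a controlled factor to get an upper bound on $|S(t)|$, $|S(t')|$, and hence on $n_0$-relative quantities. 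Relating $|S(t')|$ back to $|G(a)|$ uses the $1/6$ bound on the bad fraction (so $|G| \geq (5/6)|S|$ at all relevant times) together with the observation that over the interval the good population cannot have grown or shrunk too wildly relative to $|G(a)|$.

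The key inequality to extract is: the full epoch $j$ forces roughly $\Omega(|G(a)|)$ good-membership changes, while the interval $(t,t']$ only tolerates roughly $\frac58 |S(t')|$-many total membership changes; if $|G(a)|$ and $|S(t')|$ are within a constant factor of each other and that constant is chosen so the epoch's $3/4$-threshold beats the interval's $5/8$-threshold appropriately, this is a contradiction. I expect the main obstacle to be the bookkeeping that relates $|S(t')|$, $|S(t)|$, and $|G(a)|$: because the membership is changing, these sizes are not equal, and I need to show they stay within a small enough multiplicative factor that the constants ($3/4$ for epochs versus $5/8$ for intervals, plus the $5/6$ good-fraction factor) actually close the argument. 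The constants in the pseudocode ($5/8$, $3/4$, $1/6$) appear deliberately tuned so that one epoch's worth of churn strictly exceeds what fits in one interval with room to spare, which is exactly what forces the "at most two epochs" conclusion; verifying this numerically is the crux, but it is a finite constant check rather than a deep argument.
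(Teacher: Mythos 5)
Your proposal follows the same route as the paper: assume an interval intersects three or more epochs, pick a fully contained epoch, use the epoch definition to force a $G$-symmetric-difference of at least $(3/4)$ times a good-population size, convert to an $S$-symmetric-difference via the Population Invariant ($|G| \geq (5/6)|S|$ so that $(3/4)(5/6)=5/8$ matches the interval threshold), and conclude the interval would have ended sooner. This is structurally identical to the paper's argument.

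The gap is that you flag the crucial subtlety — an ID that joins and departs inside the interval contributes to the epoch's $G$-symmetric-difference but not to $|S(t')\triangle S(t)|$ — and then do not actually resolve it. Your proposed ``cleaner route'' of bounding the number of $S$-events in $(t,t']$ does not work, because the interval's defining condition bounds the \emph{endpoint symmetric difference} $|S(t')\triangle S(t)|$, not the raw count of join/depart events; precisely the transient joins-then-departs you worried about are the ones the interval condition does not charge for. Later in the proposal you write that the interval ``only tolerates roughly $\frac58 |S(t')|$-many total membership changes,'' which again equates the endpoint symmetric difference with the event count and hence re-introduces the exact conflation you had flagged. What would actually need to be shown is that the $S$-symmetric-difference over the whole interval dominates the $G$-symmetric-difference over the strictly shorter contained epoch; this is the step the paper (in its own proof, going from $|S(t_2)\oplus S(t_0)|$ to $|G(t_2)\oplus G(t_1)|$) takes without an explicit argument, so your plan reproduces the structure of the paper's reasoning but leaves this containment step as a genuinely open hole rather than a ``finite constant check.''
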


\begin{proof}
	Assume that some interval starts at time $t_0$ and intersects at least three epochs; we will derive a contradiction. Given this assumption, there must be at least one epoch entirely contained within the interval.  Consider the earliest such epoch, and let it start at time $t_1 \geq t_0$ and end at time $t_2> t_1$.  Observe that:
 	\begin{eqnarray*} 
 	\vert S(t_2)  \triangle S(t_0) \vert  &\geq & \vert  G(t_2) \triangle G(t_1)\vert \\
 	&\geq & \left(\frac{1}{2}\right)\vert  G(t_2)\vert  \\
 	&\geq & \left(\frac{1}{2}\right)\left(\frac{5}{6}\right)\vert S(t_2)\vert \\
 	&= &\left(\frac{5}{12}\right) \vert S(t_2)\vert 
	\end{eqnarray*}
	In the above, step one holds since $\vert S(t_2)  \triangle S(t_0) \vert \geq  \vert S(t_2) \triangle S(t_1)\vert $; step two holds by the definition of an epoch; and the second to last step holds given that the fraction of bad IDs is always less than $1/6$.

	But the above inequalities show that $\vert S(t_2)  \triangle S(t_0) \vert $ $\geq \frac{5}{12} \vert S(t_2)\vert $.  Therefore, the interval ends by time $t_2$, and there can be no third epoch intersecting the interval; this contradiction completes the argument.  
\end{proof}

At this point, it is useful to foreshadow the relationship between epochs (recall Section~\ref{s:churn}), intervals (used by \EstGoodJoin), and iterations (used in \ergo). Lemma~\ref{lem:interval-epochs} establishes the ``translation'' between the first two. Later, in Section~\ref{s:analErgo} (Lemma~\ref{lem:iterinter}), we prove the translation between intervals and and iterations. Figure~\ref{fig:EpochIntervalIteration} depicts the relationship between epochs, intervals, and iterations.

Why is this translation necessary? In this section, we will show that \EstGoodJoin achieves an estimate of the good join rate that is parameterized by factors of $\sma$ and $\smb$. These factors---which impact the accuracy of the estimate provided by\EstGoodJoin---arise because an interval may overlap more than one epoch. When we go from intervals to iterations, we will similarly accrue additional factors of $\sma$ and $\smb$ in our analysis for the good spend rate under \ergo, since an iteration can overlap more than one interval (see the analysis in Section~\ref{s:analErgo} starting with Lemma~\ref{lem:num_sub_intervals}).

%Figure~\ref{fig:EpochIntervalIteration} gives intuition for the definitions of iterations, intervals and epochs that we have defined earlier and which will be important in our upcoming analysis. Specifically, Section~\ref{s:analGoodJest} refers to epochs and intervals; and Section~\ref{s:analErgo} refers to epochs, intervals and iterations.

%%%%%%%%%%%%%%%%%%%%%%%%%%%%%%%%%%
\begin{figure*}[t]
\centering 
\includegraphics[trim = 1cm 0cm 1cm 0, width=0.80\textwidth]{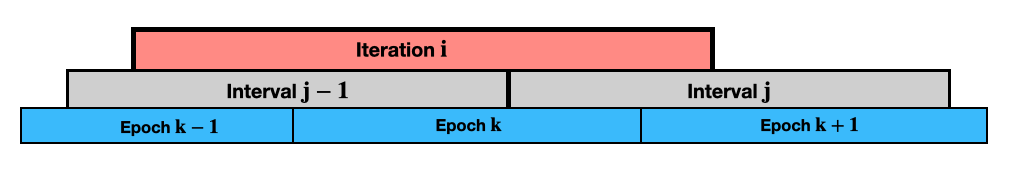}
\caption{\small \emph{Epochs} derive from our model of churn (Section~\ref{s:churn}).  \emph{Intervals} derive from \goodJest, specifically the times at which it sets the variable $\JoinEst$ (Figure~\ref{alg:estGoodRate}, Step 1). \emph{Iterations} derive from \ergo, specifically the duration between purges (Figure~\ref{alg:gmcom}, Step 2).}
  \label{fig:EpochIntervalIteration}  
\end{figure*}
%%%%%%%%%%%%%%%%%%%%%%%%%%%%%%%%%%

For the remainder of this section, fix an interval that starts at time {\boldmath{$t$}} and ends at time {\boldmath{$t'$}}. Let {\boldmath{$a$}} be the number of good IDs that have joined during the interval.  All lemmas hold \whp in $n_0$.  

\begin{lemma}\label{lem:boundsize}
Assume that $n_0\geq 120$. Then, $\vert S(t')\vert  \geq \frac{7}{10}\vert S(t)\vert $. 
\end{lemma}

\begin{proof}
By the definition of an interval, we know that: 
\begin{align*}
    \vert S(t') \triangle S(t)\vert  &\leq \left\lceil \frac{5}{12}\vert S(t')| \right\rceil  \leq \frac{5}{12}\vert S(t')\vert  + 1
\end{align*}
Note that $\vert S(t') \triangle S(t)\vert  \geq \vert S(t) - S(t')\vert $, which implies:
\begin{align}\label{eq:defInt}
    \vert S(t) - S(t')\vert  &\leq \frac{5}{12}\vert S(t')\vert  + 1
\end{align}
Moreover, $\vert S(t)- S(t')\vert  \geq \vert S(t)\vert  - \vert S(t')\vert $. Rearranging, we get: 
	\begin{align*}
		\vert S(t')\vert  &\geq \vert S(t)\vert  - \vert S(t)- S(t')\vert \\
			&\geq \vert S(t)\vert  - \left(\frac{5}{12}\vert S(t')\vert +1\right)
	\end{align*}
	where the second step follows from Inequality~\ref{eq:defInt}. Finally, isolating $\vert S(t')\vert $ in the last inequality, we get:
	\begin{align*}
		\vert S(t')\vert  &\geq 
		\frac{12}{17}\left(\vert S(t)\vert -1\right) \geq \frac{12}{17}\left(\frac{119}{120}\vert S(t)\vert \right) \geq \left(\frac{7}{10}\right)\vert S(t)\vert 
	\end{align*}
	\noindent where the second inequality holds so long as $n_0 \geq 120$.  
\end{proof}

The next lemma is one of the more technically challenging in our analysis.  Recall that $S(\tau)$ is the set of all IDs in the system at time $\tau$.  In order to upper bound the number of joining good IDs, we need to first upper bound the number of new, good IDs that depart, where an ID is new if it has joined in the current interval.  The key technical difficulty is establishing this bound \whp.  To do so, we compute the expected number of departing new, good IDs, and then use a stochastic dominance argument and Chernoff bounds to show tight concentration around this expectation. 

\begin{lemma}\label{lem:a_uppbound} 
Assume that $n_0\geq \max\{6000, (720(\gamma+1))^{4/3} \}$. Then, for any interval, $a \leq 11\vert S(t)\vert  + 2$.
%Then, $a \leq  23 \vert S(t)\vert  + 4$.
\end{lemma}

\begin{proof}
Note that:
\begin{align*}
	\left \lceil \frac{5}{12} \vert S(t')\vert  \right \rceil & = \vert S(t') \triangle S(t)\vert   \geq |G(t') \triangle G(t)| \geq\vert G(t') - G(t)\vert 
%	& = \vert S(t') -  S(t)\vert  + \vert S(t) -  S(t')\vert   \\
%	& \geq \vert S(t') -  S(t)\vert  \\G(t)
\end{align*}

\noindent where the first step holds by the definition of an interval.  Thus, we have:
\begin{align*}
	\vert G(t') - G(t)\vert  & \leq \left \lceil \frac{5}{12} \vert S(t')\vert  \right \rceil  < \left(\frac{1}{2}\right) \vert G(t')\vert  + 1
\end{align*}

The last inequality holds given that the fraction of bad IDs is always less than $1/6$,  so  $\frac{\vert G(t')\vert }{\vert S(t')\vert } > \frac{5}{6}$ implies that $\vert S(t')\vert  < \frac{6}{5} \vert G(t')\vert $.  This gives our first key inequality:

%Thus, we have:
%$$\left \lceil \frac{5}{12} \vert S(t')\vert  \right \rceil \leq  \left(\frac{5}{12}\right) \vert S(t')\vert  + 1  <    \left(\frac{1}{2}\right) \vert G(t')\vert  + 1$$  

\begin{equation} \label{e:numNew}
\vert G(t') - G(t)\vert  < \left(\frac{1}{2}\right) \vert G(t')\vert  + 1
\end{equation}

\noindent
We can bound the probability that a good ID deleted at time $\tau \geq t$ is from the set $G(\tau) - G(t)$ by dividing both sides of Equation~\ref{e:numNew} by $G(\tau)$ to get
\begin{align*}
    \frac{|G(\tau) - G(t)|}{|G(\tau)|} & \leq \frac{1}{2} + \frac{1}{|G(\tau)|} \\
    & \leq 31/40
\end{align*}
\noindent where the final inequality holds assuming that $1/|G(\tau)| \leq 11/40$, which holds when $n_0 \geq 5$.  From this, we know that the probability that a good ID deleted at time $\tau$ is from $G(t)$ is at least $1-31/40 = 9/40$.

Let $d$  be the number of good IDs that have departed in the interval.  Let the random variable $X$ be the number of IDs in $G(t)$ that have departed during the interval. It follows that $E(X) \geq \frac{9}{40} d$, for $n_0 \geq 5$. Additionally, $X$ stochastically dominates a simpler random variable that counts the number of successes when there are $d$ independent trials, each succeeding with probability $\frac{9}{40}$. 

Hence, by a standard Chernoff bound~\cite{dubhashi:concentration}, we have:
\begin{align*}
Pr(X < (1-\delta)(9/40)d) &\leq e^{-\delta^2 (9/40)d/2}\\
& = e^{-(1/81)(9/40)d/2}\\
&= e^{-d/720}
\end{align*}
\noindent where the second inequality follows from setting $\delta=1/9$. Therefore, when $d \geq \vert G(t)\vert \geq n_0 $, it follows that:
\begin{align*}
Pr(X < (1/5)d) &\leq  e^{-d/720}.
\end{align*}
Therefore, $X \geq \frac{1}{5} d$, with probability of failure at most $e^{-n_0/720}$.  This probability of failure is at most $1/n_0^{\gamma+1}$ for $n_0\geq 720(\gamma+1) \ln n_0$.  To derive a sufficient lower bound on $n_0$, note that 
$$n_0/\ln n_0 \geq n_0/n_0^{1/4} = n^{4/3} \geq 720(\gamma+1)$$
\noindent where the first inequality holds for $n_0\geq 6000$, and the last inequality holds so long as $n_0 \geq (720(\gamma+1))^{4/3} \approx 6454 (\gamma+1)^{4/3}$. By a union bound, $X \geq \frac{1}{5}d$ over all intervals during the lifetime of the system, with probability of failure at most $1/n_0$.

Clearly, $X \leq \vert G(t)\vert $.  So by the above, we have that, \whp, $\frac{1}{5} d \leq \vert G(t)\vert $, which gives:
\begin{equation} \label{e:deletions}
d \leq 5 \vert G(t)\vert 
\end{equation}

%Finally, recall that $a$ is the number of good IDs added during the interval.

Since the number of new good IDs in $S(t')$ is at least $a - d$, then $\vert G(t') - G(t)\vert  \geq a - d$.  Thus:
\begin{align*}
	a & \leq \vert G(t') - G(t)\vert  + d\\
 	&\leq \left(\frac{1}{2}\vert G(t')\vert  + 1 \right) + 5\vert G(t)\vert \\
 	&\leq \frac{1}{2}\left(\vert G(t)\vert  + a \right)  + 1 + 5\vert G(t)\vert \\
 	&\leq \left(\frac{11}{2}\right)\vert G(t)\vert  + \frac{a}{2} + 1
\end{align*}

In the above, the second step follows by applying inequalities~\ref{e:numNew} and~\ref{e:deletions}, and  the third step by noting that $\vert G(t')\vert  \leq \vert G(t)\vert  + a$.  Finally, the lemma follows by isolating $a$ in the last inequality, to get $a \leq 11\vert G(t)\vert  + 2 \leq 11\vert S(t)\vert  + 2$. 	
\end{proof}

\begin{lemma}\label{lem:a-lowerbound} 
Assume that $n_0\geq \max\{1681, (41\smb)^2\}$. Then, $a\geq \dfrac{\vert S(t')\vert }{84(1+\smb^2)}- 2 \geq 8$.
\end{lemma}

\begin{proof}
    Let $d$  be the number of good IDs that have departed in the interval. We start by proving that:
    \begin{align}\label{eq:d_upper}
        d & \leq \smb^2(a + 2)+2.	
    \end{align}
    By Lemma \ref{lem:interval-epochs}, an interval intersects at most two epochs. If two epochs are intersected, let $\rho,\rho'$ be the good join rates over the two epochs intersected, and $\ell,\ell'$ be the lengths of the intersection.  If a single epoch is intersected, let $\rho$ and $\rho'$ both equal the good join rate over that epoch, and let $\ell,\ell'$ both be half the length of the intersection of the interval and the epoch.  Then, in every case, from $\smb-$smoothness, we have:
	\begin{align*}
		a \geq \left\lfloor\frac{\rho\ell}{\smb}\right\rfloor + \left\lfloor\frac{\rho'\ell'}{\smb}\right\rfloor \geq \frac{\rho\ell + \rho'\ell'}{\smb} - 2
	\end{align*}
	For which:
	\begin{align}	\label{eq:newupper}
	    \rho\ell + \rho'\ell' & \leq {\smb} (a+2)
	\end{align}
	
 	We can bound the number of departures using $\smb-$ smoothness:
 	\begin{align*}
 		d &\leq \left\lceil\smb \rho\ell \right\rceil + \left\lceil\smb\rho'\ell' \right\rceil \leq \smb(\rho\ell + \rho'\ell') + 2 \leq \smb^2 (a+2) + 2
 	\end{align*}
 	where the last step follows from Inequality \ref{eq:newupper}, and this yields Equation~\ref{eq:d_upper}.  Next, note that:
	\begin{align}\label{eq:uppersymm}
		\vert G(t') \triangle G(t) \vert  &=  \vert G(t') - G(t)\vert  + \vert G(t) - G(t')\vert \nonumber\\
		     &\leq a + d \nonumber \\
			&\leq a + \smb^2(a+2) + 2 \nonumber \\
			&\leq (1 + \smb^2)(a + 2) 
	\end{align}
	where the second to last step follows from Equation~\ref{eq:d_upper}.

   Since the sets of good and bad IDs are disjoint, we have $\vert S(t')\triangle S(t)\vert  = \vert G(t') \triangle G_t\vert $ $+ \vert B(t') \triangle B(t)\vert $. Rearranging, we get: 	
    \begin{align}\label{eq:lowersymm}
		\vert G(t') \triangle G_t\vert 
		&= \vert S(t')\triangle S(t)\vert  - \vert B(t') \triangle B(t)\vert \nonumber\\
		&\geq \left( \frac{5}{12}\right)\vert S(t')\vert  - \vert B(t') \triangle B(t)\vert \nonumber\\
		&= \left(\frac{5}{12}\right)\vert S(t')\vert  - (\vert B(t') - B(t)\vert  + \vert B(t) - B(t')\vert ) \nonumber\\
		&\geq \left(\frac{5}{12}\right)\vert S(t')\vert  - \frac{\vert S(t')\vert }{6} -\frac{\vert S(t)\vert }{6}\nonumber\\
		&\geq \left(\frac{5}{12}\right)\vert S(t')\vert  - \frac{\vert S(t')\vert }{6} - \frac{10}{7}\left(\frac{\vert S(t')\vert }{6}\right)\nonumber\\
		&\geq \frac{\vert S(t')\vert }{84}
	\end{align}

    \noindent The second step follows from the definition of an interval.  The third step by definition of symmetric difference. The fourth step follows by the fact that the fraction of bad IDs is always less than $1/6$, so  $\vert B(t')$ $ - B(t)\vert  \leq \frac{\vert S(t')\vert }{6}$ and $\vert B(t) - B(t')\vert  \leq \frac{\vert S(t)\vert }{6}$.  The fifth step follows from Lemma \ref{lem:boundsize}.
	
	\smallskip
	
	Finally, combining Inequality~\ref{eq:uppersymm} and Inequality~\ref{eq:lowersymm}, we get:
	\begin{align*}
		(1+\smb^2)(a+2) \geq \frac{\vert S(t')\vert }{84}
	\end{align*}
	
	On isolating $a$ in the above, we get:
	$$a \geq \left(\frac{1}{84(1+\smb^2)} \right) \vert S(t')\vert  - 2$$ 
	Since $n_0 \geq (41\smb)^2$, we have that $\smb^2 \leq n_0/1681$.  So, we get:
	
	%Finally, substituting $\smb \leq \frac{\sqrt{n_0}}{16}$ 
	\begin{align*}
	a & \geq \left(\frac{1}{84(1+\frac{n_0}{1681})} \right)\vert S(t') \vert - 2\\
	  & \geq \left(\frac{1}{84(\frac{2n_0}{1681})} \right)\vert S(t') \vert - 2\\
	   & \geq \left( \frac{10}{n_0}  \right)\vert S(t') \vert - 2\\
	   & \geq \left( \frac{10}{n_0}  \right)\vert n_0 \vert - 2\\
	   & \geq 8
	\end{align*}
	\noindent where the second line follows for $n_0\geq 1681$ and the last line follows since $\vert S(t')\vert  \geq n_0$.
\end{proof}

	%\noindent 
	
	%$$a \geq \left(\frac{1}{12(1+\frac{n_0}{144})} \right) \vert %S(t')\vert  - 2 \geq \left(\frac{10}{n_0}\right)n_0 - 2 = 8$$
	%where the second step holds since $\vert S(t')\vert  \geq n_0$. 

\noindent We now bound $\JoinEst$ with respect to the good join rate over the interval. 

\begin{lemma}\label{lem:boundJoinEst}
Let $\JoinEst$ be the estimated join rate at the end of any interval and $\jInterval$  be the good join rate during the interval. Then:
$$\jInterval/21 \leq \JoinEst \leq 210\smb^2 \jInterval$$ 
\end{lemma}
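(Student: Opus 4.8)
The plan is to bound $\JoinEst = |S(t')|/(t'-t)$ from above and below by relating $|S(t')|$ to the number of good IDs that joined during the interval, $a$, and relating $a$ to the good join rate $\jInterval$ over the interval. Recall $\jInterval$ is, by definition, $a/(t'-t)$. So the whole task reduces to sandwiching $|S(t')|$ between constant (and $\beta^2$) multiples of $a$.

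\textbf{Upper bound on $\JoinEst$.} First I would lower bound $|S(t')|$ in terms of $a$. By Lemma~\ref{lem:a-lowerbound}, $a \geq \frac{|S(t')|}{12(1+\smb^2)} - 2$, which on rearranging gives $|S(t')| \leq 12(1+\smb^2)(a+2)$. Combined with $a \geq 8$ (also from Lemma~\ref{lem:a-lowerbound}), we get $a + 2 \leq (5/4)a$, so $|S(t')| \leq 15(1+\smb^2)a \leq 30\smb^2 a$ (using $1+\smb^2 \leq 2\smb^2$ since $\smb \geq 1$). Dividing by $t'-t$: $\JoinEst = |S(t')|/(t'-t) \leq 30\smb^2 \cdot a/(t'-t) = 30\smb^2 \jInterval$.

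\textbf{Lower bound on $\JoinEst$.} Here I would upper bound $|S(t')|$ in terms of $a$. From Lemma~\ref{lem:a_uppbound}, $a \leq 23|S(t)| + 4$, and from Lemma~\ref{lem:boundsize}, $|S(t)| \leq \frac{13}{6}|S(t')|$, so $|S(t')| \geq \frac{6}{13}|S(t)| \geq \frac{6}{13}\cdot\frac{a-4}{23}$. Since $a \geq 8$, we have $a - 4 \geq a/2$, giving $|S(t')| \geq \frac{6}{13\cdot 23}\cdot\frac{a}{2} = \frac{3a}{299} \geq \frac{a}{100}$. Dividing by $t'-t$ yields $\JoinEst = |S(t')|/(t'-t) \geq (1/100)\,a/(t'-t) = (1/100)\jInterval$.

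The main obstacle is not really in this lemma itself — everything here is bookkeeping assembled from the three preceding lemmas (Lemmas~\ref{lem:boundsize}, \ref{lem:a_uppbound}, \ref{lem:a-lowerbound}) plus the identity $\jInterval = a/(t'-t)$. The genuine work was already done in establishing those, particularly Lemma~\ref{lem:a-lowerbound}, which needed $\smb$-smoothness across the (at most two, by Lemma~\ref{lem:interval-epochs}) epochs intersected, and Lemma~\ref{lem:a_uppbound}, which required the stochastic-dominance-plus-Chernoff argument to control departures of new good IDs \whp. The only care needed here is keeping the constants honest and verifying the slack from $a \geq 8$ absorbs the additive $\pm 2$, $\pm 4$ terms; I would double-check that the constant $100$ is large enough (it is, with room to spare) and that $30\smb^2$ is not accidentally loose enough to break the downstream application in Theorem~\ref{t:JoinEst}.
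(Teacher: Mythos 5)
Your proof is correct and follows essentially the same route as the paper: assemble Lemmas~\ref{lem:boundsize}, \ref{lem:a_uppbound}, and~\ref{lem:a-lowerbound} with $a \geq 8$ and the identity $\jInterval = a/(t'-t)$, obtaining the same constants ($15(1+\smb^2) \leq 30\smb^2$ and $\tfrac{3}{299} \geq \tfrac{1}{100}$). One cosmetic slip: in the prose you describe wanting to ``lower bound $|S(t')|$'' for the upper bound on $\JoinEst$ and ``upper bound $|S(t')|$'' for the lower bound, but these labels are swapped relative to the inequalities you actually (and correctly) write and use.
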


\begin{proof}
	At the end of the interval, \goodJest sets the estimate of the good join rate as:
	\begin{align*}
	    \JoinEst &= \frac{\vert S(t')\vert }{t' - t}\\ 
	        &\geq \frac{7}{10}\left(\frac{\vert S(t)\vert }{t'-t}\right)\\
	        &\geq \frac{7}{10}\left(\frac{(a-2)/11}{t'-t}\right)\\
	        &\geq \frac{7}{110}\left(\frac{a- (a/4)}{t'-t}\right)\\
	        & \geq \frac{21}{440}\left(\frac{a}{t'-t}\right)\\
	        &\geq \frac{\jInterval}{21}
	\end{align*}
The second step follows from Lemma~\ref{lem:boundsize}, the third step from Lemma \ref{lem:a_uppbound} using $a\leq 11 \vert S(t)\vert  + 2$, and the fourth step from  Lemma~\ref{lem:a-lowerbound} using $a\geq 8$. Similarly:
	\begin{align*}
		\JoinEst &= \frac{\vert S(t')\vert }{t'-t}\\
		    &\leq \frac{84(1+\smb^2)(a+2)}{(t'-t)}\\
		    &\leq {84(1+\smb^2)}\left(\frac{5a}{4(t'-t)}  \right)\\
		    &\leq 210 \smb^2 \jInterval
	\end{align*}
The second and third steps follow from Lemma~\ref{lem:a-lowerbound} using $a \geq \frac{\vert S(t')\vert }{84(1+\smb^2)}- 2$  and $a\geq 8$; the last step holds since $\smb \geq 1$ implies that $1+\beta^2 \geq 2 \beta^2$.
\end{proof}

\begin{lemma}\label{lem:epoch-interval-lim}
Consider an epoch that intersects any interval. Suppose $\rho$ is the good join rate over the  epoch, and $\jInterval$ is the good join rate over the interval. Then:
$$ \frac{4}{5\sma\smb}\rho \leq\jInterval \leq \frac{8}{3}\sma\smb\rho$$  
\end{lemma}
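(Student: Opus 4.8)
The plan is to relate the good join rate $\jInterval$ over the interval to the good join rates of the (at most two, by Lemma~\ref{lem:interval-epochs}) epochs the interval intersects, and then use $\sma$-smoothness to pass between any particular epoch that intersects the interval and those at-most-two epochs. First I would set up notation: let the interval intersect epochs with good join rates $\rho_1$ (and possibly $\rho_2$), with intersection lengths $\ell_1$ (and $\ell_2$), so that the total interval length is $\ell_1 + \ell_2$ and the number of good joins in the interval is $a$. The heart of the argument is the two-sided estimate
\begin{align*}
\frac{\rho_1 \ell_1 + \rho_2 \ell_2}{\smb} - 2 \;\leq\; a \;\leq\; \smb(\rho_1 \ell_1 + \rho_2 \ell_2) + 2,
\end{align*}
which is exactly what $\smb$-smoothness gives on each intersected piece (this is essentially the computation already carried out inside the proof of Lemma~\ref{lem:a-lowerbound}, using the floor/ceiling bounds $\lfloor \rho_j \ell_j/\smb\rfloor$ and $\lceil \smb \rho_j \ell_j\rceil$). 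Dividing by $\ell_1+\ell_2$, the quantity $\jInterval = a/(\ell_1+\ell_2)$ is, up to the $\smb$ factor and the additive $2$ terms, a weighted average of $\rho_1$ and $\rho_2$; hence $\jInterval$ lies between roughly $\min(\rho_1,\rho_2)/\smb$ and $\smb\max(\rho_1,\rho_2)$.

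Next I would invoke $\sma$-smoothness to compare $\rho_1$ and $\rho_2$: since the interval intersects at most two epochs, those two epochs are consecutive, so $\rho_2 \in [\rho_1/\sma, \sma\rho_1]$; and the arbitrary epoch $\rho$ named in the statement, being one of those intersected epochs (or differing from one of them — in fact it must be one of the intersected epochs by hypothesis), satisfies $\rho \in [\min(\rho_1,\rho_2)/\sma, \sma \max(\rho_1,\rho_2)]$, and conversely $\min(\rho_1,\rho_2) \ge \rho/\sma$, $\max(\rho_1,\rho_2)\le \sma\rho$. Combining with the previous display collapses everything to $\jInterval \ge \rho/(\sma\smb)$ and $\jInterval \le \sma\smb\rho$, up to constants and the additive terms. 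The constants $4/5$ and $8/3$ come from absorbing the additive $\pm 2$ using the lower bound $a \ge 8$ from Lemma~\ref{lem:a-lowerbound} (so that $a \pm 2$ is within a factor $4/5$ or $5/4$ of $a$, say), and from the fact that in the single-epoch case one sets $\ell_1=\ell_2$ equal to half the intersection length, which introduces a harmless factor.

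The main obstacle I anticipate is bookkeeping the additive constants cleanly: the $\smb$-smoothness bound has floors and ceilings, so one gets $a \ge \lfloor \rho_1\ell_1/\smb\rfloor + \lfloor \rho_2\ell_2/\smb\rfloor \ge (\rho_1\ell_1+\rho_2\ell_2)/\smb - 2$ and similarly $a \le \smb(\rho_1\ell_1+\rho_2\ell_2) + 2$, and one must carry these $\pm 2$'s through the division by $\ell_1+\ell_2$ and the $\sma$-smoothness step without letting them blow up the multiplicative constant. The clean way is: first show $a \ge 8$ (done), so $a - 2 \ge (3/4)a$ and $a + 2 \le (5/4)a$; then $\jInterval = a/(\ell_1+\ell_2)$ satisfies $(\rho_1\ell_1 + \rho_2\ell_2)/(\ell_1+\ell_2) \le \smb\, a/(\ell_1+\ell_2) + 2/(\ell_1+\ell_2)$, and conversely; and then bound $2/(\ell_1+\ell_2)$ in terms of $\jInterval$ using $a\ge 8$ again. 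A secondary subtlety is making sure the hypothesis "an epoch that intersects any interval" is used correctly — the named $\rho$ is guaranteed to be one of the at-most-two intersected epochs, which is what lets me pass from the weighted average of $\rho_1,\rho_2$ to $\rho$ itself via a single application of $\sma$-smoothness (the two intersected epochs are adjacent, so each is within $\sma$ of the other, hence within $\sma$ of any epoch between — but here there is no epoch strictly between, so it is immediate).
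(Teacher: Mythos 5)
Your proposal is correct and matches the paper's proof in all essentials: decompose the interval into its at most two intersected epochs (Lemma~\ref{lem:interval-epochs}), apply $\smb$-smoothness to each piece, use $\sma$-smoothness to relate the two epoch rates to the one named in the statement, and absorb the $\pm 2$ floor/ceiling slack via $a \geq 8$ from Lemma~\ref{lem:a-lowerbound}. The only cosmetic difference is the single-epoch convention (the paper takes $\rho'=\ell'=0$ here rather than splitting as in Lemma~\ref{lem:a-lowerbound}), which does not affect the argument.
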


\begin{proof}
    Fix an interval that starts at time $t$ and ends at time $t'$. By Lemma~\ref{lem:interval-epochs}, we know the interval intersects at most two epochs. Let $\rho$ and $\rho'$ be the join rate of good IDs over the two epochs intersecting the interval over say lengths $\ell$ and $\ell'$, respectively.  If only a single epoch is intersected, let $\rho'$ and $\ell'$ both be $0$.  Then, by $\smb$-smoothness (Definition~\ref{def:smoothness}), we have:

	\begin{align*}
		\iJRate &\geq \frac{1}{t'-t} \left(\left\lfloor\frac{\rho\ell}{\smb}\right\rfloor + \left\lfloor\frac{\rho'\ell'}{\smb}\right\rfloor\right) \\
		 &\geq \frac{1}{t'-t} \left(\frac{\rho\ell}{\smb} +  \frac{\rho'\ell'}{\smb} - 2\right) \\
		 		&\geq \frac{1}{\smb}\left(\rho\left(\frac{\ell}{t'-t}\right) +  \rho'\left(\frac{\ell'}{t'-t}\right)\right)-\frac{2}{t'-t}\\
		 &\geq \frac{1}{\smb}\left(\rho\left(\frac{\ell}{t'-t}\right) + \left(\frac{\rho}{\sma}\right) \frac{\ell'}{t'-t}\right)-\frac{2}{\ell}\\
		&\geq \frac{\rho}{\smb\sma}\left(  \frac{\ell+\ell'}{t'-t} \right)-\frac{(\iJRate(t'-t)/4)}{t'-t}\\
		&\geq \frac{\rho}{\sma\smb}-\frac{\iJRate}{4}
	\end{align*}
	The fourth step follows from $\sma$-smoothness (Definition~\ref{def:smoothness}), and the fifth step follows from Lemma \ref{lem:a-lowerbound}, specifically that $a \geq 8$. Isolating $\iJRate$ in the last inequality, we get the lower bound.
	
	\medskip
		
	\noindent Next, we prove the upper bound. Using $\smb$-smoothness:
	\begin{align*}
		\iJRate &\leq \frac{1}{t'-t}\left( \left\lceil\smb\rho\ell\right\rceil + \left\lceil\smb\rho'\ell'\right\rceil \right)\\ 
		&\leq  \smb\left( \rho\left(\frac{\ell}{t'-t} \right)+ \sma\rho\left(\frac{\ell'}{t'-t}\right) \right) + \frac{2}{t'-t}\\
		&\leq  \smb\left( \rho\left(\frac{\ell}{t'-t} \right)+ \sma\rho\left(\frac{\ell'}{t'-t}\right) \right) + \frac{(\iJRate(t'-t)/4)}{(t'-t)}\\
		&\leq (1+\sma)\smb\rho + \frac{\iJRate}{4}
	\end{align*}
The second inequality follows from the $\sma$-smoothness, and the third inequality follows from Lemma \ref{lem:a-lowerbound}, specifically that $a \geq 8$. To see the last step, note that since $\alpha\geq 1$, $1+\alpha\leq 2\alpha$, so we have:
\begin{align*}
		\iJRate &\leq (2\sma)\smb\rho + \frac{\iJRate}{4}.
\end{align*}
Isolating $\iJRate$ in the last inequality, we obtain the upper bound.
\end{proof}

The next lemma makes use of Lemmas~\ref{lem:interval-epochs},~\ref{lem:boundJoinEst} and \ref{lem:epoch-interval-lim}. 

\begin{lemma}\label{lem:prev_current}
Let $\JoinEst$ be the estimated join rate at the end of the interval and $\jInterval$ be the join rate during the \emph{next} interval. Then:
$$ 1/\left( 70 \sma^3\smb^2 \right) \jInterval \leq \JoinEst \leq 700\sma^3\smb^4 \jInterval$$
\end{lemma}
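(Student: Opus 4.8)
The plan is to derive the statement by chaining the three results flagged in the sentence just before it — Lemma~\ref{lem:boundJoinEst}, Lemma~\ref{lem:epoch-interval-lim}, and Lemma~\ref{lem:interval-epochs} — together with $\sma$-smoothness from Definition~\ref{def:smoothness}. Write $J_{\mathrm{cur}}$ for the good join rate during the current interval (the one whose right endpoint $t'$ is the left endpoint of the next interval), and keep $\jInterval$ for the good join rate during the next interval, as in the statement. The first step is immediate: since $\JoinEst$ is computed at the end of the current interval, Lemma~\ref{lem:boundJoinEst} places it within a $[1/100,\ 30\smb^2]$ multiplicative factor of $J_{\mathrm{cur}}$.

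The second step is to connect $J_{\mathrm{cur}}$ and $\jInterval$ through a common (or nearly common) epoch. The key observation is that the two intervals abut at the single time $t'$: letting $\epochRate$ be the good join rate of the epoch containing $t'$, the current interval intersects that epoch, and the \emph{earliest} epoch met by the next interval is either that same epoch or the one immediately after it. Letting $\epochRate'$ be the good join rate of that earliest epoch met by the next interval, the two epochs are at most one index apart. Then Lemma~\ref{lem:epoch-interval-lim}, applied once to the current interval and its epoch and once to the next interval and its epoch, sandwiches $J_{\mathrm{cur}}$ and $\jInterval$ within a $\bigl[\tfrac{4}{5\sma\smb},\ \tfrac{8}{3}\sma\smb\bigr]$ factor of $\epochRate$ and $\epochRate'$ respectively, while $\sma$-smoothness gives $\epochRate/\epochRate' \in [1/\sma,\ \sma]$ because the epochs are consecutive (or identical).

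The third step is to multiply the four factor bounds. For the lower bound this gives $\JoinEst \ge \tfrac{1}{100}\cdot\tfrac{4}{5\sma\smb}\cdot\tfrac{1}{\sma}\cdot\tfrac{3}{8\sma\smb}\,\jInterval = \tfrac{3}{1000\sma^3\smb^2}\,\jInterval \ge \tfrac{1}{334\sma^3\smb^2}\,\jInterval$, where the last chained factor rearranges $\jInterval \le \tfrac{8}{3}\sma\smb\,\epochRate'$; for the upper bound it gives $\JoinEst \le 30\smb^2\cdot\tfrac{8}{3}\sma\smb\cdot\sma\cdot\tfrac{5}{4}\sma\smb\,\jInterval = 100\sma^3\smb^4\,\jInterval$, where the last factor rearranges $\jInterval \ge \tfrac{4}{5\sma\smb}\,\epochRate'$. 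These are exactly the two claimed bounds.

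The one place I expect real friction is the ``bridge epoch'' argument in the second step: arguing cleanly that one epoch met by the current interval and one met by the next interval can be chosen at most one index apart, especially in the corner case where $t'$ coincides with an epoch boundary, so that up to three consecutive epochs are potentially in play. This is where Lemma~\ref{lem:interval-epochs} is used — to forbid either interval from spanning more than two epochs — together with a fixed convention for which epoch ``owns'' the boundary time $t'$. Once that is pinned down, the remainder is a routine product of constants, and the small slack in $\tfrac{3}{1000}\ge\tfrac{1}{334}$ comfortably absorbs the rounding in the stated constant.
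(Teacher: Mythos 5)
Your proposal matches the paper's proof in both route and constants: chain Lemma~\ref{lem:boundJoinEst}, two applications of Lemma~\ref{lem:epoch-interval-lim} (one per interval, each through its epoch nearest the shared boundary time $t'$), and one step of $\sma$-smoothness bridging those two epochs. The bridge concern you flag is not a real gap --- because the two intervals abut at $t'$, the last epoch met by the current interval and the first epoch met by the next are necessarily the same epoch or consecutive ones, so a single application of $\sma$-smoothness suffices and no third epoch intervenes.
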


\begin{proof}
    By Lemma~\ref{lem:interval-epochs}, intervals $i$ and $i-1$ will intersect at most $4$ epochs.  Let $\rho_1$ and $\rho_2$ be the join rate of good IDs over the two epochs intersecting interval $i-1$.  If there is only one epoch intersected, let $\rho_2 = \rho_1$.  Let $\rho_3$ be the good join rate over the first epoch that interval $i$ intersects.

	\medskip
	
	\noindent{\bf Lower Bound.} Applying Lemma~\ref{lem:epoch-interval-lim} to interval $i-1$, we have:
	\begin{align}\label{eq:lowprev}
		\iJRate_{i-1} &\geq \frac{4}{5\sma\smb}\rho_2
	\end{align}
	\noindent Applying Lemma~\ref{lem:epoch-interval-lim} to interval $i$, we have:
	\begin{align*}
		\iJRate_i &\leq \frac{4}{3}\left(1 + \sma\right)\smb\rho_3\nonumber \\
		 &\leq \frac{4}{3}\left(1 + \sma\right)\smb\sma\rho_2 \nonumber\\
		 &\leq \frac{4}{3}(1+\sma)\sma\smb\left(\frac{5\sma\smb}{4}\iJRate_{i-1}\right) \nonumber\\
		 &= \frac{5}{3}(1+\sma)\sma^2\smb^2\iJRate_{i-1} \nonumber\\
		 &\leq \frac{10}{3}\sma^3\smb^2\iJRate_{i-1}\\
		 &\leq \frac{10}{3}\sma^3\smb^2 \left( 21 \JoinEst_{i-1} \right)\\
		 &\leq 70\sma^3\smb^2\JoinEst_{i-1}
	\end{align*}
	The second step follows from $\sma$-smoothness, the third step follows by isolating $\rho_2$ in Inequality \ref{eq:lowprev}, the fifth step holds since $\sma \geq 1$ from $\sma$-smoothness, the sixth step follows from Lemma \ref{lem:boundJoinEst}, since $\iJRate_{i-1} \leq 21 \JoinEst_{i-1}$.
	
	\smallskip
    
	\noindent Finally, isolating $\JoinEst_{i-1}$ in the last step of the above equation, we obtain the lower bound of our lemma statement. 
	
	\medskip

	\noindent{\bf Upper Bound.} Similarly, by Lemma \ref{lem:epoch-interval-lim}, the good-ID join rate in interval $i-1$ is:
	\begin{align}\label{eq:upperprev}
		\iJRate_{i-1} &\leq \frac{4}{3}\left(1+\sma\right)\smb\rho_2  
	\end{align}
	\noindent and for interval $i$ is:	\begin{align*}
		\iJRate_i &\geq \frac{4}{5\sma\smb}\rho_3\\
		& \geq \frac{4}{5\smb\sma^2}\rho_2\\ 
		& \geq \frac{4}{5\smb\sma^2}\left(\frac{3}{4(1+\sma)\smb}\iJRate_{i-1}\right) \\
		& \geq \frac{3}{10\sma^3\smb^2}\iJRate_{i-1}\\
		& \geq \frac{\JoinEst_{i-1}}{700\sma^3\smb^4}.
	\end{align*}
	The second step follows from the $\sma$-smoothness, the third step follows from inequality \ref{eq:upperprev}, and the fourth step holds since $\sma \geq 1$ from $\sma$-smoothness. The fifth step follows from Lemma \ref{lem:boundJoinEst}, since: $\iJRate_{i-1} \geq \JoinEst_{i-1}/(210 \smb^2)$.
	
	\medskip
	
	\noindent Finally, isolating $\JoinEst_{i-1}$ in the last step of the above equation, we obtain the lower bound. 
\end{proof}

\noindent Now we give the proof of Theorem~\ref{t:JoinEst}.  We note that this theorem holds with high probability in $n_0$ for any interval, and also, by a union bound, it holds for the entire lifetime of the system over all intervals.

\begin{proof}
	Fix a time step $\tau$ in the interval. Let $\rho_{\tau}$ be the good join rate over the epoch containing $\tau$. Combining the bounds on $\JoinEst$ from Lemma \ref{lem:prev_current} with those from Lemma \ref{lem:epoch-interval-lim},we get:
	\begin{equation*}\label{eq:final}
		\left(\frac{4}{5\sma\smb}\right) \left(\frac{1}{70\sma^3\smb^2}\right)\rho_{\tau} \leq \JoinEst \leq 700 \sma^3 \smb^4 \left( \frac{8}{3}\sma\smb\right) \rho_{\tau}
	\end{equation*}
	Simplifying this equation yields the result. 
\end{proof}

%%%%%%%%%%%%%%%%%%%%%%%%%%%%%%%%%%%%%%%%%%%%%%%%
%%%%%%%%%%%%%%%%%%%%%%%%%%%%%%%%%%%%%%%%%%%%%%%%
\subsection{Analysis of \ergo} \label{s:analErgo}

For any iteration $i$, let {\boldmath{$B_i$}} and {\boldmath{$G_i$}} respectively denote the number of bad and good IDs in the system at the end of iteration $i$, and we let {\boldmath{$N_i = B_i + G_i$}}. 

%%%%%%%%%%%%%%%%%%%%%%%%%%%%%%%%%%%%%%%%%%%

\begin{lemma}\label{lem:bound_b}
    For all iterations $i \geq 0$:  
    $$B_i \leq N_i\kappa/({1-\epsilon}).$$
\end{lemma}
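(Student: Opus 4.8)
The plan is to prove Lemma~\ref{lem:bound_b} by induction on the iteration number $i$, tracking how the bad fraction can grow during Step~1 (entrance phase) of an iteration and how it is reset during Step~2 (purge). The key structural facts are: (i) at the end of any purge, membership is reset to exactly those IDs that solved a $1$-hard RB-challenge within $1$ round, and the resource-bounded adversary can solve at most a $\kappa = 1/18$ fraction of these; and (ii) between purges, the number of joining-and-departing IDs is capped at $|S(\tau)|/11$, which limits how many fresh bad IDs the adversary can inject relative to the iteration-start population.

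First I would establish the base case: immediately after initialization, $S(0)$ consists of IDs that solved a $1$-hard challenge, so by the adversary's power bound $B_0 \leq \kappa N_0 = N_0/18 \leq N_0/(18(1-\epsilon))$ for the relevant $\epsilon$. For the inductive step, assume the bound holds at the end of iteration $i-1$ with population $N_{i-1}$. During iteration $i$, before the purge, at most $N_{i-1}/11$ join/depart events occur (Step~2 trigger condition, with $|S(\tau)| = N_{i-1}$ at iteration start since $\tau$ was reset at the previous purge). In the worst case all of these are bad-ID joins and the only departures are good, so the bad count can rise to roughly $B_{i-1} + N_{i-1}/11$ while the good count only falls; but crucially the adversary cannot both inject many new bad IDs and survive the subsequent purge for free — I would argue that the quantity that matters is the ratio at the moment the purge's $1$-hard challenge is issued, and then the purge re-filters everyone. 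After the purge, every surviving ID (good or bad) solved a fresh $1$-hard challenge in $1$ round, so regardless of the pre-purge composition, the adversary can retain at most a $\kappa$-fraction of the post-purge set: $B_i \leq \kappa N_i = N_i/18$. The slack factor $1/(1-\epsilon)$ absorbs lower-order additive terms (the $+1$'s from ceilings, the at-most-one extra event from serialization) and possibly the small probability-of-failure events from the Chernoff bounds invoked in earlier lemmas.

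The main obstacle I expect is handling the transient interval \emph{within} an iteration, before the purge fires: during Step~1 the fraction of bad IDs is not immediately reset, so I must show it never exceeds $1/6$ (the weaker bound of Theorem~\ref{thm:new-main-upper}) even at its peak. This requires bounding the total bad-ID injection between two consecutive purges. Since each joining ID pays an entrance cost of $1$ plus the number of IDs joined in the last $1/\JoinEst$ seconds, and $\JoinEst$ is a good approximation of the good join rate by Theorem~\ref{t:JoinEst}, a burst of $m$ bad joins within a $1/\JoinEst$ window costs the adversary $\Theta(m^2)$ resources; combined with the adversary's spend-rate bound and the $\kappa$-fraction hashrate assumption, this forces $m$ to be small relative to $N_{i-1}$, so even at the worst moment the bad fraction stays below the claimed threshold. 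I would formalize this by showing the number of bad IDs present at any point in iteration $i$ is at most $B_{i-1} + (\text{adversary injection bound})$, then dividing by the (possibly shrunken but lower-bounded) total population, and checking the arithmetic closes with the $1/(18(1-\epsilon))$ target after the purge and with $1/6$ in the interim.

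A subtlety to flag: the statement of the lemma carries an unexplained $\epsilon$; I would interpret $\epsilon$ as a small constant (depending on $n_0$ and the failure probability, tending to $0$ as $n_0 \to \infty$) introduced precisely to convert the clean bound $B_i \leq \kappa N_i$ into one robust to the additive rounding and concentration error terms, and I would state explicitly at the start of the proof what $\epsilon$ is so the induction's algebra is self-contained. With that convention fixed, the induction goes through cleanly because the purge is a "hard reset" that makes iteration $i$'s guarantee depend only on the adversary's power during that purge, not on the accumulated history.
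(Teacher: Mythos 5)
Your core insight is right: the purge at the end of each iteration is a hard reset, and since every surviving ID must solve a fresh $1$-hard challenge within one round, the adversary (holding a $\kappa\le 1/18$ fraction of resources) retains at most a $\kappa$-fraction of the re-filtered population. The base case via \genID initialization is also as in the paper. However, two things in your proposal are off target.

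First, you misidentify what $\epsilon$ does. It is \emph{not} slack for ceiling terms, serialization $+1$'s, or Chernoff failure probability. The paper's model allows up to an $\epsilon$-fraction of \emph{good} IDs to depart during any single round, including the purge round. The $\kappa$-fraction bound applies to $N^s_i$, the population at the \emph{start} of the purge, while $N_i$ is measured at the \emph{end} of the iteration; good departures during the purge round give only $N_i \ge (1-\epsilon)N^s_i$. Thus $B_i \le \kappa N^s_i \le \kappa N_i/(1-\epsilon) = N_i/(18(1-\epsilon))$. Your ``clean'' claim $B_i \le \kappa N_i$ is therefore not directly attainable and is where $\epsilon$ enters, not as asymptotic fudge. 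This is a genuine gap: with your interpretation you would not know how to instantiate $\epsilon$ or why the inequality is tight as stated.

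Second, you spend most of your effort on bounding the bad fraction \emph{during} Step~1 of an iteration, before the purge. That is not what this lemma needs: Lemma~\ref{lem:bound_b} only bounds $B_i$ at the \emph{end} of iteration $i$, which the purge alone handles, with no induction required (the paper's proof is not inductive). The mid-iteration bound you worry about is the subject of Lemma~\ref{lem:badlesshalf-full}, and the mechanism you propose there (entrance costs forcing a small bad-join burst) is not the one the paper uses and would not close: the entrance-cost / spend-rate bound constrains resource expenditure, not directly the \emph{count} of joined bad IDs relative to $N_{i-1}$. The actual cap comes from the iteration termination rule, $n_i^a + n_i^d \le N_{i-1}/11$, which directly bounds $b_i^a + g_i^d$ and feeds the worst-case ratio analysis in Lemma~\ref{lem:badlesshalf-full}. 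For Lemma~\ref{lem:bound_b} itself, you should drop the transient analysis entirely.
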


\begin{proof}
 Recall that the server issues all IDs a $1$-difficult challenge prior to iteration $0$. This ensures the fraction of bad IDs at the end of iteration $0$ is at most $\kappa$. Thus, our lemma statement holds for $i=0$.
%\leq 1/18$
Next, note that in Step 2 ending each iteration $i>0$, the server issues a $1$-difficult challenge to all IDs. Let $N^s_i$ be the number of IDs in the system at the start of the purge. Recall from Section \ref{sec:model-main} that at most an $\epsilon-$fraction of good IDs can depart in a round. Thus, $N_i \geq  N^s_i - \epsilon N^s_i = (1-\epsilon)N^s_i$.

Since the adversary holds at most a $\kappa$ fraction of the resource, the number of bad IDs in the system at the end of iteration $i$ is at most $\kappa N^s_i \leq \kappa N_i/(1-\epsilon)$. 
\end{proof}

%%%%%%%%%%%%%%%%%%%%%%%%%%%%%%%%%%%%%%%%%%%
%%%%%%%%%%%%%%%%%%%%%%%%%%%%%%%%%%%%%%%%%%%

\begin{lemma}\label{lem:badbounded}
The fraction of bad IDs is always less than $3\kappa$, provided $\kappa \leq 1/18$. %which is $1/6$ for $\kappa \leq 1/18$.
\end{lemma}

\begin{proof}
    Fix some iteration $i>0$.  Let $n_i^a, b_i^a$ denote the total, and bad IDs that arrive over iteration $i$. Let  $n_i^d, g_i^d$  denote the total, and good IDs that depart over iteration $i$. 

    Recall that at the end of an iteration $n_i^a + n_i^d \geq N_{i-1} /11$.  Thus, at any point during  the iteration, we have: 
    \begin{eqnarray}
        b_i^a + g_i^d &\leq& n_i^a + n_i^d \leq N_{i-1}/11 \label{eqn:bad-plus-good}
    \end{eqnarray}

    We are interested in the maximum value of the ratio of bad IDs to the total IDs at any point during the iteration.  Thus, we pessimistically assume all additions of bad IDs and removals of good IDs come first in the iteration.  This leads us to examine the maximum value of the ratio over the iteration:
    $$\frac{B_{i-1} + b_i^a}{N_{i-1} + b_i^a - g_i^d} \leq \frac{N_{i-1}\kappa/(1-\epsilon) + b_i^a}{N_{i-1} + b_i^a - g_i^d}$$  
    where the above inequality follows from Lemma~\ref{lem:bound_b}. By Equation~\ref{eqn:bad-plus-good}, we have $g_i^d \leq N_{i-1}/11 - b_i^a$.  Thus, we have:
    \begin{eqnarray*}
         \frac{N_{i-1}\kappa/(1-\epsilon) + b_i^a}{N_{i-1} + b_i^a - g_i^d}
        & \leq &  \frac{N_{i-1}\kappa/(1-\epsilon) + b_i^a}{N_{i-1} + b_i^a - (N_{i-1}/11-b_i^a)}\\
        & = & 3 \kappa \left(\frac{N_{i-1}/(3(1-\epsilon)) + b_i^a/(3\kappa)}{10N_{i-1}/11 + 2b_i^a}\right)\\
        & \leq & 3 \kappa \left(\frac{N_{i-1}/(3(1-\epsilon))}{(10/11)N_{i-1}} + \frac{b_i^a/(3\kappa)}{(10/11)N_{i-1}}\right)\\
        & \leq &  3 \kappa \left(\frac{11}{30(1-\epsilon)} + \frac{(1/(33\kappa))N_{i-1}}{(10/11)N_{i-1}}\right)\\
        & < & 3\kappa \left(\frac{11}{30(11/12)} + 3/5\right)\\
        & = &  3\kappa
    \end{eqnarray*}
    The fourth step follows since $b_i^a \leq N_{i-1}/11$ by Equation~\ref{eqn:bad-plus-good} and the fifth step holds for $\epsilon < 1/12$. 
\end{proof} 
  
Next, we prove the bound on the good spend rate. To begin, we partition every interval of length $\ell$ into $\lceil{\ell \JoinEst}\rceil$ \defn{sub-intervals}  of length at most ${1}/{\JoinEst}$, where $\JoinEst$ is the estimate of the good join rate used in the interval. Then we have the following lemmas.

\begin{lemma}\label{lem:num_bad_sub}
	Fix a sub-interval $j$. Let $\mathcal{T}_{j}$ be the total spending of the adversary in sub-interval $j$. Then, the number of bad IDs that join in this sub-interval is at most $\sqrt{2\mathcal{T}_{j}}$. 
\end{lemma}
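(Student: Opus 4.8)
The plan is to bound the number $m$ of bad IDs that join during sub-interval $j$ by exhibiting a quadratic lower bound on the adversary's entrance-cost expenditure in that sub-interval, and then inverting. Recall that, by construction, sub-interval $j$ lies inside a single interval of \goodJest, so the estimate $\JoinEst$ is a fixed value throughout it, and the sub-interval has length at most $1/\JoinEst$. I would fix this sub-interval and list the bad IDs that join during it, in order of join time, as $v_1, \dots, v_m$.

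The key observation is that when $v_\ell$ joins, each of $v_1, \dots, v_{\ell-1}$ has already joined, and --- because the entire sub-interval spans at most $1/\JoinEst$ seconds --- all of them joined within the preceding $1/\JoinEst$ seconds. Hence the count of IDs that have joined in the last $1/\JoinEst$ seconds, which \ergo uses in Step~1 of Figure~\ref{alg:gmcom} to set $v_\ell$'s entrance cost, is at least $\ell-1$ (counting only the bad IDs $v_1,\dots,v_{\ell-1}$; good joins can only make it larger), so $v_\ell$ is charged an RB-challenge of hardness at least $1 + (\ell-1) = \ell$. Summing over all $m$ bad joins, the adversary pays at least $\sum_{\ell=1}^{m} \ell = m(m+1)/2 \ge m^2/2$ in entrance costs alone during the sub-interval, and this is included in $\mathcal{T}_j$. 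Rearranging $m^2/2 \le \mathcal{T}_j$ yields $m \le \sqrt{2\mathcal{T}_j}$, as claimed.

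The one point that needs care is the ``of the current iteration'' qualifier in the entrance-cost rule: a purge (Step~2 of \ergo) resets the running tally of recent joins, so if $v_1,\dots,v_m$ straddled a purge the telescoping $1,2,\dots,m$ would no longer be valid. I would dispatch this by arguing that we may take the sub-interval to lie within a single iteration of \ergo --- cutting the sub-interval at any purge time that falls inside it, which only refines the partition and is harmless for the later summations --- so that no purge occurs strictly inside the sub-interval under consideration and the ``recent joins'' count is non-decreasing across $v_1,\dots,v_m$; then the argument above goes through verbatim. I expect this alignment bookkeeping among \goodJest-intervals, \ergo-iterations, and sub-intervals to be the only real obstacle, since the counting argument itself is elementary.
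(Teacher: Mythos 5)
Your proof is correct and uses the same counting argument as the paper's one-line proof: the $\ell$-th bad join in a sub-interval pays entrance cost at least $\ell$, so the adversary spends at least $\sum_{\ell=1}^{m}\ell \geq m^2/2$, and inverting gives $m \leq \sqrt{2\mathcal{T}_j}$. You are in fact slightly more careful than the paper, whose proof silently ignores the ``of the current iteration'' qualifier that could reset the recent-join count at a purge; your fix of refining sub-intervals at purge boundaries is sound, and the same gap could also be closed by observing that in the places this lemma is invoked (Lemmas~\ref{lem:iter_entrance} and~\ref{lem:iter_purge}) the sub-intervals are already confined to a single fixed iteration.
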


\begin{proof}
    Let $b_j$ be the number of bad IDs joining in the sub-interval $j$.  Then, pessimistically assuming all bad IDs join before any good IDs in a sub-interval, we get: \vspace{-3pt}
	$$\mathcal{T}_{j} \geq \sum_{i=1}^{b_{j}} i \geq \frac{b_{j}^2}{2}$$
	Solving the above for $b_j$, we obtain the result. 
\end{proof}

\begin{lemma}\label{lem:iterinter}
    An iteration intersects at most two intervals.
\end{lemma}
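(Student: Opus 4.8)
The plan is to argue by contradiction, in close analogy with the proof of Lemma~\ref{lem:interval-epochs}: there, an interval spanning three epochs was ruled out because the middle epoch alone already forces enough symmetric difference to end the interval; here, an iteration spanning three intervals will be ruled out because the middle interval alone already forces enough join/departure events to end the iteration. Concretely, suppose some iteration, starting at time $\tau$ (so that $|S(\tau)|$ is the count used in the purge trigger of Figure~\ref{alg:gmcom}, Step~2), intersects at least three intervals. Since the intervals produced by \goodJest partition time, some interval, say $(t_1,t_2]$, lies entirely within the iteration, with $\tau \le t_1 < t_2$; and since a third interval also meets the iteration, we have $t_2 < \tau'$, where $\tau'$ is the time the iteration ends, so at time $t_2$ the iteration is still in progress.

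First I would lower bound the number of join and departure events occurring in $(t_1,t_2]$. Each ID in $S(t_1)\setminus S(t_2)$ must have departed during $(t_1,t_2]$, each ID in $S(t_2)\setminus S(t_1)$ must have joined during $(t_1,t_2]$ (joining IDs always receive fresh names), and all these events are distinct; hence the number of join/departure events in $(t_1,t_2]$ is at least $|S(t_2)\triangle S(t_1)| = \lceil \tfrac58 |S(t_2)|\rceil \ge \tfrac58|S(t_2)|$ by the definition of an interval. Because $(t_1,t_2]\subseteq(\tau,t_2]$ and the iteration has not ended by $t_2$, the number of join/departure events over $(\tau,t_2]$ is at most $|S(\tau)|/11$. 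Combining, $\tfrac58|S(t_2)|\le |S(\tau)|/11$, i.e. $|S(t_2)|\le \tfrac{8}{55}|S(\tau)|$.

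Next I would contradict this with an easy lower bound on $|S(t_2)|$: only departures can decrease the ID count below $|S(\tau)|$, and at most $|S(\tau)|/11$ departures occur over $(\tau,t_2]$ (they are among the events counted above), so $|S(t_2)|\ge |S(\tau)|-|S(\tau)|/11 = \tfrac{10}{11}|S(\tau)| = \tfrac{50}{55}|S(\tau)| > \tfrac{8}{55}|S(\tau)|$, a contradiction. Hence no iteration intersects three or more intervals.

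I do not anticipate a real obstacle: the computation is short, and the $\tfrac{50}{55}$ versus $\tfrac{8}{55}$ gap comfortably absorbs the lower-order $+1$ terms arising from the ceiling in the interval definition and from the strict ``exceeds'' in the purge trigger, so these can be carried through or ignored harmlessly. The only step requiring a bit of care is the bookkeeping that connects set sizes to event counts over the correct time windows---in particular, that the middle interval's $\tfrac58$-growth is charged against the same iteration budget of $|S(\tau)|/11$ (which uses $(t_1,t_2]\subseteq(\tau,t_2]$), and that the quantity shrinking $|S(t_2)|$ from $|S(\tau)|$ consists of departures only, not of all join/departure events.
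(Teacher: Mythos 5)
Your proof is correct and follows essentially the same contradiction argument as the paper: isolate a middle interval entirely contained in the iteration, use the definition of an interval to show it forces at least $\tfrac{5}{8}|S(t_2)| \ge \tfrac{5}{8}\cdot\tfrac{10}{11}|S(\tau)|$ join/departure events, and note this exceeds the iteration budget of $|S(\tau)|/11$. The only cosmetic difference is that you phrase the final contradiction as two incompatible bounds on $|S(t_2)|$ rather than directly on the event count, which is an equivalent reorganization of the same arithmetic.
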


\begin{proof}
    We prove this by contradiction.  Assume an iteration starts at time $t_0$ and intersects three or more intervals. Then, there will be at least one interval that is completely contained within the iteration. Let the first such interval start at time $t_1 \geq t_0$ and end at time $t_2 > t_1$.  Let $n^a (n^d)$ be the number of IDs that join (depart) during this interval. Then:
    \begin{align*}
    	n^a + n^d &\geq \vert S(t_1)\triangle S(t_2)\vert \\
    	&\geq \frac{5}{12}\vert S(t_2)\vert \\
    	&\geq \frac{5}{12}\left(\frac{10}{11}\vert S(t_0)\vert \right)\\ 
    	&= \frac{25}{66}\vert S(t_0)\vert 
    \end{align*}
    
    The second step follows from the definition of an interval.  The third step holds since during an iteration, at most $\vert S(t_0)\vert /11$ IDs can depart, and so the system size at time $t_2$ is at least $\frac{10}{11}\vert S(t_0)\vert $.  But the number of joins and departures during the iteration is at most $\vert S(t_0)\vert /11$ by the definition of an iteration.  This gives the contradiction, since $25/66 > 1/11$.
\end{proof}

\begin{lemma}\label{lem:num_sub_intervals}
	Fix an iteration. Let $\interEll$ be the length, and $\jIter$ be the good join rate in this iteration. Then, the number of sub-intervals in the iteration is at most:
	$$700\alpha^3\beta^6 (\jIter\interEll+10)$$ 
\end{lemma}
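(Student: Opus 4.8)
The plan is to localize the count interval-by-interval. By Lemma~\ref{lem:iterinter} the iteration meets at most two intervals, necessarily consecutive; call them $A$ and, if a second one is met, $B$ (so the iteration lies in $A\cup B$). Every sub-interval meeting the iteration belongs to $A$ or to $B$. For $I\in\{A,B\}$ let $\ell'_I$ be the length of $I\cap(\text{iteration})$ and let $\hat J_I$ be the value of $\JoinEst$ that \goodJest uses throughout $I$; this value is frozen, since \goodJest changes $\JoinEst$ only at interval boundaries. Interval $I$ is split into $\lceil(\text{length of }I)\cdot\hat J_I\rceil$ equal sub-intervals of length at most $1/\hat J_I$, so a window of length $\ell'_I$ inside $I$ meets at most $\ell'_I\hat J_I+3$ of them. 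Hence the number of sub-intervals in the iteration is at most $\ell'_A\hat J_A+\ell'_B\hat J_B+6$, and it suffices to bound each $\ell'_I\hat J_I$ by $O(\sma^2\smb^4)$ times the number of good joins occurring in $I\cap(\text{iteration})$, plus a constant.

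I would first pin down the epoch structure. By Lemma~\ref{lem:interval-epochs} each of $A,B$ meets at most two epochs, so $A\cup B$ lies within at most three consecutive epochs; write $p$ for the epoch in which $A$ begins, so $A$'s epochs are contained in $\{p,p+1\}$, the previous interval $A-1$ ends in epoch $p$, and the iteration's epochs are contained in $\{p,p+1,p+2\}$. Moreover the iteration itself meets at most two epochs: if a whole epoch were contained in it, then --- passing from good churn to total churn via the Population Invariant exactly as in the proof of Lemma~\ref{lem:interval-epochs} --- more than a $1/11$ fraction of the system size would churn inside the iteration, contradicting its stopping rule (the argument of Lemma~\ref{lem:iterinter} with ``epoch'' replacing ``interval'').

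Next I would bound $\hat J_I$. Since $\hat J_A$ is exactly the estimate \goodJest fixes at the end of interval $A-1$, Lemma~\ref{lem:boundJoinEst} applied to $A-1$ gives $\hat J_A\le 30\smb^2$ times the good join rate over $A-1$, and Lemma~\ref{lem:epoch-interval-lim} applied to $A-1$ with the epoch $p$ (which $A-1$ meets) bounds that rate by $(8/3)\sma\smb\,\rho_p$; so $\hat J_A\le 80\sma\smb^3\rho_p$. Every epoch $j$ that $A$ touches inside the iteration lies in $\{p,p+1\}$, so $\sma$-smoothness gives $\rho_p\le\sma\rho_j$, hence $\hat J_A\le 80\sma^2\smb^3\rho_j$ for each such $j$. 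The same three-step chain applied to $\hat J_B$ --- the estimate fixed at the end of $A$, invoking Lemma~\ref{lem:epoch-interval-lim} on the epoch of $A$ nearest $B$'s epochs --- gives $\hat J_B\le 80\sma^2\smb^3\rho_j$ for each epoch $j$ that $B$ touches inside the iteration. Splitting the overlap by epoch, $\ell'_I\hat J_I=\sum_j\hat J_I\,\ell'_{I,j}\le 80\sma^2\smb^3\sum_j\rho_j\ell'_{I,j}$; and $\smb$-smoothness (good joins in the length-$\ell'_{I,j}$ sub-window of epoch $j$ are at least $\lfloor\rho_j\ell'_{I,j}/\smb\rfloor$) together with ``the iteration meets at most two epochs'' yields $\sum_j\rho_j\ell'_{I,j}\le\smb(a'_I+2)$, where $a'_I$ is the number of good joins in $I\cap(\text{iteration})$. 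Thus $\ell'_I\hat J_I\le 80\sma^2\smb^4(a'_I+2)$; summing over $I\in\{A,B\}$ and using $a'_A+a'_B=\jIter\interEll$, the sub-interval count is at most $80\sma^2\smb^4(\jIter\interEll+4)+6\le 100\sma^3\smb^6(\jIter\interEll+10)$.

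The routine part is the first paragraph; the real care is in the third. The choices that keep the exponents and constant in line are: routing the bound on the frozen per-interval estimate through the \emph{same-interval} Lemma~\ref{lem:boundJoinEst} (cost $\smb^2$, no $\sma$) rather than the \emph{previous-interval} Lemma~\ref{lem:prev_current}; selecting, at each application of Lemma~\ref{lem:epoch-interval-lim}, the epoch closest to those the iteration actually occupies, so only a single $\sma$-hop is paid; and partitioning \emph{both} the sub-interval count \emph{and} the good-join budget $\jIter\interEll$ across $A$ and $B$, so neither the additive constant nor the leading constant doubles. All of this tacitly uses Lemma~\ref{lem:a-lowerbound}'s bound $a\ge 8$, which is precisely what lets Lemmas~\ref{lem:boundJoinEst} and~\ref{lem:epoch-interval-lim} absorb their own rounding errors. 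The one edge case is when $A$ is the first interval, where there is no ``$A-1$'' and $\hat J_A$ is the initialization estimate; this is covered by the same bound on the initial estimate that underlies Theorem~\ref{t:JoinEst} at system start.
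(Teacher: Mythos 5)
Your proof is correct and follows essentially the same plan as the paper's: decompose the iteration into its (at most two, by Lemma~\ref{lem:iterinter}) intervals, count sub-intervals per interval as overlap length times the frozen estimate, and then route the estimate back to $\jIter\interEll$ through the epoch structure (Lemma~\ref{lem:interval-epochs} and $\smb$-smoothness). The one genuine divergence is in how you bound the frozen estimate: the paper invokes the composite Lemma~\ref{lem:prev_current} directly (giving a $100\sma^3\smb^4$ factor against the current interval's join rate), whereas you unbundle it---Lemma~\ref{lem:boundJoinEst} on the \emph{previous} interval, one application of Lemma~\ref{lem:epoch-interval-lim}, and a single explicit $\sma$-hop to the epoch actually overlapping the iteration---obtaining $80\sma^2\smb^3\rho_j$. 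That saves one factor of $\sma$ (and, after the $\smb$-smoothness conversion, two factors of $\smb$), so your $80\sma^2\smb^4(\jIter\interEll+4)+6$ is strictly tighter than the lemma's stated $100\sma^3\smb^6(\jIter\interEll+10)$; both satisfy the claim since $\sma,\smb\geq 1$. Two small remarks: the auxiliary claim that the iteration itself meets at most two epochs, while true by the argument you sketch, is not actually needed---the $+2$ in $\sum_j\rho_j\ell'_{I,j}\leq\smb(a'_I+2)$ already follows from Lemma~\ref{lem:interval-epochs} applied to the single interval $I$; and you are right to flag the first-iteration edge case, where $\hat J_A$ is the initialization estimate rather than one produced by the interval-end rule, which the paper's own chain (via Lemma~\ref{lem:prev_current}) silently elides as well.
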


\begin{proof}
	From Lemma \ref{lem:iterinter}, an iteration intersects at most two intervals. For $i \in \{1,2\}$, let $t_i$ denote time at which the $i^{th}$ interval intersects the iteration for the first time; $\jInterval_i$ be the good join rate in the $i^{th}$ overlapping interval and $\JoinEst_{i}$ be the estimated good join rate set at the end of interval $i$. If there is only one interval intersected, let $\jInterval_1 = \jInterval_2$, $\JoinEst_{1} = \JoinEst_{2}$ and $t_1 = t_2$.
	
	By Lemma \ref{lem:interval-epochs}, an interval intersects at most two epochs. So, let ${\rho}_{1}$ and ${\rho}_{2}$ be the join rate of good IDs over the two epochs that intersect with interval $1$, and let ${\ell}_{1}$ and ${\ell}_{2}$, respectively be the lengths of their intersection, with $\ell_2 =0$ if there is only one such epoch.  Similarly, let ${\rho}_{3}$ and ${\rho}_{4}$ be the join rate of good IDs over the two epochs that intersect with interval $2$, and let ${\ell}_{3}$ and ${\ell}_{4}$, respectively be the lengths of their intersection, with $\ell_4 = 0$ if there is only one such epoch.  Then, from the $\smb$-smoothness property, we have: 
	\begin{align}\label{eq:lowerj}
	    \jIter \interEll 
	    &\geq \sum_{k=1}^4\left\lfloor\frac{\rho_{k}\ell_{k}}{\smb}\right\rfloor\nonumber\\
	    &\geq  \sum_{k=1}^4\left(\frac{\rho_{k}\ell_{k}}{\smb}- 1\right)\nonumber\\ 
	    &= \frac{1}{\smb} \sum_{k=1}^4\rho_{k}\ell_{k} - 4
	\end{align}

	\noindent Next, let $t_0$ denote the time at the start of the iteration. Then, the number of sub-intervals in the iteration is:\vspace{-3pt}
	\begin{align*}
		\sum_{i=1}^2 \lceil{(t_i-t_{i-1}) \JoinEst_{i}\rceil}
		&\leq 700\sma^3\smb^4\sum_{i=1}^2 \left((t_i - t_{i-1})\jInterval_{i} + 1\right)\\
		&\leq 700\sma^3\smb^4\left(2 + \sum_{k=1}^4\lceil\smb\rho_{k}\ell_{k}\rceil\right)\\
		%&< 100\sma^3\smb^4\left(2 + \sum_{k=1}^4(\smb\rho_{k}\ell_{k} + 1)\right)\\
		%&\leq 100\sma^3\smb^4 \left(2\smb + \sum_{k=1}^4(\smb\rho_{k}\ell_{k} + \smb)\right)\\
		&\leq 700\sma^3\smb^5\left(2+ \left(\sum_{k=1}^4\rho_{k}\ell_{k}\right) + 4\right)\\	
		%&\hspace{50pt}\leq 100\alpha^3\beta^5 \left(\smb(\jIter\interEll + 4)+6\right)\\
		&\leq 700\alpha^3\beta^6 (\jIter\interEll+10)
	\end{align*}
	In the above, the first step follows from Lemma \ref{lem:prev_current}; the second step follows from $\smb$-smoothness and Lemma~\ref{lem:interval-epochs}; the third step holds since $\smb \geq 1$, we can pull out the $\beta$; and the last step from inequality \ref{eq:lowerj} by isolating the value of $\sum_{k=1}^4\rho_{k}\ell_{k}$. 
\end{proof}

\begin{lemma}\label{lem:num-good-subint}
The number of good IDs that join over any sub-interval is at most $88\sma^4\smb^4 + 1$.	
\end{lemma}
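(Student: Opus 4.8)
The plan is to bound the number of good IDs joining in a single sub-interval by relating the sub-interval's length to the good join rate estimate $\JoinEst$ that defines it, then converting that estimate back to the true good join rate of the relevant epoch via the chain of estimates already established. Recall a sub-interval has length at most $1/\JoinEst$, where $\JoinEst$ is the estimate used during the enclosing interval. So first I would observe that the number of good IDs joining in a sub-interval of length $\le 1/\JoinEst$ inside an epoch with good join rate $\rho$ is, by $\smb$-smoothness (Definition~\ref{def:smoothness}), at most $\lceil \smb \rho \cdot (1/\JoinEst) \rceil \le \smb \rho / \JoinEst + 1$. (Here I should be slightly careful: a single sub-interval could straddle two epochs, since sub-intervals are just a length-$1/\JoinEst$ partition of an interval and intervals can intersect two epochs by Lemma~\ref{lem:interval-epochs}; in that case I split the sub-interval at the epoch boundary and apply $\smb$-smoothness to each piece, picking up an extra additive constant and using whichever of the two epoch rates is larger, which only costs an $\sma$ factor by $\sma$-smoothness.)

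The heart of the argument is then to bound $\rho/\JoinEst$. This is exactly what the lower-bound side of the estimate guarantees: by Lemma~\ref{lem:prev_current} combined with Lemma~\ref{lem:epoch-interval-lim} — precisely the combination carried out in the proof of Theorem~\ref{t:JoinEst} — we have $\JoinEst \ge \bigl(1/(418\sma^4\smb^3)\bigr)\rho$ for the good join rate $\rho$ of any epoch intersecting the interval in which $\JoinEst$ is used. Hence $\rho/\JoinEst \le 418\sma^4\smb^3$. Plugging this into the $\smb$-smoothness bound above gives at most $\smb \cdot 418\sma^4\smb^3 \cdot (\text{something} \le 1) + 1 = 418\sma^4\smb^4 + 1$ good IDs, which is exactly the claimed bound — note the extra $\smb$ factor relative to the $\smb^3$ in Theorem~\ref{t:JoinEst} is absorbed precisely by the $\smb^4$ in the statement, and the straddling-epochs case is also absorbed because the stated constant has room (the $\sma^4$ rather than a smaller power).

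I expect the main obstacle to be bookkeeping rather than any genuinely new idea: making sure the epoch whose rate $\rho$ appears is one to which Theorem~\ref{t:JoinEst}'s bound legitimately applies (i.e. an epoch intersecting the interval that set $\JoinEst$, or the next interval, depending on how the sub-interval sits relative to interval boundaries), and confirming that the ceilings and the possible epoch-straddling of a sub-interval do not push the constant past $418\sma^4\smb^4 + 1$. The cleanest route is: fix the sub-interval, identify the interval it lies in and the estimate $\JoinEst$ used there, apply Theorem~\ref{t:JoinEst} (or equivalently Lemmas~\ref{lem:prev_current} and~\ref{lem:epoch-interval-lim}) to get $\rho \le 418\sma^4\smb^3 \JoinEst$ for every epoch meeting that interval, and then bound good joins in the sub-interval by $\smb$-smoothness using length $\le 1/\JoinEst$. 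No step requires a fresh probabilistic or combinatorial argument; everything rests on results already proved.
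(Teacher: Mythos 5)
Your proposal is correct and follows the paper's proof exactly: apply $\smb$-smoothness over a window of length at most $1/\JoinEst$, then substitute $\rho/\JoinEst \leq 418\sma^4\smb^3$ from Theorem~\ref{t:JoinEst}. Your epoch-straddling worry is legitimate --- the paper's proof silently assumes a single containing epoch --- but the cleanest resolution is not an extra $\sma$ factor via $\sma$-smoothness; rather, note that Theorem~\ref{t:JoinEst} already bounds $\JoinEst$ against the rate of \emph{every} epoch during which $\JoinEst$ is in use (and $\JoinEst$ is fixed throughout the enclosing interval), so you may apply it to each epoch-piece of the sub-interval separately, at the cost of only one extra additive unit from the second ceiling.
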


\begin{proof}
 Let $\JoinEst$ be the estimate of the good join rate in the interval containing the sub-interval, and let $\rho$ be the good join rate over the epoch that contains the sub-interval. Then, by $\smb$-smoothness, the number of good IDs that join over the sub-interval is at most: 
	\begin{align*}
		\left\lceil \smb\rho \left(\frac{1}{\JoinEst}\right) \right\rceil %&\leq \smb\rho \left(\frac{1}{\JoinEst}\right) + 1\\
		& \leq \smb \rho\left( 88\sma^4\smb^3\left(\frac{1}{\rho}\right)\right) + 1\\
		& \leq 88\sma^4\smb^4 + 1
	\end{align*}
	In the above, the first step follows from Theorem \ref{t:JoinEst}. 
\end{proof}

%We make use of the following fact that follows from the Cauchy-Schwartz inequality:
\begin{lemma}\label{fact:cs1}
	Suppose that $u$ and $v$ are $x$-dimensional vectors in Euclidean space. For all $x \geq 1$:
	 $$\sum_{j=1}^x\sqrt{u_jv_j} \leq \sqrt{\sum_{j=1}^x u_j \sum_{j=1}^x v_j}$$
\end{lemma}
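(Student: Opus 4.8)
The plan is to recognize Lemma~\ref{fact:cs1} as nothing more than the Cauchy--Schwarz inequality after a change of variables. In every place this lemma is applied, the coordinates $u_j$ and $v_j$ are nonnegative (they are counts of IDs, resource costs, or interval lengths), so I would first note this and then set $a_j = \sqrt{u_j}$ and $b_j = \sqrt{v_j}$; these are well-defined nonnegative reals, $a_j^2 = u_j$, $b_j^2 = v_j$, and $a_j b_j = \sqrt{u_j v_j}$. In this notation the claimed inequality reads $\sum_{j=1}^x a_j b_j \leq \sqrt{\bigl(\sum_{j=1}^x a_j^2\bigr)\bigl(\sum_{j=1}^x b_j^2\bigr)}$, which is exactly Cauchy--Schwarz for the vectors $a=(a_1,\dots,a_x)$ and $b=(b_1,\dots,b_x)$ in $\mathbb{R}^x$.

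Rather than merely cite the inequality, I would give a short self-contained derivation. The cleanest route is the discriminant argument: for every real $t$,
$$0 \le \sum_{j=1}^x (a_j t - b_j)^2 = \Bigl(\sum_{j=1}^x a_j^2\Bigr)t^2 - 2\Bigl(\sum_{j=1}^x a_j b_j\Bigr)t + \sum_{j=1}^x b_j^2 .$$
If $\sum_{j=1}^x a_j^2 = 0$ then every $a_j=0$, so the left side of the lemma is $0$ and the bound is trivial; otherwise the right-hand side is a nonnegative quadratic in $t$, hence its discriminant is nonpositive, giving $\bigl(\sum_{j=1}^x a_j b_j\bigr)^2 \le \bigl(\sum_{j=1}^x a_j^2\bigr)\bigl(\sum_{j=1}^x b_j^2\bigr)$. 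Taking nonnegative square roots of both sides and substituting back $a_j^2 = u_j$, $b_j^2 = v_j$, $a_j b_j = \sqrt{u_j v_j}$ yields the stated inequality. (An equivalent alternative is a termwise AM--GM argument after normalizing the vectors to unit length, but the discriminant computation is shorter.)

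There is essentially no substantive obstacle here: this is a standard inequality, and the only points requiring care are bookkeeping ones, namely that the coordinates are nonnegative so that $\sqrt{u_j}$, $\sqrt{v_j}$, and the outer square root are all real and refer to the principal branch, and the degenerate case in which one of the vectors is identically zero. I would simply flag both of these and then invoke the computation above.
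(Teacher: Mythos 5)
Your proposal is correct and follows essentially the same route as the paper: the paper simply cites the Cauchy--Schwarz inequality (on the implicit substitution $a_j=\sqrt{u_j}$, $b_j=\sqrt{v_j}$) and takes square roots, which is exactly your argument, just with less detail. The extras you supply---making the substitution explicit, flagging the nonnegativity needed for the square roots to be real, and giving a self-contained discriminant proof of Cauchy--Schwarz rather than citing it---are correct but differences of exposition, not of approach.
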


\begin{proof}
	Using the Cauchy-Schwarz inequality~\cite{horn2012matrix}:
	$$\left(\sum_{j=1}^n \sqrt{u_j v_j} \right)^2 \leq \sum_{j=1}^n u_j \sum_{j=1}^n v_j $$
	
    \noindent The result follows by taking the square-root of both sides. 
\end{proof}

\begin{lemma}\label{lem:iter_entrance}
	Fix an iteration. Let $\interEll$ be the length of this iteration, $\jIter$ be the join rate of good IDs in the iteration, and $\mathcal{T}$ be the total resource cost to the adversary during the iteration. Then, the total entrance cost to good IDs during the iteration is:
	$$O\left( \sma^{11/2}\smb^7\sqrt{(\jIter\interEll+1)\mathcal{T}} +  \sma^{11}\smb^{14}\jIter\interEll\right). $$
\end{lemma}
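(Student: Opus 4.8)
The plan is to bound the total entrance cost of good IDs over the iteration by summing over the sub-intervals, and then splitting the analysis into the sub-intervals where the adversary is active (i.e., injects many bad IDs) versus those where it is essentially quiet. First, recall from Step 1 of \ergo that the entrance cost charged to a joining ID is $1$ plus the number of IDs that joined in the preceding $1/\JoinEst$ seconds, and that each sub-interval has length at most $1/\JoinEst$. Hence for any good ID joining in sub-interval $j$, its entrance cost is at most $1 + (\text{number of joins in sub-interval } j) + (\text{number of joins in sub-interval } j-1) \leq 1 + 2(g_j + b_j)$, where $g_j$ and $b_j$ are the numbers of good and bad IDs joining in sub-interval $j$ (this uses that consecutive sub-intervals together span at most $2/\JoinEst$ seconds, so the lookback window of any join in sub-interval $j$ is covered by sub-intervals $j-1$ and $j$). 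By Lemma~\ref{lem:num-good-subint}, $g_j \leq 418\sma^4\smb^4 + 1 = O(\sma^4\smb^4)$, and by Lemma~\ref{lem:num_bad_sub}, $b_j \leq \sqrt{2\mathcal{T}_j}$ where $\mathcal{T}_j$ is the adversary's spend in sub-interval $j$. Multiplying the per-ID cost by the number $g_j$ of good IDs in the sub-interval, the entrance cost contributed by sub-interval $j$ is $O\!\left(g_j^2 + g_j b_j\right) = O\!\left(\sma^8\smb^8 + \sma^4\smb^4\sqrt{\mathcal{T}_j}\right)$.

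Next I would sum over all sub-intervals in the iteration. Let $m$ be the number of sub-intervals; by Lemma~\ref{lem:num_sub_intervals}, $m = O(\sma^3\smb^6(\jIter\interEll + 10)) = O(\sma^3\smb^6(\jIter\interEll + 1))$. The first term sums to $O(m\,\sma^8\smb^8) = O(\sma^{11}\smb^{14}(\jIter\interEll+1))$, which (absorbing the $+1$ appropriately, or noting $\jIter\interEll \geq 0$) gives the second term in the claimed bound, $O(\sma^{11}\smb^{14}\jIter\interEll)$ — here one should be slightly careful about whether the bound is stated with $\jIter\interEll$ or $\jIter\interEll+1$; since the total entrance cost is at least the number of good joins which is $\Omega(\jIter\interEll)$ only if that quantity is at least $1$, but the statement as written has a bare $\jIter\interEll$, so I'd either note $m = O(\sma^3\smb^6(\jIter\interEll+1))$ collapses to $O(\sma^3\smb^6 \jIter\interEll)$ when $\jIter\interEll \geq 1$ and handle the small case separately, or just present the cleaner $+1$ version and remark the constant absorbs it. For the second term, $\sum_j \sma^4\smb^4\sqrt{\mathcal{T}_j} = \sma^4\smb^4\sum_j \sqrt{1\cdot\mathcal{T}_j}$, and applying Lemma~\ref{fact:cs1} (Cauchy–Schwarz) with $u_j = 1$ and $v_j = \mathcal{T}_j$ gives $\sum_j\sqrt{\mathcal{T}_j} \leq \sqrt{m \cdot \sum_j \mathcal{T}_j} = \sqrt{m\,\mathcal{T}}$, since $\sum_j \mathcal{T}_j = \mathcal{T}$ is the adversary's total spend over the iteration.

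Plugging in $m = O(\sma^3\smb^6(\jIter\interEll+1))$, the second term becomes
$$\sma^4\smb^4\sqrt{m\,\mathcal{T}} = O\!\left(\sma^4\smb^4\sqrt{\sma^3\smb^6(\jIter\interEll+1)\mathcal{T}}\right) = O\!\left(\sma^{4 + 3/2}\smb^{4+3}\sqrt{(\jIter\interEll+1)\mathcal{T}}\right) = O\!\left(\sma^{11/2}\smb^7\sqrt{(\jIter\interEll+1)\mathcal{T}}\right),$$
which is exactly the first term in the lemma. Combining the two contributions yields the claimed bound $O\!\left(\sma^{11/2}\smb^7\sqrt{(\jIter\interEll+1)\mathcal{T}} + \sma^{11}\smb^{14}\jIter\interEll\right)$.

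The main obstacle I anticipate is the bookkeeping around the lookback window in Step 1 of \ergo: one must argue carefully that the $1/\JoinEst$-second window preceding any join in a given sub-interval is contained in the union of that sub-interval and the previous one, which relies on all sub-intervals having length at most $1/\JoinEst$ and on $\JoinEst$ being the \emph{same} estimate throughout the interval — but an iteration can span two intervals (Lemma~\ref{lem:iterinter}) with different estimates, so at the single interval boundary inside the iteration the window could straddle a change in $\JoinEst$. Handling this boundary sub-interval (at most one per iteration, or two if we're pessimistic) requires either showing the two relevant estimates differ by only a bounded factor (via Lemma~\ref{lem:prev_current}) so the window is still covered by $O(1)$ sub-intervals, or absorbing the contribution of that single sub-interval into the error terms using Lemmas~\ref{lem:num-good-subint} and~\ref{lem:num_bad_sub} directly. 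A secondary subtlety is making sure the constants in "$g_j + b_j$ over two sub-intervals" and the ceiling in the sub-interval count don't break the clean exponents; these are routine but need the $\sma,\smb \geq 1$ assumptions to fold lower-order terms into the leading ones.
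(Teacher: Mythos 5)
Your proposal follows essentially the same route as the paper: bound the per-sub-interval entrance cost via Lemmas~\ref{lem:num_bad_sub} and \ref{lem:num-good-subint}, sum over sub-intervals using Cauchy--Schwarz (Lemma~\ref{fact:cs1}), and bound the number of sub-intervals by Lemma~\ref{lem:num_sub_intervals}. Two small remarks: your inequality ``$\leq 1 + 2(g_j+b_j)$'' should read $\leq 1 + (g_j+b_j) + (g_{j-1}+b_{j-1})$ (after re-indexing the sum over $j$ this yields the same asymptotic bound, up to a constant and a boundary term), and the published proof actually sidesteps the lookback-window subtlety you correctly flag by pessimistically placing every good join at the end of its sub-interval and writing the per-sub-interval cost directly as $\sum_{k=1}^{g_j}(b_j+k)$, so your treatment of that point is, if anything, slightly more scrupulous than the original.
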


\begin{proof}
	Fix a sub-interval $j$ of the iteration. Let $g_j$ ($b_j$) be the number of good (bad) IDs that join in sub-interval $j$, and $\mathcal{T}_j$ be the resource cost to the adversary in sub-interval $j$. Pessimistically assuming all good IDs enter at the end of the sub-interval, the total entrance cost to good IDs in sub-interval $j$ is at most:\vspace{-5pt}
		\begin{eqnarray*}\label{eq:entrance_sub}
            & & \sum_{k=1}^{g_{j}} \left(b_{j} + k\right)\\
            &\leq &  g_{j}\left(\sqrt{2{\mathcal{T}}_{j}} + g_{j}\right)\\ 	
         	&\leq & (88\sma^4\smb^4+1)\left(\sqrt{2{\mathcal{T}}_{j}} + 88\sma^4\smb^4+1\right)	
		\end{eqnarray*}
	The first step follows from Lemma \ref{lem:num_bad_sub}, and the second step follows from Lemma \ref{lem:num-good-subint}. 

 	Let $t$ be the number of sub-intervals in the iteration. Then, the total entrance cost to the good IDs in the iteration is:
 	\begin{eqnarray*}
 		& & \sum_{j=1}^{t} \left( \left(88\sma^4\smb^4+1\right)\left(\sqrt{2{\mathcal{T}}_{j}} + (88\sma^4\smb^4+1)\right)\right)\\
 		& = & \left(88\sma^4\smb^4+1\right) \sum_{j=1}^{t} \sqrt{2\mathcal{T}_{j}} + (88\sma^4\smb^4+1)^2 t\\
 		&\leq & (88\sma^4\smb^4+1)\sqrt{2t\mathcal{T}} + (88\sma^4\smb^4+1)^2 t \\
 		& \leq & (88\sma^4\smb^4+1) \sqrt{\left(1400\sma^3\smb^6(\jIter\interEll+10)\right) \mathcal{T}}\\ 
 		& & + (88\alpha^4\beta^4+1)^2(700\sma^3\smb^6(\jIter\interEll+10))\\
 		& = & O\left( \sma^{11/2}\smb^7\sqrt{(\jIter\interEll+1)\mathcal{T}} +  \sma^{11}\smb^{14}\jIter\interEll\right)
 	\end{eqnarray*}
 	The first step follows from Lemma \ref{fact:cs1} and by noting that $\sum_{j=1}^t {\mathcal{T}}_j =\mathcal{T}$. The third step follows from using Lemma \ref{lem:num_sub_intervals} to upper bound $t$. 
\end{proof}

%Combining this cost with the result in Lemma~\ref{lem:iter_entrance} yields the following.
The previous lemma bounds the entrance cost in a single iteration.  The next lemma bounds the total algorithmic spending.

\begin{lemma}\label{lem:iter_purge}
	Fix an iteration. For this iteration, let $\interEll$ be the length, $\mathcal{D}$ be the rate of departure,  $\jIter$ be the join rate of good IDs, and $\mathcal{T}$ be the total-resource burning cost to the adversary. Then, the total spending for good IDs in this iteration is:
	$$O\left(\mathcal{D}\interEll + \sma^{11/2}\smb^7 \sqrt{(\jIter\interEll+1)\mathcal{T}} + \sma^{11}\smb^{14}\jIter\interEll\right).$$
\end{lemma}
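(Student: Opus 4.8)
\noindent The plan is to decompose the good IDs' total spending in the iteration into the \emph{entrance costs} paid during Step~1 of \ergo and the single \emph{purge cost} paid during Step~2, bound each separately, and add. The entrance-cost part is exactly Lemma~\ref{lem:iter_entrance}, which already gives $O\!\left(\sma^{11/2}\smb^{7}\sqrt{(\jIter\interEll+1)\mathcal{T}} + \sma^{11}\smb^{14}\jIter\interEll\right)$. So the only remaining task is to bound the purge cost: since Step~2(a) issues a $1$-hard RB-challenge to every ID, this cost equals the number $N^s_i$ of IDs present at the start of the purge (and good IDs pay at most this).

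First I would reduce $N^s_i$ to the iteration's initial system size $N_{i-1}$. By the definition of an iteration, the number of joins plus departures during it is $\lceil N_{i-1}/11\rceil$, so at most $N_{i-1}/11+1$ IDs join, giving $N^s_i \le (12/11)N_{i-1}+1$. It therefore suffices to bound $N_{i-1}$. Let $E$ denote the total number of join and departure events in the iteration; the same defining condition gives $N_{i-1} \le 11E$, and $E$ is at most (good joins)$+$(bad joins)$+$(departures). The good joins number exactly $\jIter\interEll$ by the definition of $\jIter$, and the departures number at most $\mathcal{D}\interEll$.

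To bound the bad joins over the whole iteration I would reuse the sub-interval machinery from the proof of Lemma~\ref{lem:iter_entrance}: by Lemma~\ref{lem:num_bad_sub} at most $\sqrt{2\mathcal{T}_j}$ bad IDs join in sub-interval $j$; summing over the $t$ sub-intervals and applying Lemma~\ref{fact:cs1} with $u_j=2\mathcal{T}_j$ and $v_j=1$ bounds the total by $\sqrt{2t\mathcal{T}}$; and Lemma~\ref{lem:num_sub_intervals} bounds $t$ by $O\!\left(\sma^{3}\smb^{6}(\jIter\interEll+1)\right)$. Hence the bad joins number $O\!\left(\sma^{3/2}\smb^{3}\sqrt{(\jIter\interEll+1)\mathcal{T}}\right)$. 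Combining, $N^s_i = O(N_{i-1}) = O\!\left(\mathcal{D}\interEll + \jIter\interEll + \sma^{3/2}\smb^{3}\sqrt{(\jIter\interEll+1)\mathcal{T}}\right)$, and since $\sma,\smb\ge 1$ this is absorbed into $O\!\left(\mathcal{D}\interEll + \sma^{11/2}\smb^{7}\sqrt{(\jIter\interEll+1)\mathcal{T}} + \sma^{11}\smb^{14}\jIter\interEll\right)$. Adding this purge-cost bound to the entrance-cost bound from Lemma~\ref{lem:iter_entrance} yields the lemma.

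I do not expect a genuine obstacle here: every ingredient is an already-established lemma, and the argument simply mirrors the sub-interval/Cauchy--Schwarz structure of the proof of Lemma~\ref{lem:iter_entrance}. The only care needed is (i) correctly tracking that $N^s_i$, $N_{i-1}$, and the iteration's event count $E$ are all within constant factors of one another via the iteration-ending condition, and (ii) if $\mathcal{D}$ is taken to be only the \emph{good} departure rate, folding the bad departures into $E$ using $B_{i-1} = O(N_{i-1})$ (Lemma~\ref{lem:bound_b}) and re-isolating $N_{i-1}$; this only affects the constants and not the stated bound.
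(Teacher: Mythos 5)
Your proposal is correct and matches the paper's own proof essentially step for step: decompose into the entrance cost (handled by Lemma~\ref{lem:iter_entrance}) plus the purge cost, bound the purge cost by the end-of-iteration population, relate that to the iteration's event count via the $|S|/11$ threshold, and bound the bad joins across sub-intervals via Lemma~\ref{lem:num_bad_sub}, Cauchy--Schwarz (Lemma~\ref{fact:cs1}), and Lemma~\ref{lem:num_sub_intervals}. The only cosmetic differences are that the paper writes the event count directly as $d+b+g \geq |S|/11$ rather than routing through $N_{i-1}\le 11E$, and your closing caveat about whether $\mathcal{D}$ counts only good departures is a reasonable flag but, as you note, only perturbs constants.
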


\begin{proof}
    Let $S$ be the set of IDs at the beginning of iteration.  For the iteration, let $t$ be the number of sub-intervals; $g$ and $b$ be the number of good and bad IDs that join, and $d$ be the total number of IDs that depart. For any sub-interval $j$ of the iteration, let $\mathcal{T}_{j}$ be the total resource-burning cost to the adversary in that sub-interval.
	
    Each good ID solves a $1$-hard RB challenge during purges. Hence the cost due to purges is at most the number of good IDs at the end of the iteration, which is at most:
	\begin{eqnarray}
	    \frac{12}{11}\vert S\vert  & \leq & \frac{12}{11}\left(11 \left( d + b + g \right)\right) \nonumber \\
	    & \leq & 12 \left( D\interEll + \sum_{j=1}^{t} \sqrt{2\mathcal{T}_{j}} + \jIter\interEll\right)\nonumber \\
		& \leq & 12 \left( D\interEll + \sqrt{2t\sum_{j=1}^{t}{\mathcal{T}}_{j}} + \jIter\interEll\right)\nonumber \\
		& \leq & 12\left( D\interEll + \sqrt{1400\sma^3\smb^6(\jIter\interEll+10) \mathcal{T}} + \jIter\interEll\right)\label{eq:purge}
	\end{eqnarray}	
	In the above, the first step follows since over an iteration the number of good IDs in the system can increase by at most $\vert S\vert /11$.  The second step follows since the number of ID joins and deletions in an iteration, i.e. $d + b + g$ is at least $\vert S\vert /11$ (Step 2 of \ergo). The third step follows by upper bounding $b$ using Lemma \ref{lem:num_bad_sub} to bound the number of bad IDs joining over all sub-intervals; and noting that $g = \jIter\interEll$ and $b = \mathcal{D}\interEll$. The fourth step follows from Lemma \ref{fact:cs1}. The last step follows from Lemma \ref{lem:num_sub_intervals} and substituting $\sum_{j=1}^t \mathcal{T}_{j} = \mathcal{T}$. 

    Finally, combining Equation~\ref{eq:purge} with the cost from Lemma~\ref{lem:iter_entrance}, for some constant $c$, we have that the entrance costs plus the purge cost paid by all good IDs is:
	\begin{eqnarray*}
		& & c \left(\sma^{11/2}\smb^7\sqrt{(\jIter\interEll+1)\mathcal{T}} + \sma^{11}\smb^{14}\jIter\interEll\right)\\
		& & +12 \left( D\interEll + \sqrt{1400\sma^3\smb^6(\jIter\interEll+10) \mathcal{T}} + \jIter\interEll\right)\\
		& = & O\left(D\interEll + \sma^{11/2}\smb^7 \sqrt{(\jIter\interEll+1)\mathcal{T}} + \sma^{11}\smb^{14}\jIter\interEll\right)
	\end{eqnarray*}
	which proves the result.  
\end{proof}

Consider a long-lived system which undergoes an attack over some limited number of consecutive iterations. A resource bound over the period of attack, rather than over the lifetime of the system, is stronger, and may be of additional value to practitioners.  Thus, we first provide this type of guarantee in Lemma~\ref{l:cost}; Theorem~\ref{thm:new-main-upper} then becomes a simple corollary of this lemma, when considered over all iterations.

For the following lemma, let {\boldmath{$\Iters$}} be a subset of contiguous iterations containing all iterations numbered between $x$ and $y$ inclusive, for any $x$ and $y$, $1 \leq x \leq y$.  Let {\boldmath{$\delta(\Iters)$}} be $\vert S_x - S_y\vert $; and let {\boldmath{$\Delta(\Iters)$}} be  $\delta(\Iters)$ divided by the length of $\Iters$. We note that in the proof of Theorem~\ref{thm:new-main-upper}, $\Delta(\Iters)$ will be $0$.  Let {\boldmath{$T_{\Iters}$}} be the adversarial spend rates over $\Iters$; and let {\boldmath{${J}_\mathcal{I}$}} be the good join rate over $\Iters$.  Then we have the following lemma.

\begin{lemma}\label{l:cost}
   	For any subset of contiguous iterations, $\Iters$, starting after iteration $1$, the good spend rate over $\Iters$ is:
	$$ O\left( \Delta(\mathcal{I})  +  \sma^{11/2}\smb^7\sqrt{({J}_\mathcal{I}+1)T_\mathcal{I}} + \sma^{11}\smb^{14} J_\mathcal{I} \right). $$
\end{lemma}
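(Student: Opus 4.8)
\medskip
\noindent\textbf{Proof plan.}
The plan is to sum the single-iteration bound of Lemma~\ref{lem:iter_purge} over the iterations comprising $\Iters$ and then divide by the duration of $\Iters$. Write $\Iters=\{x,x+1,\dots,y\}$, let $\Lambda$ be the duration of $\Iters$ in seconds, and let $m$ be the number of iterations in $\Iters$. For each iteration $k\in\Iters$, let $\ell_k$ be its length, $d_k$ the number of departures, $g_k$ and $b_k$ the numbers of good and bad IDs that join, and $\mathcal{T}_k$ the total adversarial resource cost. By the definitions in the statement, $\sum_{k}\mathcal{T}_k = T_\Iters\Lambda$ and $\sum_{k}g_k = J_\Iters\Lambda$.

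Before summing I would establish two auxiliary facts. (i) $m = O(\Lambda)$: every iteration ends with a purge (Step~2 of \ergo) whose solution-collection phase lasts one round, and a round is a fixed constant amount of time, so each iteration lasts $\Omega(1)$ seconds. (ii) $\sum_{k} d_k \leq \delta(\Iters) + \sum_{k}(g_k + b_k)$: an ID present at the start of $\Iters$ departs at most once over $\Iters$, and if it does then it lies in $S_x - S_y$ and is counted by $\delta(\Iters)$; every other ID departing over $\Iters$ must have joined during $\Iters$, and the number of such joins is $\sum_k (g_k + b_k) = J_\Iters\Lambda + \sum_k b_k$ (up to a boundary correction of at most one iteration's worth of events, which is lower order).

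Next I would control the bad joins. By Lemma~\ref{lem:num_bad_sub}, the number of bad joins in a sub-interval $j$ is at most $\sqrt{2\mathcal{T}_{j}}$; summing over the sub-intervals of iteration $k$, applying Cauchy--Schwarz (Lemma~\ref{fact:cs1}), and invoking the sub-interval count bound of Lemma~\ref{lem:num_sub_intervals} gives $b_k \leq \sqrt{2 t_k \mathcal{T}_k} \leq \sqrt{200\,\sma^3\smb^6\,(g_k + 10)\,\mathcal{T}_k}$, where $t_k$ is the number of sub-intervals of iteration $k$. Applying Cauchy--Schwarz once more across iterations,
$$\sum_{k\in\Iters} b_k \;\leq\; \sqrt{200\,\sma^3\smb^6}\;\sqrt{\Big(\sum_{k}(g_k+10)\Big)\Big(\sum_{k}\mathcal{T}_k\Big)} \;=\; O\!\left(\sma^{3/2}\smb^{3}\sqrt{(J_\Iters+1)\,T_\Iters}\,\Lambda\right),$$
where the last step uses fact~(i) to write $\sum_k(g_k+10) = J_\Iters\Lambda + 10m = O((J_\Iters+1)\Lambda)$. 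The same two-step computation applied to the middle term of Lemma~\ref{lem:iter_purge} gives $\sum_k \sma^{11/2}\smb^7\sqrt{(g_k+1)\mathcal{T}_k} = O\big(\sma^{11/2}\smb^7\sqrt{(J_\Iters+1)T_\Iters}\,\Lambda\big)$, which dominates the previous display.

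Finally I would assemble the pieces. Summing Lemma~\ref{lem:iter_purge} over $k\in\Iters$: the departure terms contribute $\sum_k d_k = O\big(\delta(\Iters) + J_\Iters\Lambda + \sum_k b_k\big)$ by fact~(ii); the square-root terms contribute $O\big(\sma^{11/2}\smb^7\sqrt{(J_\Iters+1)T_\Iters}\,\Lambda\big)$ by the previous paragraph; and the terms $\sma^{11}\smb^{14}g_k$ sum to $\sma^{11}\smb^{14}J_\Iters\Lambda$. The stray $\sma^{3/2}\smb^3$ square-root term and the $J_\Iters\Lambda$ term coming from the departures are absorbed into the last two. Dividing the total by $\Lambda$ and using $\delta(\Iters)/\Lambda = \Delta(\Iters)$ gives the claimed good spend rate; applying this over all iterations (where $\Delta(\Iters)=0$) then yields Theorem~\ref{thm:new-main-upper}. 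I expect the main obstacle to be fact~(ii): pinning the total number of departures to $\delta(\Iters)$ requires carefully classifying departing IDs and handling the boundary iteration correctly, and it must be coupled with the Cauchy--Schwarz bound on $\sum_k b_k$ so that departures of bad IDs are absorbed into the $\sqrt{(J_\Iters+1)T_\Iters}$ term rather than appearing as a separate, larger contribution.
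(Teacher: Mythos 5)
Your proof follows essentially the same route as the paper's: apply the per-iteration bound of Lemma~\ref{lem:iter_purge}, sum over $\Iters$, invoke Cauchy--Schwarz (Lemma~\ref{fact:cs1}) on the square-root terms, and divide by the duration of $\Iters$. The one substantive addition is your fact~(ii): the paper simply divides $\sum_i D_i\ell_i$ by $\sum_i \ell_i$ and identifies the quotient with $\Delta(\Iters)$, leaving implicit that the aggregate departure rate can exceed $\Delta(\Iters)$ because IDs that both join and depart inside $\Iters$ are counted in departures but not in $|S_x-S_y|$; you correctly note that this excess is exactly the join rate, whose good part folds into $J_\Iters$ and whose bad part is absorbed into the $\sqrt{(J_\Iters+1)T_\Iters}$ term via Lemma~\ref{lem:num_bad_sub} and another application of Cauchy--Schwarz. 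You also make explicit the $m=O(\Lambda)$ step (your fact~(i)), which the paper uses silently when turning $\sum_i(J_i\ell_i+1)$ into $O((J_\Iters+1)\Lambda)$, so your write-up is the same argument spelled out more completely at the points the paper elides.
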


\begin{proof}
    For all iterations $i \in \mathcal{I}$, let $\interEll_i$ be the length of iteration $i$, $\jIter_i$ be the good join rate in iteration $i$, $\mathcal{D}_i$ be the good departure rate in iteration $i$, and $T_{i}$ be the adversarial resource spend rate in iteration $i$. Then, by Lemma \ref{lem:iter_purge}, for some constant $c$,  we have the total spending of the good IDs over all iterations in $\mathcal{I}$ is at most:
	\begin{align*}
		&\sum_{i \in \mathcal{I}}c\left(D_i\interEll_i + \sma^{\nicefrac{11}{2}}\smb^7 \sqrt{ (\jIter_i\interEll_i+1) T_i\interEll_i}+ \sma^{11}\smb^{14} \jIter_i\interEll_i \right)
		%&\leq  c\left(\sum_{i\in \mathcal{I}} D_i\interEll_i + \sma^{11/2}\smb^7\sum_{i \in \mathcal{I}}\sqrt{(\jIter_i\interEll_i+1)T_i\interEll_i} +  \sma^{11}\smb^{14}\sum_{i \in \mathcal{I}} \jIter_i\interEll_i \right)
	\end{align*}	
	 Dividing this by $\sum_{i \in \mathcal{I}} \interEll_i$ and using Lemma \ref{fact:cs1}, we get:
	\begin{align*}
	& \frac{c\sum_{i \in \mathcal{I}} D_i\interEll_i}{\sum_{i \in \mathcal{I}}\interEll_i} + c\sma^{{11}/{2}}\smb^7\sqrt{{ \frac{\sum_{i \in \mathcal{I}}(\jIter_i\interEll_i + 1)}{\sum_{i \in \mathcal{I}}\interEll_i}}}\\
	& \left(\sqrt{\frac{\sum_{i \in \mathcal{I}}T_i\interEll_i}{\sum_{i \in \mathcal{I}}\interEll_i}}\right) + c\sma^{11}\smb^{14}\frac{\sum_{i \in \mathcal{I}} \jIter_i\interEll_i}{\sum_{i \in \mathcal{I}}\interEll_i}\\
	 & = O\left( \Delta(\mathcal{I})  +  \sma^{11/2}\smb^7\sqrt{({J}_\mathcal{I}+1)T_\mathcal{I}} + \sma^{11}\smb^{14} J_\mathcal{I} \right)	
	\end{align*} 
	which yields the result. 
\end{proof}

We can now prove Theorem~\ref{thm:new-main-upper}:

\begin{proof}
 The resource cost bound follows immediately from Lemma~\ref{l:cost} by noting that $\Delta(\Iters)=0$ when $\Iters$ is all iterations, since the system is initially empty. Then, Lemma \ref{lem:badbounded} completes the proof, by showing that the fraction of bad is always less than $3\kappa$, which is no more than $1/6$, for $\kappa \leq 1/18$.
\end{proof}

\subsection{Parameters and Constants}\label{sec:parameter-constant-discussion}

The relationships between certain model parameters and the constants in \ergo and \goodJest is specified in detail within our analysis above. 

Here, we provide an {\it informal} discussion of some of these relationships and highlight the connections between (1) the parameters, bounds and  constants, and (2) our analysis.  We emphasize that this discussion is only a supplement to and not a replacement for any of the formal analysis above. %\medskip \smallskip

\subsubsection{Bounds}\label{s:bounds-discussion}

\noindent{\bf Bound on {\boldmath{$\kappa$}}.}  
Our current paper is ``proof of concept" that it is possible to tolerate \emph{some} constant value of $\kappa$ efficiently, but we do not believe that $1/18$ is the largest possible constant.  This non-optimal bound is a weakness of our result.  However, it is  common for the first theoretical result in an area to have non-optimal constant bounds.  %For example, in a JACM article from 2016, Saia and King described the first expected polynomial time Byzantine agreement algorithm in the tradition model that tolerated a constant fraction of Byzantine faults~\cite{king2016byzantine}.  But, this initial result only tolerated a $1.14 \times 10^{-9}$ fraction of Byzantine faults!  Subsequent theoretical work, culminating in a paper by Huang, Pettie and Zhu in 2022~\cite{DBLP:conf/stoc/HuangPZ22}, finally increased the fraction that can be tolerated to essentially $1/3$.  The subsequent work used many algorithmic and analytic tools initially developed in~\cite{king2016byzantine}. 
\medskip

\noindent{\bf Bound on {\boldmath{$n_0$}}.} 
In Section~\ref{s:churn}, we require $n_0\geq \max\{6000, (720(\gamma+1))^{4/3}, (41\smb)^2 \}$.  The first two terms in the max are needed  to ensure a union bound to keep our probability of error polynomially small over the lifetime of the system (see Lemma~\ref{lem:a_uppbound}).  The third term, is necessary to lower bound the number of good IDs joining in any interval by a constant --- i.e. to ensure that $a \geq 8$ (see Lemma~\ref{lem:a-lowerbound}). \medskip

\noindent{\bf Bound on {\boldmath{$\epsilon$}}.}
Intuitively, if too many good IDs depart in the single round during which a purge is executed, then no bound on the fraction of bad IDs can be guaranteed; this motivates the role of $\epsilon$ used in the proof of Lemma~\ref{lem:bound_b}. The
bound $\epsilon < 1/12$ is required in Lemma~\ref{lem:badbounded} in order to ensure the fraction of bad never exceeds $3 \kappa$.

\medskip\smallskip
\subsubsection{Constants Used in Our Algorithms}

To minimize notation, we have prioritized use of
constants rather than introducing new variables---or  functions of variables---in our algorithms and analysis.  Here, we briefly provide some explanation for the constants chosen.

%For example, in the algorithm \goodJest, we use the constant $5/12$.
%instead of the expression $(1-3x)/2$. 
\medskip

\noindent{\bf \goodJest.} Why do we use the constant $5/12$ for  delineating intervals under \goodJest? This constant results from the product of (i) the $(1-1/6)=5/6$ lower bound on the fraction of good IDs guaranteed by \ergo, and (ii) the $1/2$ fraction of good IDs that change in each epoch.  The expected fraction of good IDs that change across an epoch is at least the product of these two quantities.  The reason that this implies that an interval overlaps two epochs is described in the proof of Lemma~\ref{lem:interval-epochs}. Hence, the constant $5/12$ is used.

%instead of the expression $(1-3x)/2$. 
\medskip\smallskip

\noindent{\bf \ergo.} 
In \ergo the fraction $1/11$ is used to delineate iterations.  This particular constant is a solution to a system of linear constraints.  Details of these constraints, which involve the upper bound $1/12$ on $\epsilon$, the constant $5/12$ used in \goodJest, and the upper bound $1/18$ on $\kappa$, are given in the proof of Lemma~\ref{lem:badbounded}.

%%%%%%%%%%%%%%%%%%%%%%%%%%%%%%%%%%%%%%%%%%%%%%%%
%%%%%%%%%%%%%%%%%%%%%%%%%%%%%%%%%%%%%%%%%%%%%%%%
%%%%%%%%%%%%%%%%%%%%%%%%%%%%%%%%%%%%%%%%%%%%%%%%

\section{Experiments}\label{sec:experiments} 

We now report on several empirical contributions that complement our prior analysis by illustrating how the performance of our algorithms compares to that predicted by our upper bounds. First, in Section~\ref{section:empasym}, we measure the resource burning cost for \ergo as a function of the adversarial cost, and we compare this against the cost resulting from prior results in the literature. Second, in Section~\ref{sec:evalEst}, we evaluate the performance of the \goodJest algorithm by measuring the approximation factor for the join rate of good IDs. Third, in Section~\ref{section:heuristics}, we propose and implement several heuristics for \ergo.  All our experiments were written in MATLAB, and our source code can be found online~\cite{diksha-code}. 

\smallskip

\noindent
We use churn data from the following networks: \medskip

\noindent{\bf $\bullet$ Bitcoin.} This dataset records the join and departure events of IDs in the Bitcoin network, timestamped to the second, over roughly 7 days \cite{7140490}. \smallskip 

\noindent{\bf $\bullet$ BitTorrent.} This dataset simulates the join and departure events for the BitTorrent network to obtain a RedHat ISO image. We use the Weibull distribution with shape and scale parameters of 0.59 and 41.0, respectively, from \cite{Stutzbach:2006:UCP:1177080.1177105}.\smallskip 

\noindent{\bf $\bullet$ Ethereum.} This dataset simulates join and departure events of IDs for the Ethereum network. Based on a study in \cite{kim2017measuring}, we use the Weibull distribution with shape parameter of 0.52 and scale parameter of 9.8. \smallskip 

\noindent{\bf $\bullet$ Gnutella.} This dataset simulates join and departure events for the Gnutella network. Based on a study in~\cite{rowaihy2007limiting}, we use an exponential distribution with mean of $2.3$ hours for session time, and Poisson distribution with mean of $1$ ID per second for the arrival rate. 

%%%%%%%%%%%%%%%%%%%%%%%%%%%%%%%%%%%%%%%%%%%%%%
%%%%%%%%%%%%%%%%%%%%%%%%%%%%%%%%%%%%%%%%%%%%%%

\subsection{Evaluating \ergo}\label{section:empasym}

We compare the performance of \ergo against four resource burning based Sybil defenses: (1) \AlgB~\cite{pow-without}, (2) \AlgA~\cite{li:sybilcontrol}, (3) \AlgC (a name that uses the authors' initials)\cite{rowaihy2005limiting} and (4) ERGO-SF.  Throughout our experiments, we measure performance in terms of the resource burning cost of each algorithm.

\medskip\smallskip

\noindent\textbf{$\bullet$ \AlgB.} It is the same as \ergo, except the hardness of RB challenge assigned to joining IDs is always 1. Thus, \AlgB does not need knowledge of the good join rate and, therefore, has no estimation component like \goodJest. 

\medskip

\noindent\textbf{$\bullet$ \AlgA.} Each ID solves a RB challenge to join. Additionally, each ID tests its neighbors with a RB challenge every $0.5$ seconds, removing from its list of neighbors those IDs that fail to provide a solution within a fixed time period.  These tests are not coordinated between IDs.

\medskip 

\noindent\textbf{$\bullet$ \AlgC.}  Each ID solves a RB challenge to join. Additionally, each ID must solve RB challenges every {\boldmath{$W$}} seconds. We use Equation (4) from \cite{rowaihy2005limiting} to compute the value of spend rate per ID as $\frac{L}{W} = \frac{n}{ N_{attacker}} = \frac{T_{max}}{\AdvPower N}$, where $L$ is the cost to an ID per $W$ seconds, $n$ is the number of IDs that the adversary can add to the system and $N_{attacker}$ is the total number of attackers in the system. The total good spend rate is:
\begin{equation}\label{eq:REMP}
	 \mathcal{A}_{REMP} = (1-\AdvPower)N \times \frac{L}{W} = \frac{(1-\AdvPower)T_{max}}{\AdvPower}
\end{equation}
\noindent to guarantee that the fraction of bad IDs is less than half.

\smallskip  

\noindent\textbf{$\bullet$ ERGO-SF.} This simulates the combination of \ergo and the ML-based method \SybilFuse; recall our discussion of \SybilFuse in Section~\ref{sec:related-work}.  \ergo is changed so that (1) each joining ID is classified as good or bad; and (2) All IDs that are classified as bad are refused entry.  For (1), we assume a classification accuracy of $0.98$, which is the average accuracy reported in~\cite{gao2018sybilfuse} for experiments run over both synthetic and real-world data (Section IV - B, last paragraph).  

We reiterate that, by itself, classification cannot solve \defID, since classifier error allows the adversary to eventually accumulate a bad majority in the system. However, combining \ergo with \SybilFuse allows us to explore the performance benefits a classifier may provide.

\medskip

\noindent{\bf Setup.}
For all algorithms, we measure the spend-rate, which is based solely on the cost of solving RB challenges.  We assume a cost of $k$ for solving a $k$-hard RB  challenge.  We set $\AdvPower = 1/18$, and  let $T$ range over $[2^0,2^{20}]$, where for each value of $T$, the system is simulated for $10,000$ seconds. We assume that the adversary only solves RB challenges to add IDs to the system.  For REMP, we consider $T_{max} = 10^7$ to ensure correctness for all values of $T$ considered.

%%%%%%%%%%%%%%%%%%%%%%%%%%%%%%%%%%%%%%%%%%%%%%
\begin{figure*}[t!]
\centering
	\includegraphics[trim = 1.8cm 6cm 1.8cm 6.5cm, height=5.5cm]{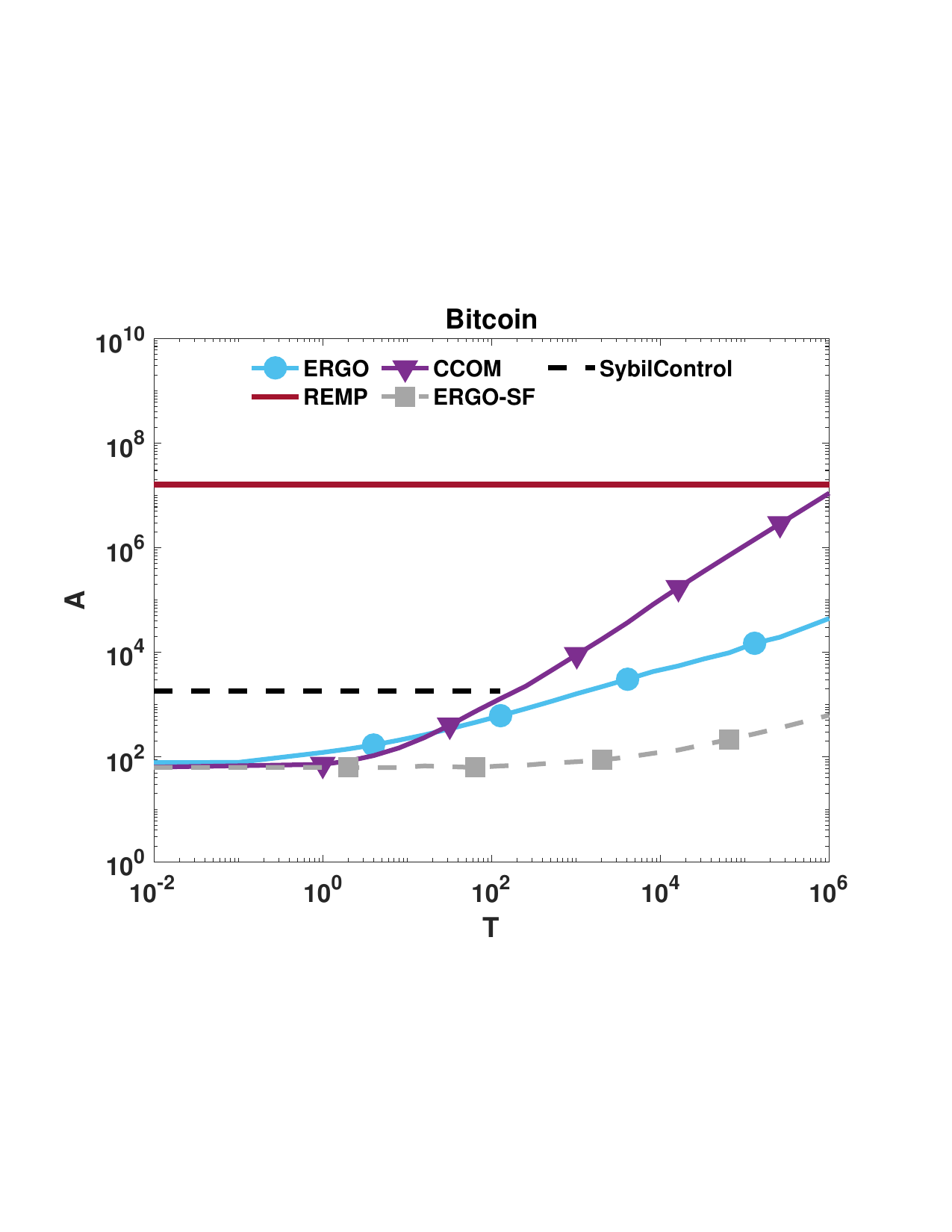}
	\includegraphics[trim = 1.8cm 6cm 1.8cm 6.5cm, height=5.5cm]{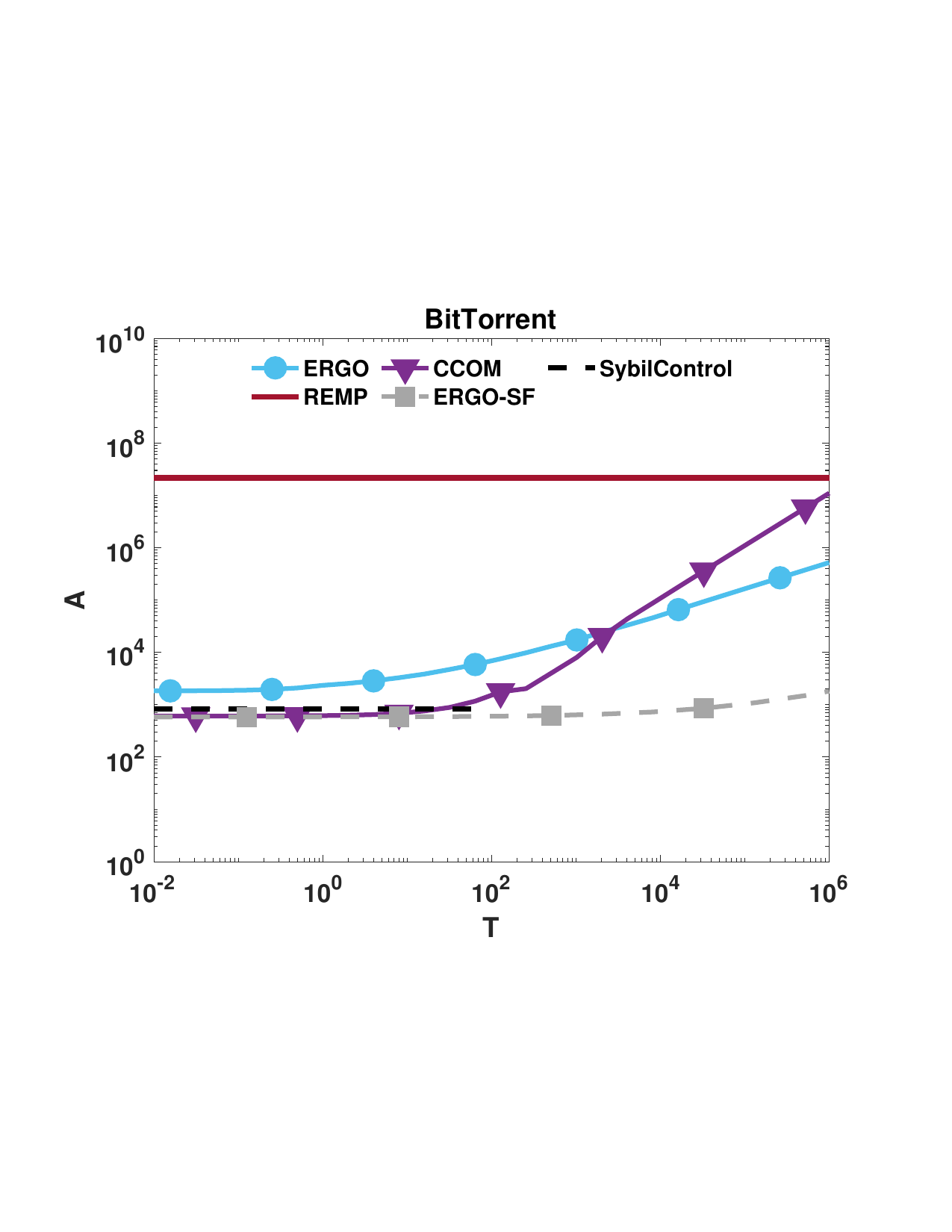}
	\includegraphics[trim = 1.8cm 6cm 1.8cm 6.5cm, height=5.5cm]{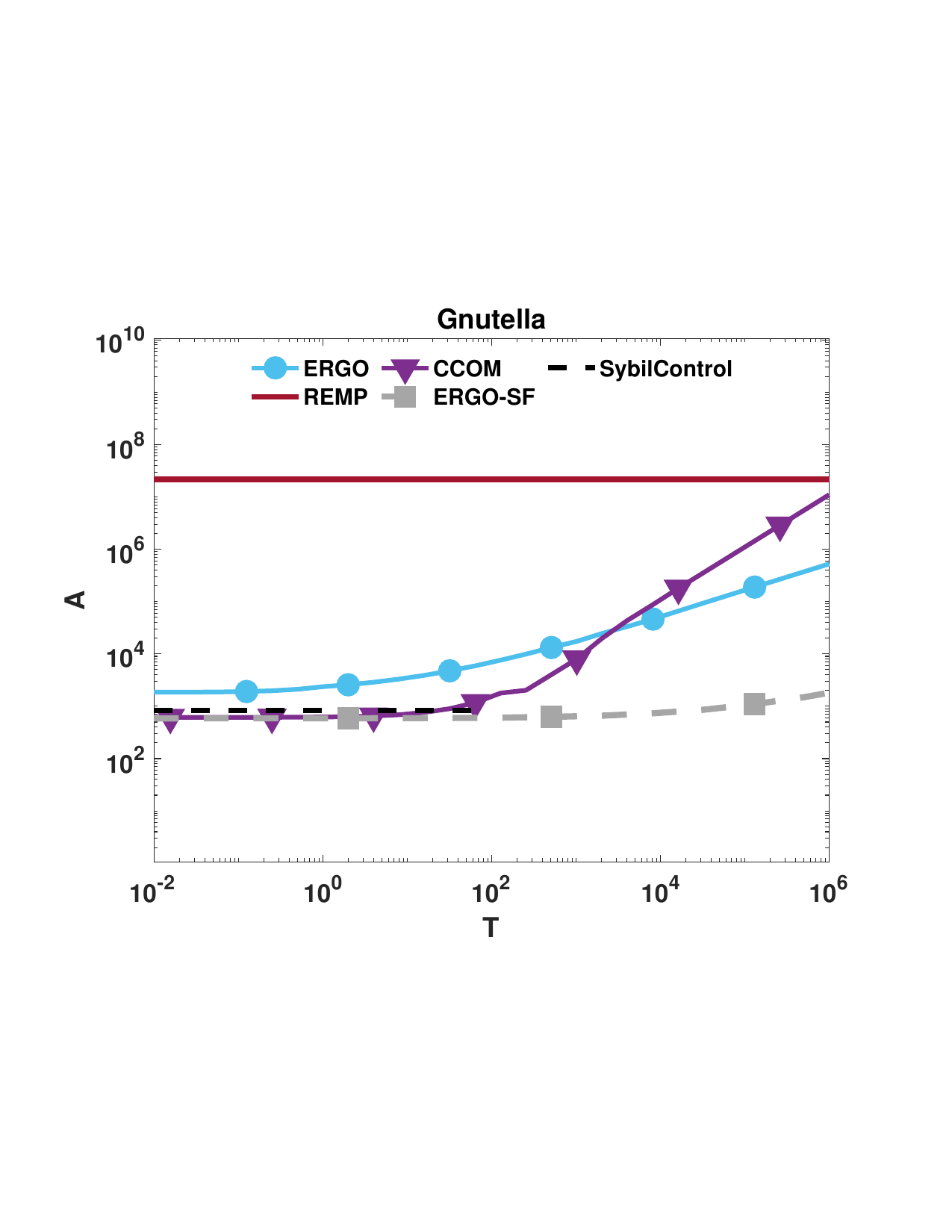}
 	\includegraphics[trim = 1.8cm 6cm 1.8cm 6.5cm, height=5.5cm]{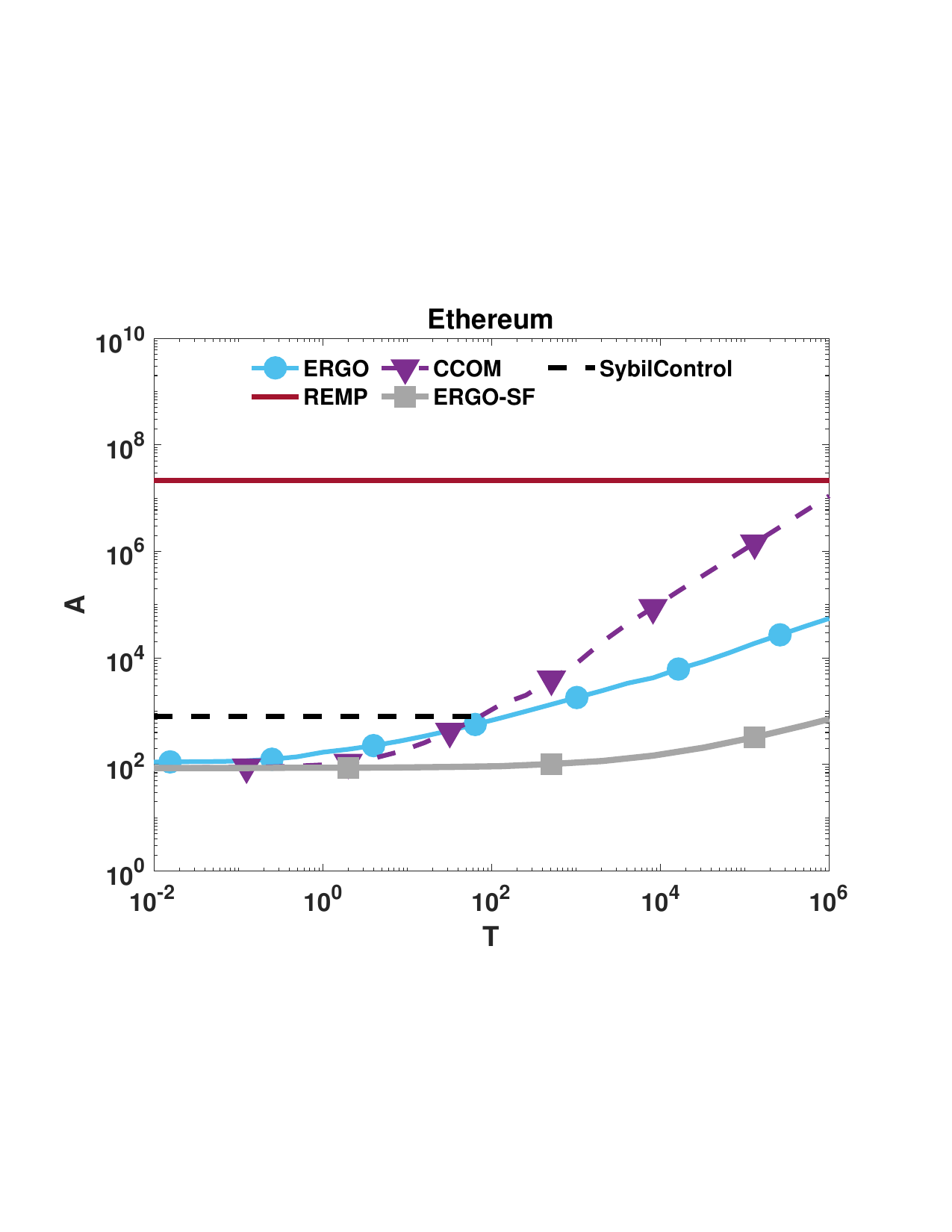}
\caption{Illustration of the good spend rate ($A$) versus adversarial spend rate ($T$).}
\label{fig:AvsT}
\end{figure*} 
%%%%%%%%%%%%%%%%%%%%%%%%%%%%%%%%%%%%%%%%%%%%%%

\medskip
\noindent{\bf Results.} Figure \ref{fig:AvsT} illustrates our results; we omit error bars since they are negligible. The x-axis is the adversarial spend rate, $T$; and the y-axis is the good spend rate, $\mathcal{A}$.  

We cut off the plot of \AlgA~when the algorithm can no longer ensure that the fraction of bad IDs is less than $1/6$. We also note that REMP-$10^7$ only ensures a minority of bad IDs for up to $T = 10^7$. 

\ergo always has a spend rate as low as the other algorithms for $T \geq 100$, and significantly less than the other algorithms for large $T$, with improvements that grow to about $2$ orders of magnitude. Our heuristic improves further, allowing \ergo to outperform for all $T \geq 0$. This is illustrated by ERGO-SF, which reduces costs significantly, yielding improvements of up to three orders of magnitude during the most significant attack tested. The spend rate for \ergo is linear in $\sqrt{T}$, agreeing  with our theoretical analysis.  We emphasize that the benefits of \ergo are consistent over four disparate networks.
 
\subsection{Evaluating the performance of \goodJest}\label{sec:evalEst}

%%%%%%%%%%%%%%%%%%%%%%%%%%%%%%%%%%%

\begin{figure*}[t!]
\centering
	\includegraphics[trim = 1.8cm 6cm 1.8cm 6.5cm, height=5.5cm]{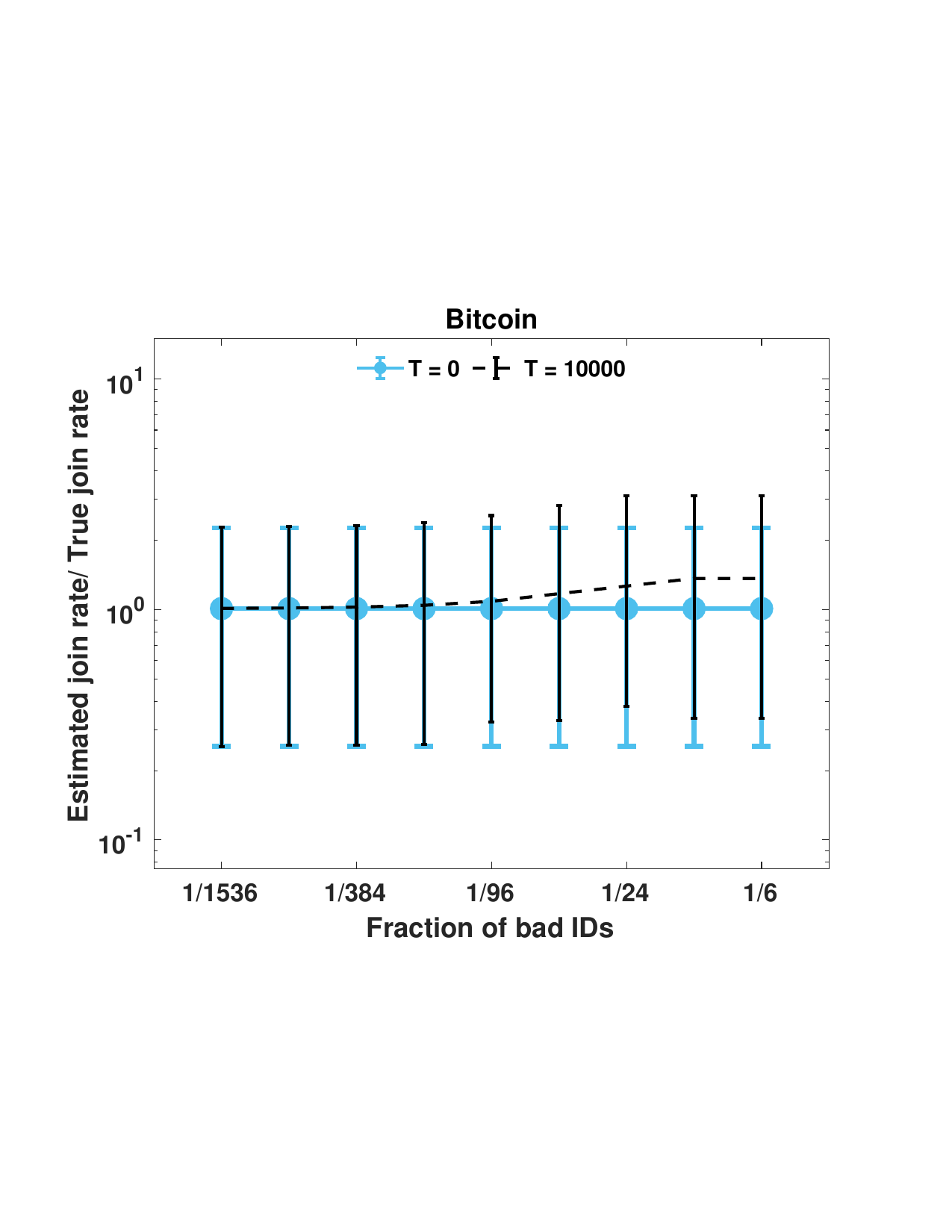}
	\includegraphics[trim = 1.8cm 6cm 1.8cm 6.5cm, height=5.5cm]{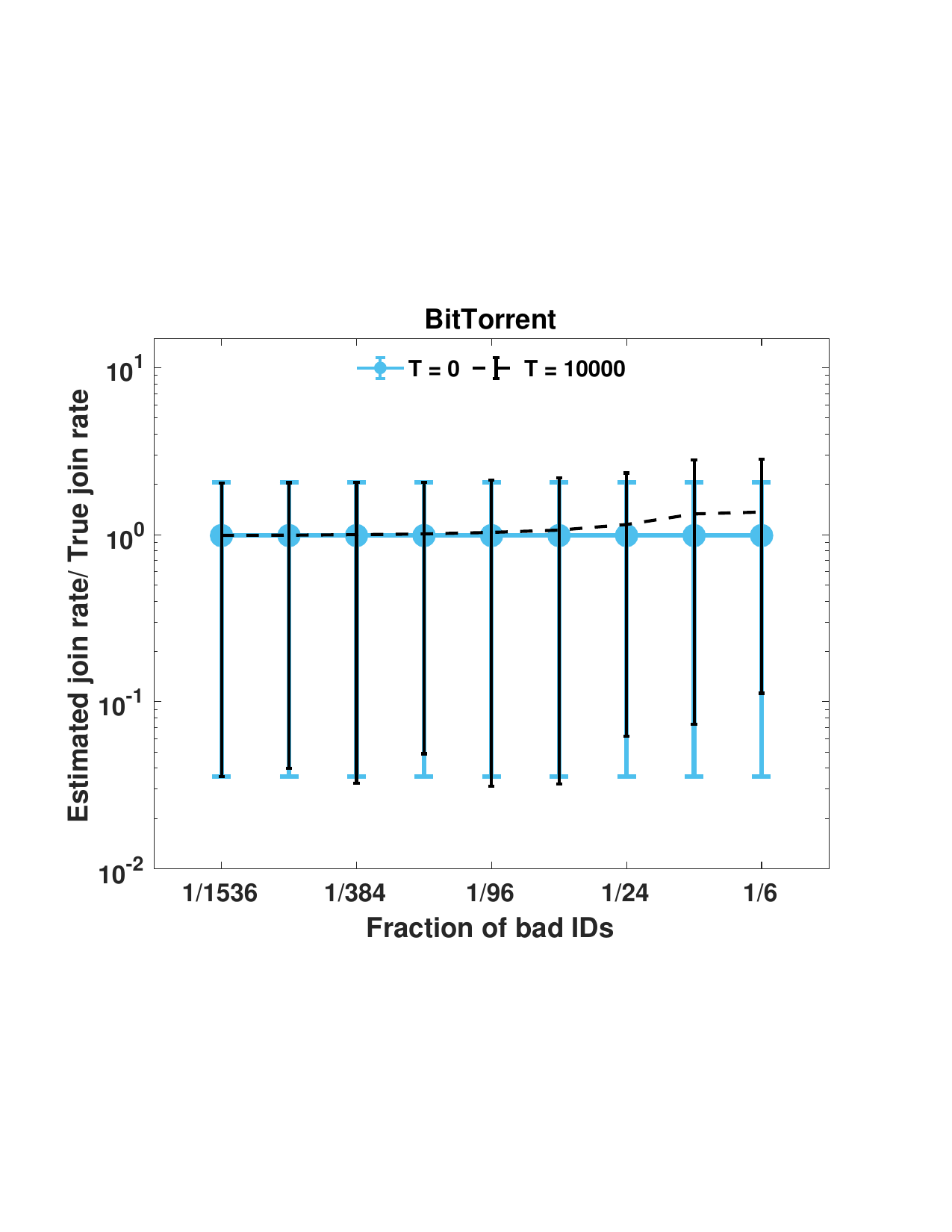}
\includegraphics[trim = 1.8cm 6cm 1.8cm 6.5cm, height=5.5cm]{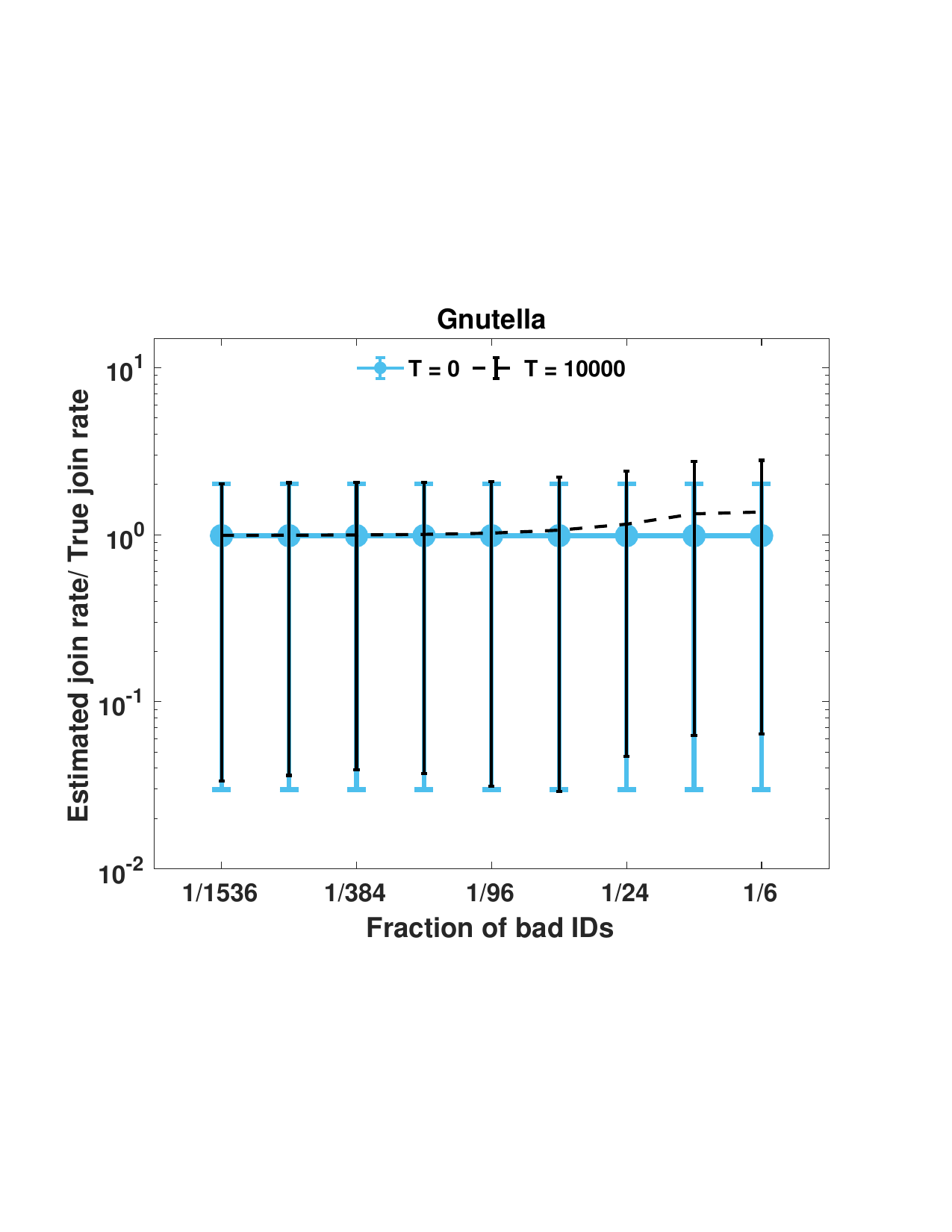}
	\includegraphics[trim = 1.8cm 6cm 1.8cm 6.5cm, height=5.5cm]{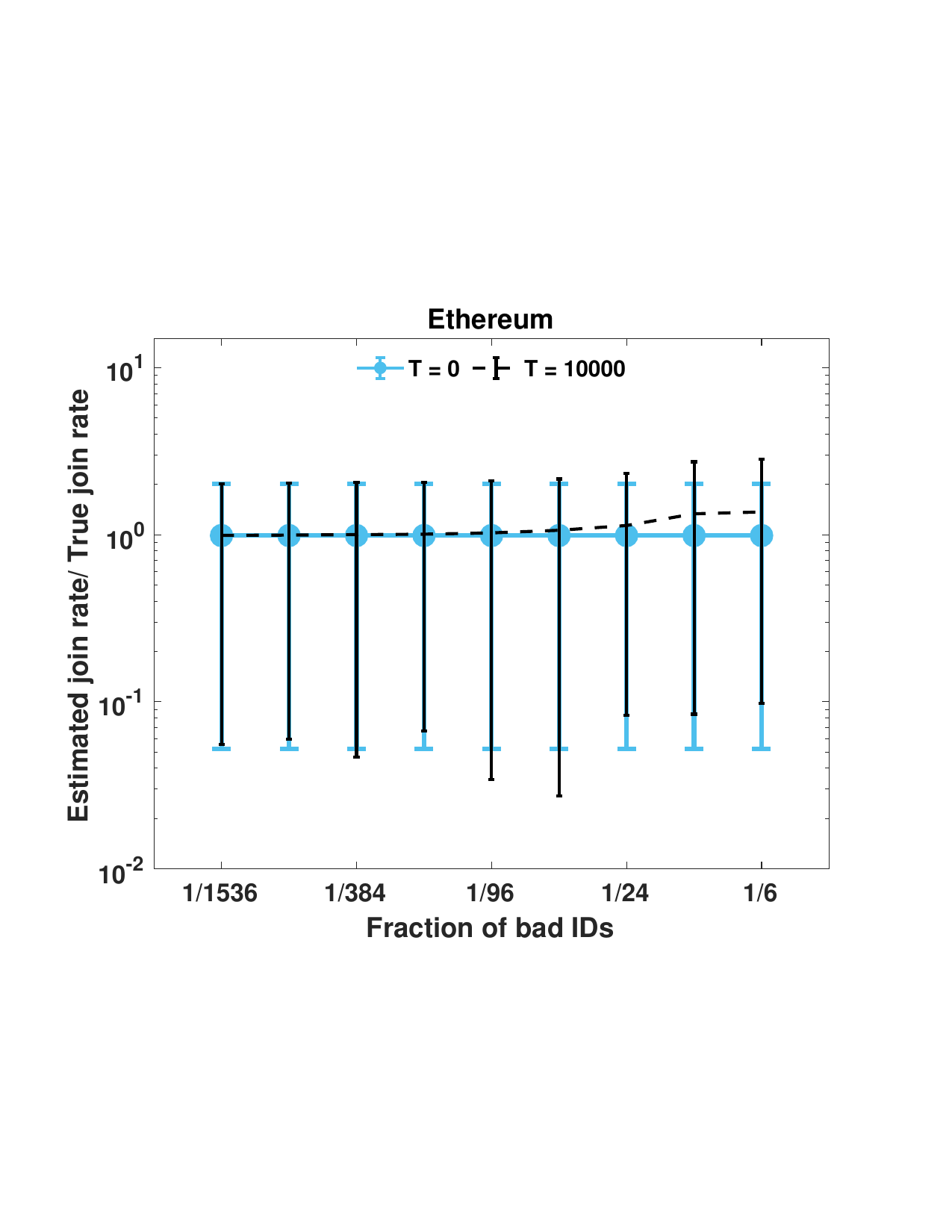}
\caption{Illustration for the ratio of \goodJest estimated to the true join rate for good IDs versus fraction of bad IDs.}
\label{fig:PI}
\end{figure*} 

%%%%%%%%%%%%%%%%%%%%%%%%%%%%%%%%%%%

Having witnessed the encouraging performance of \ergo (and the \ergo-based heuristic, \textsc{ERGO-SF}), we drill down further to examine the performance of \goodJest. We are interested in the accuracy of estimate for the good join rate over our data sets. For the Bitcoin network, the system initially consists of 9212 IDs, and the join and departure events are based off the dataset from Neudecker et al.~\cite{neudecker-atc16}. For the remaining networks, we initialize them with 10,000 IDs each, and simulate the join and departure events over 100,000 timesteps.  

In our simulations, all joins and leaves from the data sets are assumed to be good IDs. We experiment with different fractions of bad IDs that persist in the system; these fractions are $\{1/1500, 1/375,1/94,$ $1/24, 1/6 \}$. We note that $1/6$ actually exceeds the fraction under which our theoretical guarantees hold, but we include this value to observe the impact on performance. To test resilience, we also simulate an attack, where the adversary injects additional bad IDs at a constant rate that can be afforded when $T=10,000$.  For every interval, we measure the ratio of the estimate from \EstGoodJoin to the actual good join rate.

We report our results in Figure \ref{fig:PI}. These demonstrate the robustness of \EstGoodJoin. When $T = 0$, our estimate is always within range $(0.08,1.2)$ of the actual good join rate.  Moreover, even when $T=10,000$, our estimate is always within range $(0.08,4)$ of the actual good join rate.  

These results for \goodJest are encouraging, since they align qualitatively with our theoretical analysis. Indeed, these results suggest that the constants in the big-O notation of our analysis are not so large that they hamper performance in practice. 

%%%%%%%%%%%%%%%%%%%%%%%%%%%%%%%%%%%
%%%%%%%%%%%%%%%%%%%%%%%%%%%%%%%%%%%
%%%%%%%%%%%%%%%%%%%%%%%%%%%%%%%%%%%

%To determine effective heuristics, we focus on two separate RB costs to good IDs: the entrance cost and the purge cost. In studying these costs for the Bitcoin network in the absence of an attack, we find that the purge cost dominates and, therefore, we focus on reducing the purge frequency.

\subsection{Heuristics}\label{section:heuristics}

\noindent{\bf Setup.} The previous section offers encouraging evidence that \goodJest is providing \ergo with an accurate estimate of the good join rate, thus facilitating an appropriate entrance cost.  Here, we consider  natural modifications to the other mechanism by which \ergo imposes resource burning---{\it purges}---with the aim of further reducing the resource burning performed by \ergo relative to the adversary.

We present four heuristics that aim to reduce the rate at which purges are conducted. We emphasize that none these heuristics, with the exception of Heuristic 3,  sacrifice the guarantees of Theorem~\ref{thm:new-main-upper}. Our modifications do not yield new theoretical results; however, our goal is to examine practical performance improvements. 

Below, we describe our heuristics and provide intuition for their design.
\medskip

\noindent\textbf{Heuristic 1:}  We align the computation of $\tilde{J}$ with the end of the iteration.  Specifically, at time $t'$ marking the end of an interval, we do not necessarily calculate $\tilde{J}$ immediately using $S(t')$, but rather we wait for the end of the current iteration, if it does not also occur at time $t'$. After the purge occurs, we then calculate $\tilde{J}$. Thus, the fraction of bad IDs in $S(t')$ is reduced to at most $\kappa$, and this may improve the estimate from \goodJest (recall Figure~\ref{alg:estGoodRate}). Since only bad IDs are removed from $S(t')$, this heuristic does not impact our claims in Theorem~\ref{t:JoinEst} regarding \goodJest.\medskip 

\noindent\textbf{Heuristic 2:} We use the symmetric difference to determine when to do a purge.  Specifically, if an iteration starts at time $\tau$, and $\tau'$ is the current time, and if $\vert S(\tau) \triangle S(\tau') \vert  \geq \vert S(\tau)\vert /11$, then a purge is executed.  This still ensures that the fraction of bad IDs can increase by no more than in our original specification, but it also decreases the purge frequency. For example, consider the case where the adversary causes a single bad ID to join and depart repeatedly, and there is no other churn. Notably,  this behavior does not threaten the bound of ensuring less than a $1/6$-fraction of bad IDs and so there is no need to purge. However, since \ergo uses the number of joins and departures to demarcate iterations, a purge will occur nonetheless. In contrast, using the symmetric difference heuristic will avoid this.
\medskip

\noindent\textbf{Heuristic 3:}  When the threshold condition for a purge holds in \ergo, we do an additional check, which is as follows.  Determine if the total join rate over the current iteration is less than some constant $c$ times the good ID join-rate estimate from the prior iteration.  If it is greater, then a purge is performed, since we have seen many more joins than expected based on the prior estimate. Else, we go to the next iteration {\it without} purging. In our experiments, we set $c = 1/11$.  We note that this heuristic can fail to enforce correctness in the case where $c<\alpha$.  However, in our experiments, we confirmed that for all data sets, the heuristic still always kept the total fraction of bad IDs less than $1/6$.

\medskip

\noindent\textbf{Heuristic 4:} This is \textbf{ERGO-SF}, which we described in Section~\ref{section:empasym}.

\medskip

%%%%%%%%%%%%%%%%%%%%%%%%%%%%%%%%%%%

\begin{figure*}[t!]
\centering
	\includegraphics[trim = 1.8cm 6cm 1.8cm 6.5cm, height=5.5cm]{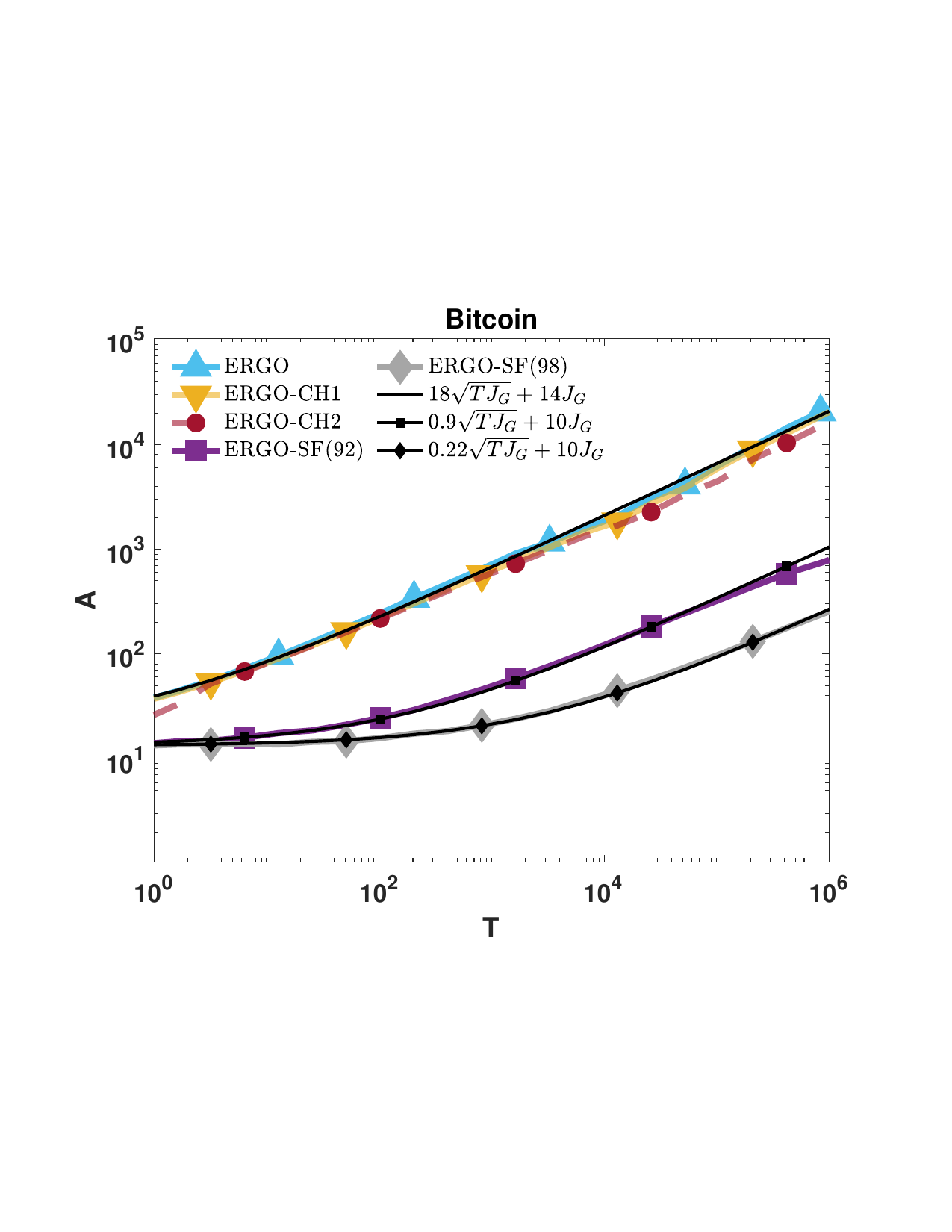} 
	\includegraphics[trim = 1.8cm 6cm 1.8cm 6.5cm, height=5.5cm] {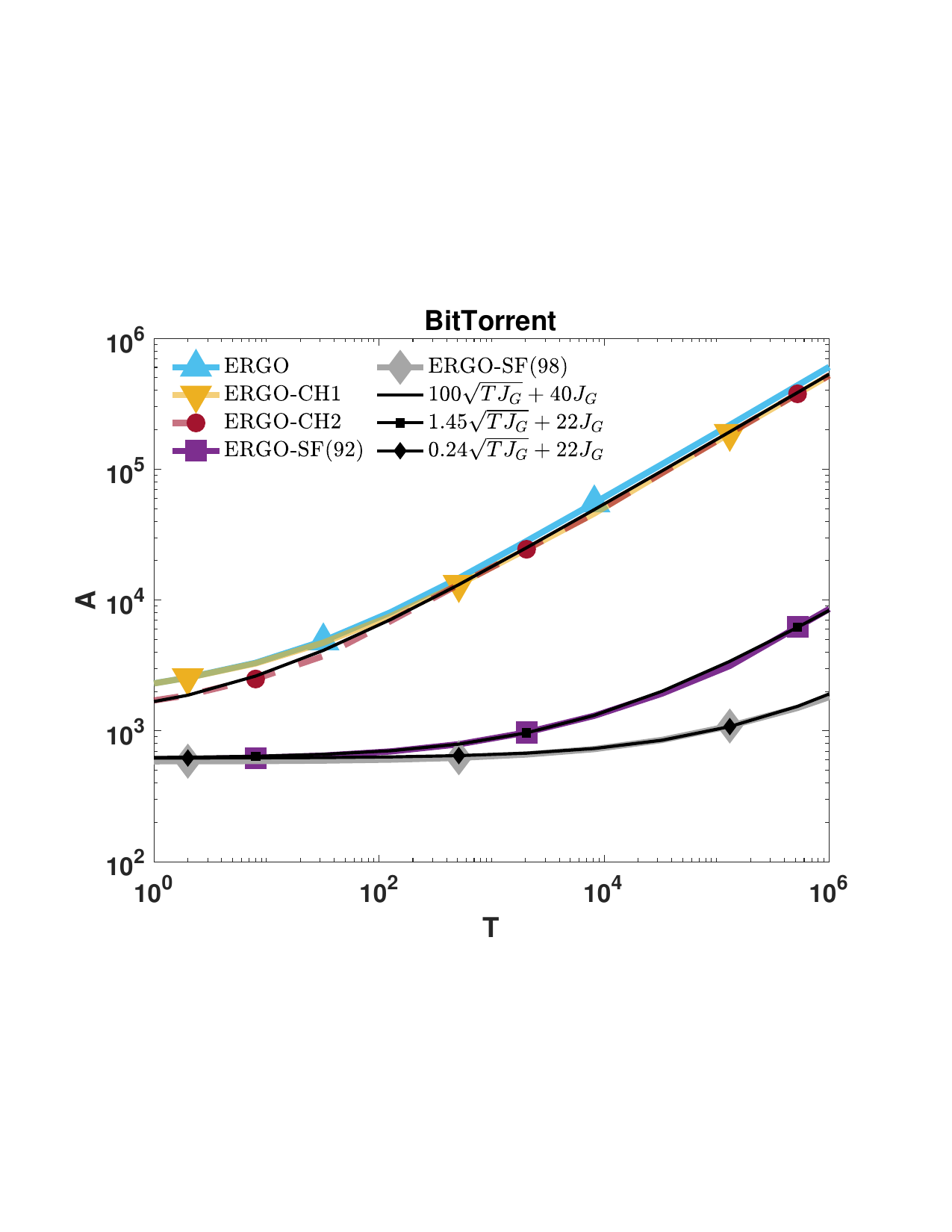}
	\includegraphics[trim = 1.8cm 6cm 1.8cm 6.5cm, height=5.5cm]{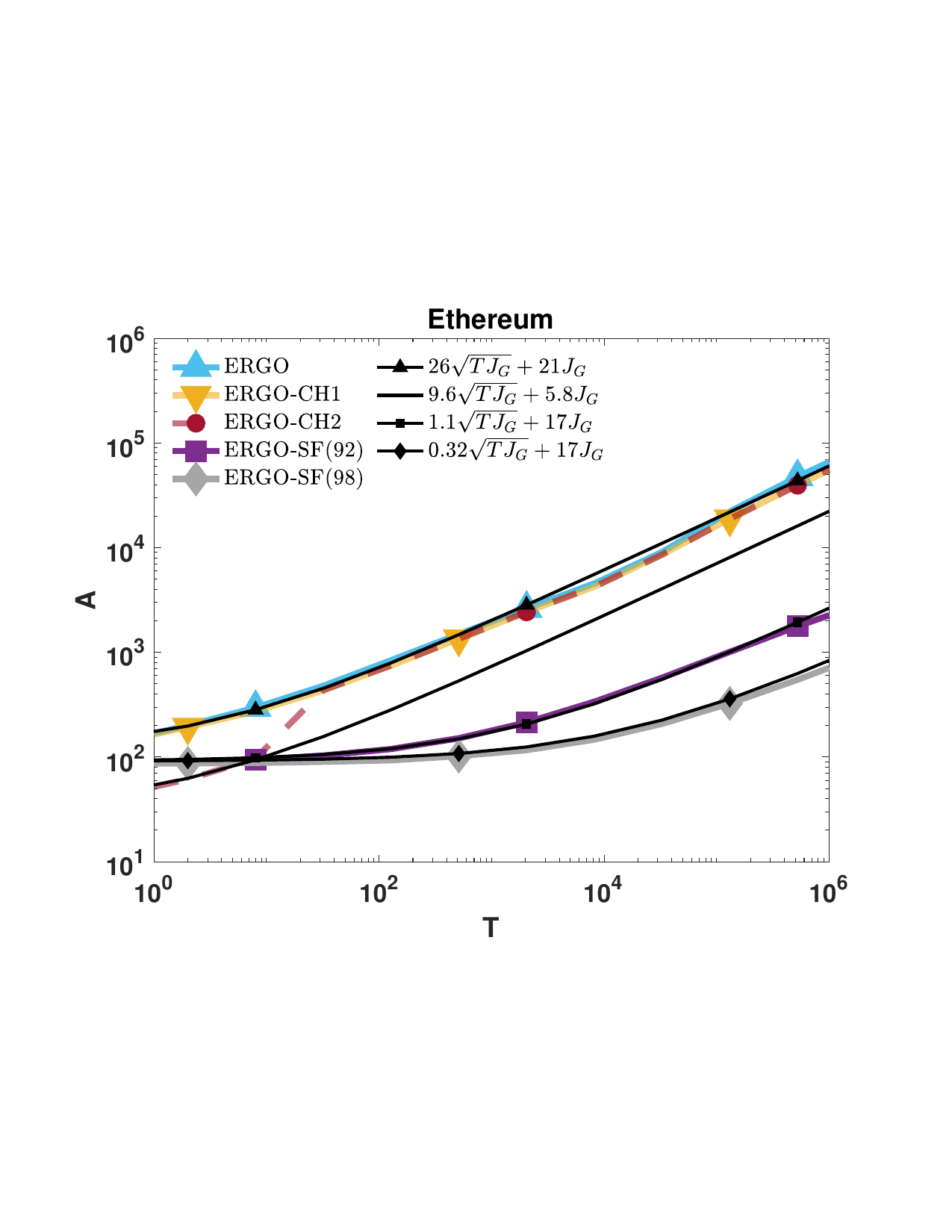} 
	\includegraphics[trim = 1.8cm 6cm 1.8cm 6.5cm, height=5.5cm]{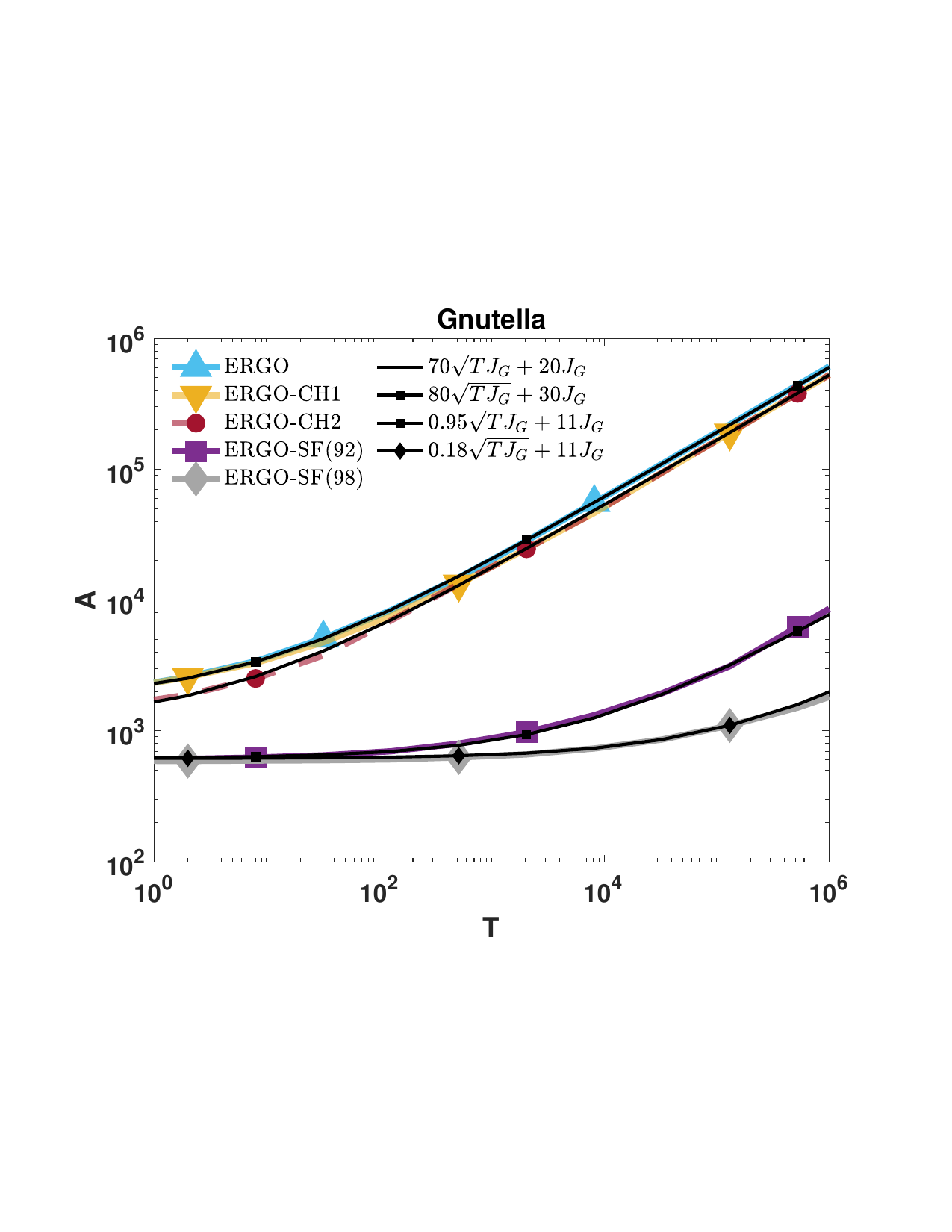} 
\caption{Algorithmic cost versus adversarial cost for \newAlg and heuristics.}
\label{fig:Heuristic}
\end{figure*}

%%%%%%%%%%%%%%%%%%%%%%%%%%%%%%%%%%%

We evaluate the performance of the above heuristics against \newAlg. The experimental setup is the same as Section~\ref{section:empasym}.  We define \defn{ERGO-CH1} using both Heuristic 1 and 2, and \defn{ERGO-CH2} using Heuristics 1, 2, and 3.  In the name, the ``CH" stands for combined heuristic. We define \defn{ERGO-SF(92)} and \defn{ERGO-SF(98)} as \newAlg using Heuristics 1, 2, 3, and 4, with the accuracy of \SybilFuse in Heuristic 4 set as 0.98 and 0.92, respectively. In the name, the ``SF" stands for \SybilFuse.  We simulated Heuristic 1 in conjunction with \newAlg, but, by itself, it did not yield  improvements, and so we omit it from our plots. \medskip

\noindent{\bf Results.}
Figure \ref{fig:Heuristic} illustrates our results. ERGO-SF(92) and ERGO-SF(98) reduce costs significantly during adversarial attack, with improvements of up to three orders of magnitude during the most significant attack tested. Again, these improvements are consistent across $4$ different types of data sets. 

Taken together, these results indicate one of the most important heuristics is the use of a classifier.  For all data sets but Ethereum, ERGO-SF(92) and ERGO-SF(98) always perform better than any other heuristic, and for all data sets, ERGO-SF(92) and ERGO-SF(98) perform best, with an increasing performance gap, for large values of $T$.

We note that the RB cost for Bitcoin and Ethereum when $T$ is small is smaller than for BitTorrent and Gnutella. This behavior is likely caused by the higher churn rates inherent to the latter networks as documented in prior literature~\cite{saroiu2001measurement,6688697,7140490,kiffer2021under}, which results in more frequent purges. 

Also related to the case for small $T$, we observe that Heuristic 3 yields good performance for the Ethereum data set and,  to a lesser degree, the Gnutella data set.  We would expect Heuristic 3 to work particularly well when (1) the good join rate does not change much from iteration to iteration; and (2) there are not many bad IDs joining the system.  
Based on our results for this heuristic, we speculate that the Ethereum network has a churn rate with less ``variability" than the other networks.

%%%%%%%%%%%%%%%%%%%%%%%%%%%%%%%%%%%%%%%%%%%
%%%%%%%%%%%%%%%%%%%%%%%%%%%%%%%%%%%%%%%%%%%
%%%%%%%%%%%%%%%%%%%%%%%%%%%%%%%%%%%%%%%%%%%

\section{A Lower Bound}\label{sec:lower}

In the previous sections, we derived upper bounds for our algorithms and explored their empirical performance. In this section, we provide a lower bound that applies to the class of algorithms which have the attributes $B1 - B3$ described below. 

\begin{itemize}
\item{\bf B1.} Each new ID must pay an entrance fee in order to join the system and this is defined by a \defn{cost function} {\boldmath{$f$}}, which takes as inputs the good join rate and the adversarial join rate.\smallskip
\item{\bf B2.} The algorithm executes over iterations, delineated by when the condition $a + d \geq  \delta\, n $ holds, for any fixed positive $\delta$, where $a$ and  $d$ are the number of arrivals and departures over the  iteration, and $n$ is the number of IDs in the system at the start of the iteration.\smallskip
\item {\bf B3.} At the end of each iteration, each ID must pay $\Omega(1)$ to remain in the system.
\end{itemize}

\smallskip
We emphasize that B1 captures any cost function where the cost during an iteration depends only on the good join rate and the bad join rate.  Our analysis of \ergo makes this assumption since its cost function depends on estimates of the good join rate and the total, i.e. the good-ID join rate plus the bad-ID join rate.  With regard to B2 and B3, recall that we wish to ensure that the fraction of bad IDs is always less than some constant fraction. It is hard to imagine an algorithm that preserves this invariant without an RB challenge being imposed on all IDs whenever the system membership changes significantly.

\subsection{Lower-Bound Analysis}
\noindent Restating in terms of the conditions above, we have the following result.\smallskip

%{\bf Theorem 2.}
\begin{theorem}Suppose an algorithm satisfies conditions B1-B3, then there exists an adversarial strategy that forces the algorithm to spend at a rate of $ \Omega(\sqrt{\advAveCost\,\joinRate} + \joinRate)$, where $\joinRate$ is the good ID join rate, and $\advAveCost$ is the algorithmic spending rate, both taken over the iteration.
\end{theorem}\label{t:lower-bound}
\begin{proof}
 Fix an iteration.  Let $n$ be the number of IDs in the system at the start of the iteration.  The adversarial will have bad IDs join uniformly at the maximum rate possible, and then have the bad IDs drop out during the purge.  In particular, let $\advRate$ be the rate at which bad IDs join, and let $f(\advRate, \joinRate)$ be the algorithm's entrance cost function based on $\advRate$ and $\joinRate$; we pessimistically assume that the algorithm knows both $\advRate$ and $\joinRate$ exactly.  Then $\advRate = T / f(\advRate, \joinRate)$.

We first calculate the algorithmic spending rate due to purge puzzle costs in the iteration.  Since the iteration ends after $O(n)$ join events (B2), and since each purge puzzle has a cost of $\Omega(1)$ (B3), spending on purge puzzles is asymptotically at least equal to the number of good and bad IDs join the system.  Thus, the spending due to purge puzzles is $\Omega (\joinRate + \advRate)$.
 		 	
\smallskip			
\noindent We now have two cases:\medskip

\noindent	
     \textbf{Case 1:} {\boldmath{$f(\advRate, \joinRate) \leq \advRate/\joinRate$}}.  In this case, we have:    
           \begin{eqnarray*}
	 	\advRate & \geq & \joinRate f(\advRate, \joinRate) =  \joinRate T/ \advRate
	 \end{eqnarray*}
where the above equality holds since $f(\advRate, \joinRate) = \advAveCost/\advRate$.  Solving for $\advRate$ we get:
	 \begin{eqnarray*}
	 	\advRate &\geq & \sqrt{\advAveCost\,\joinRate}
	 \end{eqnarray*} 	 

Since the algorithmic spending rate due to purge costs is $\Omega(\advRate+ \joinRate)$, the total algorithmic spending rate is:
      \begin{eqnarray*}
 \Omega(\advRate + \joinRate) & = & \Omega\left(\sqrt{\advAveCost\,\joinRate} + \joinRate\right)
	 \end{eqnarray*}

\noindent\textbf{Case 2:} {\boldmath{$f(\advRate, \joinRate) > \advRate/\joinRate$}}. In this case, we have:	 
	 \begin{eqnarray*}
	 	\advRate & < & \joinRate\,f(\advRate, \joinRate) = \joinRate T/\advRate
	 \end{eqnarray*}
The above equality follows since $f(\advRate, \joinRate) = $ $\advAveCost/\advRate$.  Solving for $\advRate$, we get:
	 \begin{eqnarray*}
	 	\advRate & < & \sqrt{\advAveCost\,\joinRate}.
	 \end{eqnarray*}

The spending rate for the algorithm due to entrance costs is $\Omega(\joinRate\, f(\advRate, \joinRate))$.  Adding in the spending rate for purge costs of $\Omega (\joinRate + \advRate)$, we get that the total spend rate of the algorithm is:
	 \begin{eqnarray*}
 \Omega(\joinRate\, f(\advRate, \joinRate) + (\joinRate + \advRate))
	 & = & \Omega(\joinRate\, \advAveCost/\advRate + (\joinRate + \advRate)) \\
	 & = & \Omega\left(\sqrt{\advAveCost\,\joinRate} + \joinRate\right).
	 \end{eqnarray*}
\end{proof}

%%%%%%%%%%%%%%%%%%%%%%%%%%%%%%%%%%%%%%%%%%%%%%%%
%%%%%%%%%%%%%%%%%%%%%%%%%%%%%%%%%%%%%%%%%%%%%%%%
%%%%%%%%%%%%%%%%%%%%%%%%%%%%%%%%%%%%%%%%%%%%%%%%

\section{Decentralization} \label{s:committee}

When there is no centralized server, we can run our algorithms using a \defn{committee}: a $O(\log n_0)$ sized subset of IDs with a good majority. This committee takes over the responsibilities of the server, which means the committee runs \goodJest and \ergo in a robust, distributed fashion.  Below, we discuss the necessary modifications needed to maintain and use such a committee.

The results in this section are straight-forward applications of tools that have already been developed in prior literature, such as state-machine replication and committee election.

\medskip\smallskip

%%%%%%%%%%%%%%%%%%%%%%%%%%%%%%%%%%%%%%%%%%%

\noindent\textbf{Model Modifications.}\label{s:model-modifications}
Our communication model inherits the assumptions made in~\cite{abraham2019sync} and~\cite{awerbuch:towards2}.   These assumptions allow us to implement State Machine Replication and Committee Election, respectively.  

Thus, we assume synchronous communication, and that there exist secure and authenticated communication channels between all pairs of IDs in the committee.  
Additionally, we require a secure and authenticated communication  channel between each member of the committee and each ID in the system.  This enables challenge solutions to be sent to the committee, and election information to be sent to all participants.

Further, in order to use the algorithm by Rabin and Ben-Or~\cite{rabin1989verifiable} for committee selection (see below in Section~\ref{s:use-and-maintenance}), the adversary is oblivious to the private, random bits generated by any good ID.

%Here, a \defn{round} is the amount of time it takes to solve a $1$-hard RB challenge {\it plus} the time for communication between the committee  and corresponding ID in order to issue the challenge and returning a solution. In any round, up to a constant fraction of the good IDs may depart.  This is necessary to ensure that not too many good IDs leave the committee before they can be replaced.  

%%%%%%%%%%%%%%%%%%%%%%%%%%%%%%%%%%%%%%%%%%%

\subsection{Committee and System Initialization}\label{sec:commMain}

\noindent
\textbf{\genID.} To initialize our system, we require a solution to  \genID (recall Section~\ref{s:churn}). \genID guarantees that at initialization, (1) all good IDs agree on the same set of IDs; and (2) at most a $\kappa$-fraction of IDs in that set are bad.  Additionally, \genID ensures that all good IDs agree on a committee of logarithmic size with a majority of good IDs.  There are several algorithms that solve \genID in our model, as defined in Section \ref{sec:model-main}~\cite{andrychowicz2015pow, hou2017randomized,katz2014pseudonymous,aggarwal2019bootstrapping}.  The algorithm in~\cite{aggarwal2019bootstrapping}, makes use of computational challenges, and runs in expected $O(1)$ rounds, and, in expectation, requires each good ID to send $O(n)$ bits, and solve $O(1)$ $1$-hard RB challenges.

%%%%%%%%%%%%%%%%%%%%%%%%%%%%%%%%%%%%%%%%%%%

\subsection{Use and Maintenance of Committee}\label{s:use-and-maintenance}

The committee makes use of \defn{State Machine Replication (SMR)}~(see \cite{attiya2004distributed,lynch1996distributed,abraham2019sync, bessani2014state})  to agree on an ordering of network events so as to execute \goodJest and \ergo in parallel. In order to scalably and correctly execute SMR, we maintain the invariant that the committee always has size $\Theta(\log{n_0})$ and a majority of good IDs.

To maintain this invariant, a new committee is elected by the old committee at the end of each iteration.  In particular, at the end of iteration $i$, the old committee selects a committee of size $C\log  N_i $, where $C>1$ is a sufficiently large constant. This committee selection process can be accomplished in our model via classic secure multiparty computation protocols; for example, see Rabin and Ben-Or~\cite{rabin1989verifiable}. Note that subsequent results can accomplish the same task more efficiently, but require cryptographic assumptions.  For example, Awerbuch and Scheideler~\cite{awerbuch:random} describe an algorithm specifically for random number generation that can be used by the existing committee to select new committee members. 

%%%%%%%%%%%%%%%%%%%%%%%%%%%%%%%%%%%%%%%%%%%
%%%%%%%%%%%%%%%%%%%%%%%%%%%%%%%%%%%%%%%%%%%
%%%%%%%%%%%%%%%%%%%%%%%%%%%%%%%%%%%%%%%%%%%

\medskip

With the modifications above, \newAlg gives the following guarantees:  

\begin{theorem}\label{thm:decent}
	For $\AdvPower \leq 1/18$,  \newAlg ensures that the good spend rate is: 
        $$O\left(\sma^{11/2}\smb^7 \sqrt{\advAveCost(\jAll + 1)}  + \sma^{11}\smb^{14}\jAll \right) .$$
	%$$O\left(\sma^6\smb^4 \left(\sqrt{\advAveCost(\joinRate + 1)}  + \joinRate \right) \right)$$
	and guarantees:
	\begin{enumerate}
	\item The fraction of bad IDs in the system is less than $1/6$.
	\item The fraction of bad IDs in the committee is at most $1/8$.
	\end{enumerate}
\end{theorem}  

Much of the proof of Theorem~\ref{thm:decent} follows from earlier arguments, but we must still show that the committee does indeed always have size $\Theta(\log n_0)$ and a good majority.
Thus, the following lemma is useful.

\begin{lemma}\label{lem:maj_comm}
	The following holds for all iterations $i > 0$, \whp. \newAlg maintains a committee with at least a $7/8$ fraction of good IDs.  Moreover, this committee has size $\Theta(\log n_0)$.  
\end{lemma}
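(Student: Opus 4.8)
<br>

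The plan is to prove this by induction, indexing committees by election time: for $i \ge 0$, let $\mathcal{C}_i$ denote the committee elected at the end of iteration $i$ (for $i=0$, at system initialization), which then serves during iteration $i+1$. The inductive hypothesis I would carry is that $\mathcal{C}_i$ has size exactly $C\log|S_i|$ at election and that at most a small constant fraction $c_0$ of it is bad (I would target $c_0 \le 1/12$); I will show this implies the two stated bounds persist throughout iteration $i+1$, and in particular that $\mathcal{C}_i$ still holds a good supermajority at the moment it must elect $\mathcal{C}_{i+1}$, which closes the induction. For the base case, after \genID\ all good IDs agree on an ID set with at most a $\kappa=1/18$ fraction of bad IDs together with a good-majority committee of logarithmic size; I would use that committee to run the robust random-selection subroutine of Awerbuch and Scheideler~\cite{awerbuch:random} --- whose output the adversary cannot appreciably bias given a good majority --- to sample $C\log|S_0|$ IDs uniformly from the initial set, and since at most $1/18$ of that set is bad, a Chernoff bound gives, whp, bad fraction at most $c_0$ for $\mathcal{C}_0$, whose size is exactly $C\log|S_0|$.

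\textbf{Inductive step.} Assume $\mathcal{C}_{i-1}$ satisfies the hypothesis. By the conclusion for index $i-1$ (proved below), $\mathcal{C}_{i-1}$ still has a good supermajority at the end of iteration $i$, so it can correctly run the Awerbuch--Scheideler selection to draw $\mathcal{C}_i$ of size exactly $C\log|S_i|$ uniformly from $S_i$; and by Lemma~\ref{lem:bound_b}, applied just after the purge closing iteration $i$, at most a $1/(18(1-\epsilon))$ fraction of $S_i$ is bad, so the same Chernoff argument gives $\mathcal{C}_i$ bad fraction $\le c_0$ whp. It remains to show $\mathcal{C}_i$ stays within the claimed envelope over iteration $i+1$. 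New IDs joining during iteration $i+1$ are never added to $\mathcal{C}_i$, so $\mathcal{C}_i$ only loses members and its bad count is non-increasing; thus size and good fraction can be eroded only by departures of good committee members. By Step 2 of \ergo, joins plus departures in iteration $i+1$ total at most $|S_i|/11$, and combining with Lemma~\ref{lem:boundsize} and the population invariant ($\ge 5/6$ good at all times, since the fraction of bad is $<1/6$ by Lemma~\ref{lem:badlesshalf-full}), the number of good departures over the iteration is at most a fixed constant fraction (about $3/25$) of the good population present at each such departure. Crucially, each departing good ID is chosen uniformly at random from the current good IDs, independently of committee membership (Section~\ref{sec:join}); hence the number of good members of $\mathcal{C}_i$ that depart over iteration $i+1$ is stochastically dominated by a hypergeometric variable of mean at most $\approx (3/25)C\log|S_i|$, and a Chernoff/Hoeffding bound shows it is, whp, below $(1-c_0-7/9)C\log|S_i|$. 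Therefore, throughout iteration $i+1$ the good count of $\mathcal{C}_i$ exceeds $(7/9)C\log|S_i|$; its size lies in $[(7/9)C\log|S_i|,\,C\log|S_i|]$ (it starts at $C\log|S_i|$ and only decreases); and its good fraction exceeds $\tfrac{7/9}{7/9+c_0}>7/8$ for $c_0$ small enough --- in particular $\mathcal{C}_i$ keeps a good supermajority.

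\textbf{Union bound and choice of $C$.} The system lifetime comprises $n_0^{\lifetime}$ join/departure events, hence $O(n_0^{\lifetime})$ iterations; each Chernoff estimate above (bad fraction of a freshly elected committee, and number of good committee departures per iteration) fails with probability $n_0^{-\Omega(C)}$ once the committee has size $\Omega(C\log n_0)$, so $C=3300(\lifetime+1)$ makes each failure probability at most $n_0^{-(\lifetime+2)}$ and the overall failure probability $o(1/n_0)$, matching the theorem's error budget.

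\textbf{Main obstacle.} I expect the crux to be the survival step: the expected number of good committee members lost per iteration is only slightly below the slack between the freshly sampled good fraction and the target $7/9$ (with the figures above, roughly $3/25$ versus roughly $5/36$), so the concentration must be quantitatively tight, and reconciling these constants is exactly what forces $C$ to be as large as $3300(\lifetime+1)$. A second essential subtlety is guaranteeing the adversary cannot over-represent bad IDs in a re-elected committee; this is why a robust random-number-generation primitive must be used for committee selection, and why the induction must propagate the good-\emph{supermajority} property rather than a bare majority.
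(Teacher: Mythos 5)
Your proposal is correct and matches the paper's proof in every essential respect: a base case via \genID, a Chernoff bound on the good/bad split of the freshly drawn committee using the $\kappa$-bound from Lemma~\ref{lem:bound_b}, a Chernoff bound on the number of uniformly-random good departures that happen to hit the committee during an iteration, and a union bound over the $n_0^{\lifetime}$ iterations — with the departure-concentration step being the binding constraint on $C$, exactly as the paper's constants also reflect. The one cosmetic improvement over the paper is that you make the induction explicit (noting that $\mathcal{C}_{i-1}$'s surviving supermajority is what licenses it to run the Awerbuch--Scheideler selection at the end of iteration $i$); the paper leaves that dependency implicit and simply union-bounds over iterations.
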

\begin{proof}
For iteration $i = 0$, the lemma holds true by properties of the \genID algorithm used to initialize the system (See Section \ref{sec:commMain}, and Lemma 6 of \cite{andrychowicz2014distributed}).
	
Fix an iteration $i > 0$. Recall that $B_i$ and $G_i$ are the number of bad and good IDs in the system at the end of iteration $i$, respectively, $N_i = B_i + G_i$. Recall that a new committee is elected by the existing committee at the end of an iteration by selecting $C\log N_i $ IDs independently and uniformly at random from the set $S_i$, for a sufficiently large constant $C>0$, defined concretely later in this proof. Let $X_G$ $(X_B)$ be random variables for the number of good (bad) IDs, respectively elected to the new committee at the end of iteration $i$. Then:
	\begin{align}
		E[X_G] & = \frac{ G_i }{N_i }C\log{N_i } \nonumber\\ 
		&= \left(1-\frac{ B_i }{N_i }\right)C\log{N_i } \nonumber \\
		&\geq  \left(1-\frac{1}{18(1-\epsilon)}\right)C\log N_i  \nonumber \\
		&\geq \left(1-\frac{1}{18(11/12)}\right)C\log N_i  \nonumber \\
		&= \frac{31}{33}C\log N_i  
	\end{align}
	
	 In the above, the third step follows from Lemma \ref{lem:bound_b} and fourth step since $\epsilon < 1/12$. Similarly, the expected number of bad IDs elected into the committee is:
	 \begin{align}
	     	E[X_B] & = \frac{ B_i }{N_i }C\log{N_i } \leq \frac{1}{18(1-\epsilon)}C\log N_i 
	 \end{align}
	 
	 Next, by Chernoff bounds~\cite{mitzenmacher2017probability}, for any constant $0 < \delta < 1$:
	 \begin{align*}
	& Pr \left( X_G \leq (1-\delta)\frac{31}{33}C\log N_i  \right)\\
	& \leq  \exp{\left\{ -\frac{31\delta^2 C\log N_i }{66} \right\}}\\
	 	& =  N_i ^{-\frac{31C\delta^2}{66}}\\
	 	& \leq {n_0^{-(\lifetime + 1)}}
	 \end{align*}
    The last step holds for all $C \geq \frac{66(\lifetime + 1)}{31\delta^2}$. For $\delta = 13/310$, we obtain that the number of good IDs in the committee is at least $(9/10) C\log{ N_i }$ w.h.p.
    
    Again:
    \begin{align*}
        & Pr \left( X_B \geq (1+\delta)\frac{2}{33}C\log N_i  \right)\\
        & \leq  \exp{\left\{ -\frac{2\delta^2 C\log N_i }{99} \right\}}\\
	 	& =  N_i ^{-\frac{2C\delta^2}{99}}\\
	 	& \leq {n_0^{-(\lifetime + 1)}}
    \end{align*}
    The last step holds for all $C \geq \frac{99(\lifetime + 1)}{2\delta^2}$. For $\delta = 13/20$, we obtain that the number of bad IDs in the committee is at most $(1/10) C\log{ N_i }$ w.h.p.

    \medskip
	 	 	
    \noindent Next, let $Y_g$ be the number of good IDs that depart from the committee during iteration $i$.  Note that the number of good departures in iteration $i$ is at most $N_{i-1} /11$. Then, since each departing good ID is selected independently and uniformly at random (See Section \ref{sec:model-main}), we obtain:
    
		\begin{align*}
			E[Y_g] &\leq \frac{ N_{i-1} }{11} \left(\frac{C\log N_i }{ N_i }\right)\\
			&\leq \left(\frac{11 N_i /10}{11}\right) \frac{C\log N_i }{ N_i }\\
            &=\frac{C}{10}\log N_i 		
		\end{align*}  
		
	The second step holds since over an iteration at most $ N_{i-1} /11$ IDs depart, so $ N_i  \geq 10 N_{i-1} /11$.
	
	\medskip
	
	Again, using Chernoff bounds, we have:
		\begin{align*}
			Pr \left( Y_g \geq (1+\delta')\frac{C}{10} \log N_i  \right) &\leq  \exp\left\{ -\frac{\delta'^2C}{30}\log N_i  \right\}\\
			& =  N_i ^{-\delta'^2C/30}\\
			& \leq n_0^{-(\lifetime+1)}
		\end{align*}
	where the first step holds for any constant $0< \delta' < 1$ and the last step holds for all $C \geq \frac{30(\lifetime + 1)}{\delta'^2}$. Thus, for $\delta' = 1/9$, the number of good IDs in the committee is always greater than $(9/10)C\log{ N_i } -  (1/9)C\log{ N_i } = (71/90)C\log{ N_i } $.

	Next, the fraction of good IDs in the committee is minimized when only good IDs depart from the committee. Thus, the fraction of good IDs in the committee is always at least:
	$$\frac{(71/90)C\log{ N_i }}{C\log N_i  - (1/9)C\log{ N_i }} > \frac{7}{8}$$

To bound the committee size, note that the committee always has a number of IDs at least:
	$$C\log{ N_i } - \frac{C}{10}\log{ N_i } - \frac{C}{9}\log{ N_i } = \Theta(\log n_0)$$

	Finally, we use a union bound over all $n_0^\lifetime$ iterations to show that the above facts hold, \whp, for all iterations. 
\end{proof}

The proof of Theorem \ref{thm:decent} now follows from Lemmas ~\ref{lem:badbounded},~\ref{l:cost}, and~\ref{lem:maj_comm}.

%%%%%%%%%%%%%%%%%%%%%%%%%%%%%%%%%%%%%%%%%%%
%%%%%%%%%%%%%%%%%%%%%%%%%%%%%%%%%%%%%%%%%%%
%%%%%%%%%%%%%%%%%%%%%%%%%%%%%%%%%%%%%%%%%%%

\section{Conclusion and Future Work}\label{sec:future} \ergo is a new Sybil-defense that efficiently employs resource-burning to limit Sybil IDs, despite high churn.  Specifically,  \ergo  guarantees (1) there is always a minority of Sybil IDs; and (2) despite a churn rate that can vary exponentially, the resource burning rate of good IDs is $O(\sqrt{TJ} + J)$, where $T$ is the resource burning rate of the adversary, and $J$ is the join rate of good IDs. Our experiments illustrate that \ergo significantly decreases algorithmic resource costs when compared to other Sybil defenses. We also show empirically that costs can be further reduced by combining \ergo with \SybilFuse,  thus illustrating the benefits of leveraging classification algorithms.  Finally,  we prove that the resource burning rate of $O(\sqrt{TJ} + J)$ is asymptotically optimal for a large class of algorithms.

There are two future research directions that we feel are important, which we discuss next.

\subsection{Incentives}
Under \ergo, recall that good IDs must periodically solve a $1$-hard puzzle to avoid being purged from the system. In practice,  as discussed in Section~\ref{sec:incentives},  the system may benefit from providing incentives for honest users to obey this protocol. Thus, an interesting direction for future work is devising a mechanism for incentivizing good IDs to solve these puzzles. For example, as is the case for cryptocurrencies like Bitcoin, we might make use of rewards.  For example, during the purge, competition for a reward could be used to ensure that IDs actually commit sufficient resources to remain in the system.  If challenges are proof-or-work based, the ID that finds the smallest solution during this period could receive units of cryptocurrency, such as that received for mining a block.  In this way, all good IDs would have positive incentives to commit resources. Hence, the difficulty of a $1$-hard puzzle could be tuned, based on measured computational effort, to automatically adjust to new, faster hardware or heterogeneous hardware distribution.   Formally analyzing the game-theoretic properties of such a scheme, using techniques similar to those used in~\cite{abraham2016solidus,margolin2007informant,liu2019survey-pub},  is left for future work.  

\subsection{Distributed Hash Tables}
Can we apply the results in this paper to build and maintain a Sybil-resistant distributed hash table (DHT)~\cite{urdaneta:survey}?  To the best of our knowledge, there is no such result that ensures the good IDs pay a cost that is slowly growing function of both the good churn rate and the cost payed by an attacker. Third, can a similar approach be used to mitigate distributed denial-of-service (DDoS) attacks at the application layer? Here, server resources can be exhausted by bad clients whose spurious jobs cannot be {\it a priori} distinguished from legitimate jobs. It seems plausible that a resource-burning approach similar to \ergo might offer a defense here too. Fourth, while our lower bound applies to a large class of algorithms, it would be interesting to establish a more general result.

\subsection{Tolerating $\kappa > 1/18$ }\label{s:larger-kappa}

In Section~\ref{sec:parameter-constant-discussion}, we discussed the interaction between various parameters and constants, and we highlighted that there is some flexibility in selecting their values such that our formal arguments hold with only cosmetic changes. Here, we elaborate on this aspect in order to highlight the challenges involved with  tolerating a larger value of $\kappa$, along with sketching ideas for how this might be accomplished.

To begin, first consider what happens if we wish to alter \goodJest such that an interval ends whenever:
$$\vert S(t') \triangle S(t) \vert \geq \frac{1}{2} \vert S(t') \vert.$$
That is, we wish to change the constant $5/12$ to be $1/2$. It suffices to change the definition of an {\it epoch} so that their boundaries occur when the symmetric difference between the sets of good IDs at the start and the end of the epoch exceeds $3/5$ (rather than $1/2$) times the number of good IDs at the start.  This can be seen by the proof for Lemma~\ref{lem:interval-epochs}, since now the key inequality in the argument is $\vert S(t_2)  \triangle S(t_0) \vert \geq  (3/5)(5/6) = 1/2$, which ends an epoch under this new definition, and the proof follows as before with this minor change. 

There are two reasons this concrete example is useful. First, it illustrates how the values we use in our arguments have some flexibility. Second, it is important to understanding why the structure of our current analysis cannot accommodate a significantly larger $\kappa$, which we now discuss.

What prevents \ergo from tolerating a larger $\kappa$ under our current analysis? While we have flexibility to choose values for our parameters and constants, this flexibility has limits.  Specifically, a key obstacle occurs in achieving Inequality~\ref{eq:uppersymm} within the proof of Lemma~\ref{lem:a-lowerbound}. Namely, we require a sufficiently large constant---currently this is $5/12$---in order to achieve Inequality~\ref{eq:uppersymm}. Consider what happens if we wish to tolerate $\kappa\leq 1/9$ (rather than $\kappa\leq 1/18$), and thus we will now have a strict bound of $3\kappa = 1/3$ on the fraction of bad IDs in the system at any time (recall Lemma~\ref{lem:badbounded}). Note that the chain of inequalities that leads to Inequality~\ref{eq:uppersymm} no longer holds, since the constant $5/12$ is too small. 

But can we not increase this constant as we did in the concrete example above? Unfortunately, if we wish to tolerate a strict bound of  $1/3$ on the fraction of bad IDs, the cosmetic change performed in that prior example is insufficient. This can be observed by simply following through the proofs for Lemma~\ref{lem:interval-epochs} and the chain of inequalities that leads to Inequality~\ref{eq:uppersymm}.  To be explicit, recall that an interval ends when:
$$\vert S(t_2)  \triangle S(t_0) \vert \geq  (E)(2/3) = I$$
\noindent{}where $E$ is the constant used for delineating epochs, and $I$ is the constant used for delineating intervals; recall that the latter is currently $5/12$ and is the value we wish to make larger. However, note that $I\leq 2/3$ no matter what value we choose for $E$. Therefore, for the argument of Lemma~\ref{lem:interval-epochs} to remain correct, we must end intervals whenever:
$$\vert S(t') \triangle S(t) \vert \geq \frac{2}{3} \vert S(t') \vert.$$
What are the implications for the Inequality~\ref{eq:uppersymm} used in the proof of  Lemma~\ref{lem:a-lowerbound}? The portion of the argument:
\begin{align}
		\vert G(t') \triangle G_t\vert \nonumber
		&\geq ...\\
  &\geq \left(\frac{5}{12}\right)\vert S(t')\vert  - \frac{\vert S(t')\vert }{6} - \frac{10}{7}\left(\frac{\vert S(t')\vert }{6}\right)\nonumber\\
		&\geq \frac{\vert S(t')\vert }{84}\nonumber
\end{align}
is critical. However, with our hypothetical modification, we have:
\begin{align}
		\vert G(t') \triangle G_t\vert \nonumber
		&\geq ...\\
  &\geq \left(\frac{2}{3}\right)\vert S(t')\vert  - \frac{\vert S(t')\vert }{3} - d\left(\frac{\vert S(t')\vert }{3}\right)\nonumber\\
		&\ngtr 0\nonumber
\end{align}
\noindent where $d$ is a constant now exceeding $10/7$, which can be seen from following through the proof of Lemma~\ref{lem:boundsize} using $2/3$ instead of $5/12$. Thus, we no longer achieve $\vert G(t') \triangle G_t\vert  = \Omega(S(t'))$, but rather end up with a negative value.

Tolerating larger values of $\kappa$ is an area of future work. It may be possible to sufficiently increase $I$ by allowing  for an interval to overlap {\it more} than two epochs; this would require significant changes to Lemma~\ref{lem:interval-epochs} and  other arguments in Section~\ref{s:analGoodJest}. Note that this such changes would likely also impact the accuracy of \goodJest, since if an interval overlaps more epochs, the exponents for $\sma$ and $\smb$ in Theorem~\ref{t:JoinEst}  will increase as a result.

\medskip\medskip

\noindent{\bf Acknowledgements.} This work is supported by the National Science Foundation grants CNS 1816076, CNS 1816250, CNS 2210299, CNS 2210300, and CCF 2144410.  We are grateful to the anonymous reviewers; their feedback greatly improved our manuscript.

%%%%%%%%%%%%%%% END OF OUR TEXT %%%%%%%%%%%%%%%

%\bibliographystyle{elsarticle-num} 
%\bibliography{jcss}

\begin{thebibliography}{100}
\expandafter\ifx\csname url\endcsname\relax
  \def\url#1{\texttt{#1}}\fi
\expandafter\ifx\csname urlprefix\endcsname\relax\def\urlprefix{URL }\fi
\expandafter\ifx\csname href\endcsname\relax
  \def\href#1#2{#2} \def\path#1{#1}\fi

\bibitem{Gupta_Saia_Young_2021}
D.~Gupta, J.~Saia, M.~Young, Bankrupting {Sybil} despite churn, in: Proceedings
  of the 41st IEEE International Conference on Distributed Computing Systems
  (ICDCS), 2021, pp. 425--437.

\bibitem{Gupta_Saia_Young_2019}
D.~Gupta, J.~Saia, M.~Young, Peace through superior puzzling: An asymmetric
  {S}ybil defense, in: Proceedings of the $33rd$ IEEE International Parallel
  and Distributed Processing Symposium (IPDPS), 2019, pp. 1083--1094.

\bibitem{douceur02sybil}
J.~Douceur, The {Sybil} attack, in: Proceedings of the Second International
  Peer-to-Peer Symposium (IPTPS), 2002, pp. 251--260.

\bibitem{zhang2019double}
S.~Zhang, J.-H. Lee, Double-spending with a {Sybil} attack in the bitcoin
  decentralized network, IEEE transactions on Industrial Informatics 15~(10)
  (2019) 5715--5722.

\bibitem{heilman2015eclipse}
E.~Heilman, A.~Kendler, A.~Zohar, S.~Goldberg, Eclipse attacks on {Bitcoin}’s
  peer-to-peer network, in: 24th $\{$USENIX$\}$ Security Symposium
  ($\{$USENIX$\}$ Security 15), 2015, pp. 129--144.

\bibitem{bitcoin-sybil}
T.~Neudecker, {Bitcoin} cash ({BCH}) {Sybil} nodes on the {Bitcoin}
  peer-to-peer network,
  \url{http://dsn.tm.kit.edu/publications/files/332/bch_sybil.pdf} (2017).

\bibitem{mohaisen:sybil}
A.~Mohaisen, J.~Kim, The {Sybil} attacks and defenses: A survey, Smart
  Computing Review 3~(6) (2013) 480--489.

\bibitem{gupta:resource-burning}
D.~Gupta, J.~Saia, M.~Young, \href{https://arxiv.org/abs/2006.04865}{Invited
  paper: Resource burning for permissionless systems}, in: $27^{th}$
  International Conference on Structural Information and Communication
  Complexity (SIROCCO), 2020, pp. 19--44.
\newline\urlprefix\url{https://arxiv.org/abs/2006.04865}

\bibitem{dwork:pricing}
C.~Dwork, M.~Naor, Pricing via processing or combatting junk mail, in:
  Proceedings of the $12^{th}$ Annual International Cryptology Conference on
  Advances in Cryptology, 1993, pp. 139--147.

\bibitem{economistBC}
T.~Economist, {Why {Bitcoin} uses so much energy},
  \url{https://www.economist.com/the-economist-explains/2018/07/09/why-bitcoin-uses-so-much-energy}
  (2018).

\bibitem{arstechnica}
A.~Technica, Mining {Bitcoins} takes power, but is it an environmental
  disaster?,
  \url{http://arstechnica.com/business/2013/04/mining-bitcoins-takes-power-but-is-it-an-environmental-disaster/}
  (2013).

\bibitem{Stutzbach:2006:UCP:1177080.1177105}
D.~Stutzbach, R.~Rejaie,
  \href{http://doi.acm.org/10.1145/1177080.1177105}{Understanding churn in
  peer-to-peer networks}, in: Proceedings of the 6th ACM SIGCOMM Conference on
  Internet Measurement (IMC), ACM, New York, NY, USA, 2006, pp. 189--202.
\newblock \href {https://doi.org/10.1145/1177080.1177105}
  {\path{doi:10.1145/1177080.1177105}}.
\newline\urlprefix\url{http://doi.acm.org/10.1145/1177080.1177105}

\bibitem{falkner:profiling}
J.~Falkner, M.~Piatek, J.~P. John, A.~Krishnamurthy, T.~Anderson, Profiling a
  million user \textsc{DHT}, in: Proceedings of the $7^{th}$ ACM SIGCOMM
  Conference on Internet Measurement, 2007, pp. 129--134.

\bibitem{6688697}
L.~Wang, J.~Kangasharju, Measuring large-scale distributed systems: Case of
  {BitTorrent Mainline DHT}, in: IEEE 13th International Conference on
  Peer-to-Peer Computing (P2P), 2013, pp. 1--10.

\bibitem{synchrony:malkhi}
D.~Malkhi, {The BFT Lens: Hot-Stuff and Casper},
  \url{https://dahliamalkhi.github.io/posts/2018/03/bft-lens-casper/} (2018).

\bibitem{392384}
D.~L. {Mills}, {Improved Algorithms for Synchronizing Computer Network Clocks},
  IEEE/ACM Transactions on Networking 3~(3) (1995).

\bibitem{nakamoto:bitcoin}
S.~Nakamoto, {Bitcoin}: A peer-to-peer electronic cash system,
  \url{http://bitcoin.org/bitcoin.pdf} (2008).

\bibitem{andrychowicz2015pow}
M.~Andrychowicz, S.~Dziembowski, {PoW}-based distributed cryptography with no
  trusted setup, in: Proceedings of the Annual Cryptology Conference, Springer,
  2015, pp. 379--399.

\bibitem{GiladHMVZ17}
Y.~Gilad, R.~Hemo, S.~Micali, G.~Vlachos, N.~Zeldovich, Algorand: Scaling
  {Byzantine} agreements for cryptocurrencies, in: Proceedings of the 26th
  Symposium on Operating Systems Principles (SOSP), 2017, pp. 51--68.

\bibitem{urdaneta:survey}
G.~Urdaneta, G.~Pierre, M.~van Steen, A survey of {DHT} security techniques,
  ACM Computing Surveys 43~(2) (2011) 1--53.

\bibitem{stoica_etal:chord}
I.~Stoica, R.~Morris, D.~Karger, M.~F. Kaashoek, H.~Balakrishnan, Chord: A
  scalable peer-to-peer lookup service for internet applications, in:
  Proceedings of the Conference on Applications, Technologies, Architectures,
  and Protocols for Computer Communications (SIGCOMM), 2001, pp. 149--160.

\bibitem{guerraoui:highly}
R.~Guerraoui, F.~Huc, A.-M. Kermarrec, Highly dynamic distributed computing
  with {Byzantine} failures, in: Proceedings of the 2013 ACM Symposium on
  Principles of Distributed Computing (PODC), 2013, pp. 176--183.

\bibitem{JaiyeolaPSYZ17}
M.~O. Jaiyeola, K.~Patron, J.~Saia, M.~Young, Q.~M. Zhou, Tiny groups tackle
  {Byzantine} adversaries, in: Proceedings of the {IEEE} International Parallel
  and Distributed Processing Symposium, {IPDPS}, 2018, pp. 1030--1039.

\bibitem{fiat:making}
A.~Fiat, J.~Saia, M.~Young, {Making Chord robust to {Byzantine} attacks}, in:
  Proceedings of the $13^{th}$ European Symposium on Algorithms (ESA), 2005,
  pp. 803--814.

\bibitem{awerbuch_scheideler:group}
B.~Awerbuch, C.~Scheideler, Group spreading: A protocol for provably secure
  distributed name service, in: Proceedings of the $31^{st}$ International
  Colloquium on Automata, Languages, and Programming (ICALP), 2004, pp.
  183--195.

\bibitem{castro:secure}
M.~Castro, P.~Druschel, A.~Ganesh, A.~Rowstron, D.~S. Wallach, Secure routing
  for structured peer-to-peer overlay networks, in: Proceedings of the $5^{th}$
  Usenix Symposium on Operating Systems Design and Implementation (OSDI), 2002,
  pp. 299--314.

\bibitem{young:towards}
M.~Young, A.~Kate, I.~Goldberg, M.~Karsten, Towards practical communication in
  {Byzantine}-resistant {DHTs}, IEEE/ACM Transactions on Networking 21~(1)
  (2013) 190--203.

\bibitem{awerbuch:towards2}
B.~Awerbuch, C.~Scheideler, Towards scalable and robust overlay networks, in:
  Proceedings of the $6^{th}$ International Workshop on Peer-to-Peer Systems
  (IPTPS), 2007.

\bibitem{awerbuch:random}
B.~Awerbuch, C.~Scheideler, Robust random number generation for peer-to-peer
  systems, in: Proceedings of the $10th$ International Conference On Principles
  of Distributed Systems (OPODIS), 2006, pp. 275--289.

\bibitem{awerbuch:towards}
B.~Awerbuch, C.~Scheideler, Towards a scalable and robust {DHT}, in:
  Proceedings of the 18th ACM Symposium on Parallelism in Algorithms and
  Architectures (SPAA), 2006, pp. 318--327.

\bibitem{scheideler:how}
C.~Scheideler, How to spread adversarial nodes? {Rotate}!, in: Proceedings of
  the Thirty-Seventh Annual ACM Symposium on Theory of Computing (STOC), 2005,
  pp. 704--713.

\bibitem{wang2021ethna}
T.~Wang, C.~Zhao, Q.~Yang, S.~Zhang, S.~C. Liew, Ethna: Analyzing the
  underlying peer-to-peer network of {Ethereum} blockchain, IEEE Transactions
  on Network Science and Engineering (2021).

\bibitem{maymounkov:kademlia}
P.~Maymounkov, D.~Mazieres, Kademlia: A peer-to-peer information system based
  on the xor metric, Lecture Notes in Computer Science (2002) 53--65.

\bibitem{CromanDEGJKMSSS16}
K.~Croman, C.~Decker, I.~Eyal, A.~E. Gencer, A.~Juels, A.~E. Kosba, A.~Miller,
  P.~Saxena, E.~Shi, E.~G. Sirer, D.~Song, R.~Wattenhofer, On scaling
  decentralized blockchains {(A} position paper), in: Proceedings of the
  International Financial Cryptography and Data Security (FC), 2016, pp.
  106--125.

\bibitem{AspnesJK2005}
J.~Aspnes, C.~Jackson, A.~Krishnamurthy, Exposing computationally-challenged
  {Byzantine} impostors, Tech. rep., YALEU/DCS/TR-1332, Yale University
  {\url{http://www.cs.yale.edu/homes/aspnes/papers/tr1332.pdf}} (2005).

\bibitem{katz2014pseudonymous}
J.~Katz, A.~Miller, E.~Shi, \href{http://eprint.iacr.org/2014/857}{Pseudonymous
  secure computation from time-lock puzzles}, {IACR} Cryptol. ePrint Arch. 2014
  (2014) 857.
\newline\urlprefix\url{http://eprint.iacr.org/2014/857}

\bibitem{hou2017randomized}
R.~Hou, I.~Jahja, L.~Luu, P.~Saxena, H.~Yu, Randomized view reconciliation in
  permissionless distributed systems, in: Proceedings of the IEEE Conference on
  Computer Communications (INFOCOM), 2018, pp. 2528--2536.

\bibitem{aggarwal2019bootstrapping}
A.~Aggarwal, M.~Movahedi, J.~Saia, M.~Zamani, Bootstrapping public blockchains
  without a trusted setup, in: Proceedings of the 2019 ACM Symposium on
  Principles of Distributed Computing (PODC), ACM, 2019, pp. 366--368.

\bibitem{lynch1996distributed}
N.~A. Lynch, Distributed algorithms, Elsevier, San Francisco, CA, 1996.

\bibitem{benor_goldwasser_wigderson:completeness}
M.~Ben-Or, S.~Goldwasser, A.~Wigderson, Completeness theorems for
  non-cryptographic fault-tolerant distributed computing, in: Proceedings of
  the Twentieth ACM Symposium on the Theory of Computing (STOC), 1988, pp.
  1--10.

\bibitem{gao2018sybilfuse}
P.~Gao, B.~Wang, N.~Z. Gong, S.~R. Kulkarni, K.~Thomas, P.~Mittal, Sybilfuse:
  Combining local attributes with global structure to perform robust {Sybil}
  detection, in: 2018 IEEE Conference on Communications and Network Security
  (CNS), 2018, pp. 1--9,
  \url{https://www.princeton.edu/~pmittal/publications/sybilfuse-cns18.pdf}.

\bibitem{augustine2012towards}
J.~Augustine, G.~Pandurangan, P.~Robinson, E.~Upfal, Towards robust and
  efficient computation in dynamic peer-to-peer networks, in: Proceedings of
  the Twenty-Third Annual ACM-SIAM Symposium on Discrete Algorithms, SIAM,
  2012, pp. 551--569.

\bibitem{augustine2015distributed}
J.~Augustine, G.~Pandurangan, P.~Robinson, E.~Upfal, Distributed agreement in
  dynamic peer-to-peer networks, Journal of Computer and System Sciences 81~(7)
  (2015) 1088--1109.

\bibitem{augustine2015fast}
J.~Augustine, G.~Pandurangan, P.~Robinson, Fast byzantine leader election in
  dynamic networks, in: International Symposium on Distributed Computing,
  Springer, 2015, pp. 276--291.

\bibitem{jacobs2013stochastic}
T.~Jacobs, G.~Pandurangan, Stochastic analysis of a churn-tolerant structured
  peer-to-peer scheme, Peer-to-peer Networking and Applications 6~(1) (2013)
  1--14.

\bibitem{augustine2015enabling}
J.~Augustine, G.~Pandurangan, P.~Robinson, S.~Roche, E.~Upfal, Enabling robust
  and efficient distributed computation in dynamic peer-to-peer networks, in:
  2015 IEEE 56th Annual Symposium on Foundations of Computer Science, IEEE,
  2015, pp. 350--369.

\bibitem{augustine2013storage}
J.~Augustine, A.~R. Molla, E.~Morsy, G.~Pandurangan, P.~Robinson, E.~Upfal,
  Storage and search in dynamic peer-to-peer networks, in: Proceedings of the
  twenty-fifth annual ACM symposium on Parallelism in algorithms and
  architectures, 2013, pp. 53--62.

\bibitem{augustine2016distributed}
J.~Augustine, G.~Pandurangan, P.~Robinson, Distributed algorithmic foundations
  of dynamic networks, ACM SIGACT News 47~(1) (2016) 69--98.

\bibitem{ko2008using}
S.~Y. Ko, I.~Hoque, I.~Gupta, Using tractable and realistic churn models to
  analyze quiescence behavior of distributed protocols, in: 2008 Symposium on
  Reliable Distributed Systems, IEEE, 2008, pp. 259--268.

\bibitem{aguilera2004pleasant}
M.~K. Aguilera, A pleasant stroll through the land of infinitely many
  creatures, ACM Sigact News 35~(2) (2004) 36--59.

\bibitem{liben2002analysis}
D.~Liben-Nowell, H.~Balakrishnan, D.~Karger, Analysis of the evolution of
  peer-to-peer systems, in: Proceedings of the twenty-first annual symposium on
  Principles of distributed computing, 2002, pp. 233--242.

\bibitem{gummadi2003measurement}
K.~P. Gummadi, R.~J. Dunn, S.~Saroiu, S.~D. Gribble, H.~M. Levy, J.~Zahorjan,
  Measurement, modeling, and analysis of a peer-to-peer file-sharing workload,
  in: Proceedings of the nineteenth ACM symposium on Operating systems
  principles, 2003, pp. 314--329.

\bibitem{imtiaz2019churn}
M.~A. Imtiaz, D.~Starobinski, A.~Trachtenberg, N.~Younis, Churn in the bitcoin
  network: Characterization and impact, in: 2019 IEEE International Conference
  on Blockchain and Cryptocurrency (ICBC), IEEE, 2019, pp. 431--439.

\bibitem{john:soft}
R.~John, J.~P. Cherian, J.~J. Kizhakkethottam, A survey of techniques to
  prevent {Sybil} attacks, in: Proceedings of the International Conference on
  Soft-Computing and Networks Security (ICSNS), 2015, pp. 1--6.

\bibitem{newsome:sybil}
J.~Newsome, E.~Shi, D.~Song, A.~Perrig, The {Sybil} attack in sensor networks:
  Analysis \& defenses, in: Proceedings of the 3rd International Symposium on
  Information Processing in Sensor Networks (IPSN), 2004, pp. 259--268.

\bibitem{6503215}
L.~Wang, J.~Kangasharju, Real-world {Sybil} attacks in {BitTorrent Mainline
  DHT}, in: Proceedings of the IEEE Global Communications Conference
  (GLOBECOM), 2012, pp. 826--832.

\bibitem{Yang:2011:USN:2068816.2068841}
Z.~Yang, C.~Wilson, X.~Wang, T.~Gao, B.~Y. Zhao, Y.~Dai, {Uncovering social
  network Sybils in the wild}, in: Proceedings of the 2011 ACM SIGCOMM
  Conference on Internet Measurement Conference (IMC), 2011, pp. 259--268.

\bibitem{yu:sybilguard}
H.~Yu, M.~Kaminsky, P.~B. Gibbons, A.~Flaxman, Sybilguard: Defending against
  {Sybil} attacks via social networks, in: Proceedings of the 2006 Conference
  on Applications, Technologies, Architectures, and Protocols for Computer
  Communications (SIGCOMM), 2006, pp. 267--278.

\bibitem{mohaisen:improving}
A.~Mohaisen, S.~Hollenbeck, Improving social network-based {Sybil} defenses by
  rewiring and augmenting social graphs, in: Proceedings of the $14^{th}$
  International Workshop on Information Security Applications (WISA), 2014, pp.
  65--80.

\bibitem{wei:sybildefender}
W.~Wei, F.~Xu, C.~C. Tan, Q.~Li, {SybilDefender}: A defense mechanism for
  {Sybil} attacks in large social networks, IEEE Transactions on Parallel \&
  Distributed Systems 24~(12) (2013) 2492--2502.

\bibitem{misra2016sybilexposer}
S.~Misra, A.~S.~M. Tayeen, W.~Xu, Sybilexposer: An effective scheme to detect
  {Sybil} communities in online social networks, in: 2016 IEEE International
  Conference on Communications (ICC), IEEE, 2016, pp. 1--6.

\bibitem{sherr:veracity}
M.~Sherr, M.~Blaze, B.~T. Loo, Veracity: Practical secure network coordinates
  via vote-based agreements, in: Proceedings of the USENIX Annual Technical
  Conference, 2009, pp. 13 --13.

\bibitem{liu:mason}
Y.~Liu, D.~R. Bild, R.~P. Dick, Z.~M. Mao, D.~S. Wallach, The {Mason} test: A
  defense against {Sybil} attacks in wireless networks without trusted
  authorities, IEEE Transactions on Mobile Computing 14~(11) (2015) 2376--2391.

\bibitem{Gil-RSS-15}
S.~Gil, S.~Kumar, M.~Mazumder, D.~Katabi, D.~Rus, Guaranteeing spoof-resilient
  multi-robot networks, in: Proceedings of Robotics: Science and Systems, Rome,
  Italy, 2015, pp. 1383--1400.

\bibitem{danezis:sybil}
G.~Danezis, C.~Lesniewski-laas, M.~F. Kaashoek, R.~Anderson, Sybil-resistant
  \textsc{DHT} routing, in: Proceedings of the $10^{th}$ European Symposium On
  Research In Computer Security (ESORICS), 2005, pp. 305--318.

\bibitem{scheideler:shell}
C.~Scheideler, S.~Schmid, A distributed and oblivious heap, in: Proceedings of
  the 36th Internatilonal Collogquium on Automata, Languages and Programming:
  Part II, ICALP '09, 2009, pp. 571--582.

\bibitem{li:sybilcontrol}
F.~Li, P.~Mittal, M.~Caesar, N.~Borisov, {SybilControl}: Practical {Sybil}
  defense with computational puzzles, in: Proceedings of the Seventh ACM
  Workshop on Scalable Trusted Computing, 2012, pp. 67--78.

\bibitem{moran2019simple}
T.~Moran, I.~Orlov, Simple proofs of space-time and rational proofs of storage,
  in: Annual International Cryptology Conference, Springer, 2019, pp. 381--409.

\bibitem{shoker2017sustainable}
A.~Shoker, Sustainable blockchain through proof of exercise, in: 2017 IEEE 16th
  International Symposium on Network Computing and Applications (NCA), IEEE,
  2017, pp. 1--9.

\bibitem{ball2018proofs}
M.~Ball, A.~Rosen, M.~Sabin, P.~N. Vasudevan, Proofs of work from worst-case
  assumptions, in: Proceedings of the Annual International Cryptology
  Conference (CRYPTO), Springer, 2018, pp. 789--819.

\bibitem{von2003captcha}
L.~Von~Ahn, M.~Blum, N.~J. Hopper, J.~Langford, {CAPTCHA}: Using hard {AI}
  problems for security, in: Proceedings of the International Conference on the
  Theory and Applications of Cryptographic Techniques, Springer, 2003, pp.
  294--311.

\bibitem{moradi2015captcha}
M.~Moradi, M.~Keyvanpour, {CAPTCHA and its alternatives: A review}, Security
  and Communication Networks 8~(12) (2015) 2135--2156.

\bibitem{baird2005scattertype}
H.~S. Baird, M.~A. Moll, S.-Y. Wang, Scattertype: A legible but hard-to-segment
  {CAPTCHA}, in: Proceedings of the Eighth International Conference on Document
  Analysis and Recognition (ICDAR), 2005, pp. 935--939.

\bibitem{von2008recaptcha}
L.~Von~Ahn, B.~Maurer, C.~McMillen, D.~Abraham, M.~Blum, Recaptcha: Human-based
  character recognition via web security measures, Science 321~(5895) (2008)
  1465--1468.

\bibitem{monica:radio}
D.~M\'{o}nica, L.~Leitao, L.~Rodrigues, C.~Ribeiro, On the use of radio
  resource tests in wireless ad-hoc networks, in: Proceedings of the $3rd$
  Workshop on Recent Advances on Intrusion-Tolerant Systems, 2009, pp. 21--26.

\bibitem{gilbert:sybilcast}
S.~Gilbert, C.~Zheng, Sybilcast: Broadcast on the open airwaves, in:
  Proceedings of the $25^{th}$ Annual ACM Symposium on Parallelism in
  Algorithms and Architectures (SPAA), 2013, pp. 130--139.

\bibitem{gilbert:who}
S.~Gilbert, C.~Newport, C.~Zheng, Who are you? {Secure} identities in ad hoc
  networks, in: Proceedings of the $28^{th}$ International Symposium on
  Distributed Computing (DISC), 2014, pp. 227--242.

\bibitem{Kiayias2017}
A.~Kiayias, A.~Russell, B.~David, R.~Oliynykov, Ouroboros: A provably secure
  proof-of-stake blockchain protocol, in: Proceedings of the $37th$ Annual
  International Cryptology Conference, Vol. 10401 of Lecture Notes in Computer
  Science, Springer, 2017, pp. 357--388.

\bibitem{ethereum-pos}
O.~Kharif, Bye-bye, miners! {How} {Ethereum’s} big change will work,
  \url{https://www.bloombergquint.com/quicktakes/bye-bye-miners-how-ethereum-s-big-change-will-work-quicktake}
  (2021).

\bibitem{posdm}
CoinDesk, Vulnerable? {Ethereum's Casper} tech takes criticism at {Curacao}
  event,
  \url{https://www.coindesk.com/fundamentally-vulnerable-ethereums-casper-tech-takes-criticism-curacao}
  (2018).

\bibitem{gupta2017proof}
D.~Gupta, J.~Saia, M.~Young, Proof of work without all the work, arXiv preprint
  arXiv:1708.01285 (2017).

\bibitem{gilbert:making}
S.~Gilbert, M.~Young, {Making Evildoers Pay: Resource-Competitive Broadcast in
  Sensor Networks}, in: Proceedings of the $31^{th}$ Symposium on Principles of
  Distributed Computing (PODC), 2012, pp. 145--154.

\bibitem{gilbert:near}
S.~Gilbert, V.~King, S.~Pettie, E.~Porat, J.~Saia, M.~Young, {(Near)} optimal
  resource-competitive broadcast with jamming, in: Proceedings of the $26^{th}$
  ACM Symposium on Parallelism in Algorithms and Architectures (SPAA), 2014,
  pp. 257--266.

\bibitem{king:conflict}
V.~King, J.~Saia, M.~Young, Conflict on a communication channel, in:
  Proceedings of the $30^{th}$ Symposium on Principles of Distributed Computing
  (PODC), 2011, pp. 277--286.

\bibitem{bender:how}
M.~A. Bender, J.~T. Fineman, S.~Gilbert, M.~Young, How to scale exponential
  backoff: Constant throughput, polylog access attempts, and robustness, in:
  Proceedings of the $27^{th}$ Annual ACM-SIAM Symposium on Discrete Algorithms
  (SODA), 2016, pp. 636--654.

\bibitem{ICALP15}
V.~Dani, M.~Movahedi, J.~Saia, M.~Young, Interactive communication with unknown
  noise rate, in: Proceedings of the Colloquium on Automata, Languages, and
  Programming (ICALP), 2015, pp. 464--486.

\bibitem{daniICJournal17}
V.~Dani, T.~Hayes, M.~Movahedi, J.~Saia, M.~Young, Interactive communication
  with unknown noise rate, Information and Computation 261 (2017) 464--486.

\bibitem{aggarwal2016secure}
A.~Aggarwal, V.~Dani, T.~Hayes, J.~Saia, Secure one-way interactive
  communication, in: Proceedings of the 15th International Conference on
  Distributed Computing and Networking (ICDCN), 2017.

\bibitem{gilbert:resource}
S.~Gilbert, V.~King, J.~Saia, M.~Young, Resource-competitive analysis: A new
  perspective on attack-resistant distributed computing, in: Proceedings of the
  $8^{th}$ ACM International Workshop on Foundations of Mobile Computing, 2012,
  pp. 1--6.

\bibitem{zamani2017torbricks}
M.~Zamani, J.~Saia, J.~Crandall, Torbricks: Blocking-resistant tor bridge
  distribution, in: International Symposium on Stabilization, Safety, and
  Security of Distributed Systems (SSS), Springer, 2017, pp. 426--440.

\bibitem{Bender:2015:RA:2818936.2818949}
M.~A. Bender, J.~T. Fineman, M.~Movahedi, J.~Saia, V.~Dani, S.~Gilbert,
  S.~Pettie, M.~Young, Resource-competitive algorithms, SIGACT News 46~(3)
  (2015) 57--71.

\bibitem{dubhashi:concentration}
D.~Dubhashi, A.~Panconesi, {Concentration of Measure for the Analysis of
  Randomized Algorithms}, 1st Edition, Cambridge University Press, 2009.

\bibitem{horn2012matrix}
R.~A. Horn, C.~R. Johnson, Matrix analysis, Cambridge University Press,
  Cambridge, UK, 2012.

\bibitem{diksha-code}
D.~Gupta, Source code link,
  \url{https://www.dropbox.com/sh/ao19eqnresc530k/AAAUF5oPqBJo5ql_CT7T9WbNa?dl=0},
  last updated on August 25, 2022 (2022).

\bibitem{7140490}
T.~Neudecker, P.~Andelfinger, H.~Hartenstein, A simulation model for analysis
  of attacks on the {Bitcoin} peer-to-peer network, in: Proceedings of the
  IFIP/IEEE International Symposium on Integrated Network Management (IM),
  2015, pp. 1327--1332.

\bibitem{kim2017measuring}
S.~K. Kim, Z.~Ma, S.~Murali, J.~Mason, A.~Miller, M.~Bailey, Measuring
  {Ethereum} network peers, in: Proceedings of the Internet Measurement
  Conference, 2018, pp. 91--104.

\bibitem{rowaihy2007limiting}
H.~Rowaihy, W.~Enck, P.~McDaniel, T.~La~Porta, Limiting {Sybil} attacks in
  structured {P2P} networks, in: Proceedings of the IEEE Intl. Conference on
  Computer Communications (INFOCOM), 2007, pp. 2596--2600.

\bibitem{pow-without}
D.~Gupta, J.~Saia, M.~Young, Proof of work without all the work, in:
  Proceedings of the $19^{th}$ International Conference on Distributed
  Computing and Networking (ICDCN), 2018, pp. 1--10.

\bibitem{rowaihy2005limiting}
H.~Rowaihy, W.~Enck, P.~Mcdaniel, T.~La~Porta, Limiting {Sybil} attacks in
  structured peer-to-peer networks, in: IEEE {INFOCOM} Mini-Symposium, 2005.

\bibitem{neudecker-atc16}
T.~Neudecker, P.~Andelfinger, H.~Hartenstein, Timing analysis for inferring the
  topology of the {Bitcoin} peer-to-peer network, in: Proceedings of the 13th
  IEEE International Conference on Advanced and Trusted Computing (ATC), 2016,
  pp. 358--367.

\bibitem{saroiu2001measurement}
S.~Saroiu, P.~K. Gummadi, S.~D. Gribble, Measurement study of peer-to-peer file
  sharing systems, in: Multimedia computing and networking 2002, Vol. 4673,
  SPIE, 2001, pp. 156--170.

\bibitem{kiffer2021under}
L.~Kiffer, A.~Salman, D.~Levin, A.~Mislove, C.~Nita-Rotaru, Under the hood of
  the ethereum gossip protocol, in: International Conference on Financial
  Cryptography and Data Security, Springer, 2021, pp. 437--456.

\bibitem{abraham2019sync}
I.~Abraham, D.~Malkhi, K.~Nayak, L.~Ren, M.~Yin, Sync {HotStuff}: Simple and
  practical synchronous state machine replication, IACR Cryptology ePrint
  Archive (2019).

\bibitem{rabin1989verifiable}
T.~Rabin, M.~Ben-Or, Verifiable secret sharing and multiparty protocols with
  honest majority, in: Proceedings of the $21^{st}$ Annual ACM symposium on
  Theory of Computing, 1989, pp. 73--85.

\bibitem{attiya2004distributed}
H.~Attiya, J.~Welch, Distributed computing: Fundamentals, Simulations, and
  Advanced Topics, Vol.~19, John Wiley \& Sons, 2004.

\bibitem{bessani2014state}
A.~Bessani, J.~Sousa, E.~E. Alchieri, State machine replication for the masses
  with {BFT-SMART}, in: Proceedings of the 44th Annual IEEE/IFIP International
  Conference on Dependable Systems and Networks, IEEE, 2014, pp. 355--362.

\bibitem{andrychowicz2014distributed}
M.~Andrychowicz, S.~Dziembowski, D.~Malinowski, {\L}.~Mazurek, Modeling
  {Bitcoin} contracts by timed automata, in: Proceedings of the International
  Conference on Formal Modeling and Analysis of Timed Systems, Springer, 2014,
  pp. 7--22.

\bibitem{mitzenmacher2017probability}
M.~Mitzenmacher, E.~Upfal, Probability and Computing: Randomization and
  Probabilistic Techniques in Algorithms and Data Analysis, Cambridge
  University Press, 2017.

\bibitem{abraham2016solidus}
I.~Abraham, D.~Malkhi, K.~Nayak, L.~Ren, A.~Spiegelman, {Solidus: An
  Incentive-compatible Cryptocurrency Based on Permissionless Byzantine
  Consensus}, CoRR, abs/1612.02916 (2016).

\bibitem{margolin2007informant}
N.~B. Margolin, B.~N. Levine, Informant: Detecting {Sybils} using incentives,
  in: International Conference on Financial Cryptography and Data Security,
  Springer, 2007, pp. 192--207.

\bibitem{liu2019survey-pub}
Z.~Liu, N.~C. Luong, W.~Wang, D.~Niyato, P.~Wang, Y.-C. Liang, D.~I. Kim, A
  survey on blockchain: A game theoretical perspective, IEEE Access 7 (2019)
  47615--47643.

\end{thebibliography}

\end{document}